%%%%%%%%%%%%%%%%%%%%%%%%%%%%%%%%%%%%%%%%%%%%%%%%%%%%%%%%%%%%%%%%%%%%%%%%

%%% LaTeX Template for AAMAS-2022 (based on sample-sigconf.tex)
%%% Prepared by Ana L. C. Bazzan and Lucas N. Alegre, with the contribution of the AAMAS-2022 Program Chairs. Thanks to Natasha Alechina. (version 2022-07-08)

%%%%%%%%%%%%%%%%%%%%%%%%%%%%%%%%%%%%%%%%%%%%%%%%%%%%%%%%%%%%%%%%%%%%%%%%

%%% Start your document with the \documentclass command.
%%% Use the first variant below for the final paper.
%%% Use the second variant below for submission.

\documentclass[sigconf]{aamas} 
% \documentclass[sigconf,anonymous]{aamas}
% \usepackage{selectp}
% \outputonly{1-10}
% \outputonly{11-30}

%%% Load required packages here (note that many are included already).

\usepackage{balance} % for balancing columns on the final page

% Optional math commands from https://github.com/goodfeli/dlbook_notation.
%%%%% NEW MATH DEFINITIONS %%%%%

\usepackage{amsmath,amsfonts,bm}

% Mark sections of captions for referring to divisions of figures

% Highlight a newly defined term

% Figure reference, lower-case.

% Figure reference, capital. For start of sentence

% Section reference, lower-case.

% Section reference, capital.

% Reference to two sections.

% Reference to three sections.

% Reference to an equation, lower-case.
\def\eqref#1{equation~\ref{#1}}
% Reference to an equation, upper case
\def\Eqref#1{Equation~\ref{#1}}
% A raw reference to an equation---avoid using if possible

% Reference to a chapter, lower-case.

% Reference to an equation, upper case.

% Reference to a range of chapters

% Reference to an algorithm, lower-case.

% Reference to an algorithm, upper case.

% Reference to a part, lower case

% Reference to a part, upper case

\def\1{\bm{1}}

% Random variables

% rm is already a command, just don't name any random variables m

% Random vectors

% Elements of random vectors

% Random matrices

% Elements of random matrices

% Vectors

% Elements of vectors

% Matrix

% Tensor
\DeclareMathAlphabet{\mathsfit}{\encodingdefault}{\sfdefault}{m}{sl}
\SetMathAlphabet{\mathsfit}{bold}{\encodingdefault}{\sfdefault}{bx}{n}

% Graph

% Sets

% Don't use a set called E, because this would be the same as our symbol
% for expectation.

% Entries of a matrix

% entries of a tensor
% Same font as tensor, without \bm wrapper

% The true underlying data generating distribution

% The empirical distribution defined by the training set

% The model distribution

% Stochastic autoencoder distributions

 % Laplace distribution

% Wolfram Mathworld says $L^2$ is for function spaces and $\ell^2$ is for vectors
% But then they seem to use $L^2$ for vectors throughout the site, and so does
% wikipedia.

 % See usage in notation.tex. Chosen to match Daphne's book.

\DeclareMathOperator*{\argmin}{arg\,min}

\usepackage{amsthm}

\makeatletter
\newtheorem*{rep@theorem}{\rep@title}
\newcommand{\newreptheorem}[2]{%
\newenvironment{rep#1}[1]{%
 \def\rep@title{#2 \ref{##1}}%
 \begin{rep@theorem}}%
 {\end{rep@theorem}}}
\makeatother

\newtheorem{theorem}{Theorem}
\newreptheorem{theorem}{Theorem}
\newtheorem{proposition}{Proposition}[section]

\newtheorem{lemma}{Lemma}
\newreptheorem{lemma}{Lemma}

\newtheorem{definition}{Definition}
\newreptheorem{definition}{Definition}

\newcommand\myeq{\mathrel{\overset{\makebox[0pt]{\mbox{\normalfont\tiny\sffamily def}}}{=}}}

\usepackage[utf8]{inputenc} % allow utf-8 input
\usepackage[T1]{fontenc}    % use 8-bit T1 fonts
\usepackage{hyperref}       % hyperlinks
\usepackage{url}            % simple URL typesetting
\usepackage{booktabs}       % professional-quality tables
\usepackage{amsfonts}       % blackboard math symbols
\usepackage{nicefrac}       % compact symbols for 1/2, etc.
\usepackage{microtype}      % microtypography
%\usepackage{algpseudocode,algorithm,algorithmicx}

% hyperref makes hyperlinks in the resulting PDF.
% If your build breaks (sometimes temporarily if a hyperlink spans a page)
\usepackage{hyperref}

\usepackage{subcaption}
\usepackage{epigraph}

% Attempt to make hyperref and algorithmic work together better:

\usepackage{algorithm,algorithmic}
% \usepackage{algpseudocode}
% \algrenewcomment[1]{\(\triangleright\) #1}
% \algnewcommand{\LineComment}[1]{\State \(\triangleright\) #1}

% \usepackage[usenames,dvipsnames]{xcolor}
% \definecolor{green}{rgb}{0.0, 0.42, 0.24} % a nicer green
% \definecolor{orange}{rgb}{0.8, 0.33, 0.} % a nicer orange
% \definecolor{blue}{rgb}{0.16, 0.32, 0.75} % a nicer blue
% \definecolor{blue}{rgb}{0.0, 0.28, 0.67} % a darker blue

\newcounter{dbaCounter}

\newcounter{imgCounter}

\usepackage{amsmath,amsfonts,amsthm}
\usepackage{stmaryrd}  % for varoslash
\usepackage{dsfont}
\DeclareMathOperator*{\trace}{Tr}
\usepackage[normalem]{ulem}

% \newtheorem{theorem}{Theorem}
% \newreptheorem{theorem}{Theorem}
% \newtheorem{lemma}[theorem]{Lemma}
% \newreptheorem{lemma}[theorem]{Lemma}
% \newtheorem{proposition}[theorem]{Proposition}
% \theoremstyle{definition}
% \newtheorem{definition}{Definition}
% \newreptheorem{definition}{Definition}
% % \theoremstyle{example}
% \newtheorem{example}{Game}

% \def\eqref#1{Eqn~(\ref{#1})}
% \newcommand\myeq{\mathrel{\overset{\makebox[0pt]{\mbox{\normalfont\tiny\sffamily def}}}{=}}}

\usepackage{xspace}
\usepackage{mathtools}
\DeclarePairedDelimiterX{\infdivx}[2]{(}{)}{%
  #1\;\delimsize|\delimsize|\;#2%
}
\newcommand{\kld}[2]{\ensuremath{\mathcal{D}_{KL}\infdivx{#1}{#2}}\xspace}

\usepackage{cleveref}

\usepackage{enumitem}
\setitemize{noitemsep,topsep=0pt,parsep=0pt,partopsep=0pt}

\usepackage{url}

\usepackage[normalem]{ulem}

\usepackage[toc,page]{appendix}
\addtocontents{toc}{\protect\setcounter{tocdepth}{-1}}

%%%%%%%%%%%%%%%%%%%%%%%%%%%%%%%%%%%%%%%%%%%%%%%%%%%%%%%%%%%%%%%%%%%%%%%%

%%% AAMAS-2022 copyright block (do not change!)

\setcopyright{ifaamas}
\acmConference[AAMAS '22]{Proc.\@ of the 21st International Conference
on Autonomous Agents and Multiagent Systems (AAMAS 2022)}{May 9--13, 2022}
{Online}{P.~Faliszewski, V.~Mascardi, C.~Pelachaud,
M.E.~Taylor (eds.)}
\copyrightyear{2022}
\acmYear{2022}
\acmDOI{}
\acmPrice{}
\acmISBN{}

%%%%%%%%%%%%%%%%%%%%%%%%%%%%%%%%%%%%%%%%%%%%%%%%%%%%%%%%%%%%%%%%%%%%%%%%

%%% Use this command to specify your EasyChair submission number.
%%% In anonymous mode, it will be printed on the first page.

\acmSubmissionID{285}

%%% Use this command to specify the title of your paper.

\title[D3C]{D3C: Reducing the Price of Anarchy in Multi-Agent Learning}

%%% Provide names, affiliations, and email addresses for all authors.

% \author{Ian Gemp, Kevin R. McKee, Richard Everett, Edgar Du{\'e}{\~n}ez-Guzm{\'a}n\\Yoram Bachrach, David Balduzzi, Andrea Tacchetti}
\author{Ian Gemp}
\affiliation{
  \institution{DeepMind}
  \city{London}
  \country{United Kingdom}}
\email{imgemp@deepmind.com}

\author{Kevin R. McKee}
\affiliation{
  \institution{DeepMind}
  \city{London}
  \country{United Kingdom}}
\email{kevinrmckee@deepmind.com}

\author{Richard Everett}
\affiliation{
  \institution{DeepMind}
  \city{London}
  \country{United Kingdom}}
\email{reverett@deepmind.com}

\author{Edgar Du{\'e}{\~n}ez-Guzm{\'a}n}
\affiliation{
  \institution{DeepMind}
  \city{London}
  \country{United Kingdom}}
\email{duenez@deepmind.com}

\author{Yoram Bachrach}
\affiliation{
  \institution{DeepMind}
  \city{London}
  \country{United Kingdom}}
\email{yorambac@deepmind.com}

\author{David Balduzzi}
\affiliation{
  \institution{XTX Markets}
  \city{London}
  \country{United Kingdom}}
\email{dbalduzzi@gmail.com}

\author{Andrea Tacchetti}
\affiliation{
  \institution{DeepMind}
  \city{London}
  \country{United Kingdom}}
\email{atacchet@deepmind.com}

\begin{abstract}
In multiagent systems, the complex interaction of fixed incentives can lead agents to outcomes that are poor (\emph{inefficient}) not only for the group, but also for each individual. Price of anarchy is a technical, game-theoretic definition that quantifies the inefficiency arising in these scenarios\textemdash it compares the welfare that can be achieved through perfect coordination against that achieved by self-interested agents at a Nash equilibrium. We derive a differentiable, upper bound on a price of anarchy that agents can cheaply estimate during learning. Equipped with this estimator, agents can adjust their incentives in a way that improves the efficiency incurred at a Nash equilibrium. Agents do so by learning to mix their reward (equiv. negative loss) with that of other agents by following the gradient of our derived upper bound. We refer to this approach as D3C. In the case where agent incentives are differentiable, D3C resembles the celebrated Win-Stay, Lose-Shift strategy from behavioral game theory, thereby establishing a connection between the global goal of maximum welfare and an established agent-centric learning rule. In the non-differentiable setting, as is common in multiagent reinforcement learning, we show the upper bound can be reduced via evolutionary strategies, until a compromise is reached in a distributed fashion. We demonstrate that D3C improves outcomes for each agent and the group as a whole on several social dilemmas including a traffic network exhibiting Braess’s paradox, a prisoner’s dilemma, and several multiagent domains.
\end{abstract}

%%% The code below was generated by the tool at http://dl.acm.org/ccs.cfm.
%%% Please replace this example with code appropriate for your own paper.

%%% Use this command to specify a few keywords describing your work.
%%% Keywords should be separated by commas.

\keywords{Price of Anarchy; Nash; Reward Sharing; Win-Stay Lose-Shift; Collective Intelligence; Multiagent Reinforcement Learning}

%%%%%%%%%%%%%%%%%%%%%%%%%%%%%%%%%%%%%%%%%%%%%%%%%%%%%%%%%%%%%%%%%%%%%%%%

%%% Include any author-defined commands here.
         
\newcommand{\BibTeX}{\rm B\kern-.05em{\sc i\kern-.025em b}\kern-.08em\TeX}

%%%%%%%%%%%%%%%%%%%%%%%%%%%%%%%%%%%%%%%%%%%%%%%%%%%%%%%%%%%%%%%%%%%%%%%%

\begin{document}

%%% The following commands remove the headers in your paper. For final 
%%% papers, these will be inserted during the pagination process.

\pagestyle{fancy}
\fancyhead{}

%%% The next command prints the information defined in the preamble.

\maketitle 

%%%%%%%%%%%%%%%%%%%%%%%%%%%%%%%%%%%%%%%%%%%%%%%%%%%%%%%%%%%%%%%%%%%%%%%%

\section{Introduction}
\label{intro}
% \renewcommand{\epigraphsize}{\footnotesize}
% \renewcommand{\epigraphrule}{0.0pt}
% \renewcommand{\textflush}{flushright} \renewcommand{\sourceflush}{flushright}
% \setlength{\beforeepigraphskip}{-.5\baselineskip}
% \setlength{\afterepigraphskip}{-.5\baselineskip}

% \setlength{\epigraphwidth}{.45\textwidth}

% \let\originalepigraph\epigraph 
% \renewcommand\epigraph[2]{\originalepigraph{\textit{#1}}{{\footnotesize #2}}}

% \vspace{-10pt}
% \epigraph{\epigraphsize When a measure becomes a target, it ceases to be a good measure. --Goodhart's Law~\citep{strathern1997improving}.}

% We consider the problem of designing a multi-agent system for deployment into an environment with unknown incentives. The fact that the incentives are unknown precludes a fixed design.

% \textbf{Problem Setting / Motivation}
We consider a setting consisting of many interacting artificially intelligent agents, each with specific individual incentives. It is well known that the interactions between individual agent goals can lead to inefficiencies at the group level, for example, in environments exhibiting social dilemmas \citep{braess1968paradoxon, hardin1968tragedy, leibo2017multi}. In order to resolve these fundamental inefficiencies, agents must reach a compromise.

% centralized mechanisms require trusted central authority, big computational burden, single point of failure.
% arbitrator, mediator

Any arbitration mechanism with a central coordinator\footnote{For example, the Vickrey-Clarke-Groves (VCG) mechanism~\citep{clarke1971multipart}.} faces challenges when scaling to large populations. The coordinator's task becomes intractable as it must both query preferences from a larger population and make decisions accounting for the exponential growth of agent interactions. If agents are permitted to modify their incentives over time, the coordinator must collect all this information again, exacerbating the computational burden. In addition, a central coordinator represents a single point of failure for the system whereas successful multiagent systems identified in nature (e.g., market economies, ant colonies, etc.) are often robust to node failures~\citep{edelman2001degeneracy}. Therefore, we focus on decentralized approaches.

\textbf{Design Criteria}:
% We seek an approach to compromise in multi-agent systems that applies to the setting just described.
The celebrated Myerson-Satterthwaite theorem~\citep{arrow2012social,satterthwaite1975strategy,green1977characterization,myerson1983efficient} states that no mechanism can simultaneously achieve optimal \uline{efficiency} (welfare-maximizing behavior), \uline{budget-} \uline{balance} (no taxing agents, burning side-payments, or hallucinating rewards), appeal to \uline{rational individuals} (individuals want to opt-in to the mechanism), and be \uline{incentive compatible} (resulting behavior is a Nash equilibrium).
% Given this impossibility result, we aim to design a mechanism that approximates weaker notions of these criteria.
While this impossibility result precludes a mechanism that satisfies the above criteria perfectly, it says nothing about a mechanism that satisfies them \emph{approximately}, which is our aim here.
% In our design, we prioritize efficiency and budget-balance highest and incentive compatibility the least.
In addition, the mechanism should be decentralized, extensible to large populations, and adapt to learning agents with evolving incentives in possibly non-stationary environments.

\textbf{Design}:
We formulate compromise as agents mixing their incentives (rewards or losses) with others. In other words, an agent may become incentivized to minimize a mixture of their loss and other agents' losses. We design a decentralized meta-algorithm that allows agents to search over the space of these possible mixtures.

We model the problem of \uline{efficiency} using \textit{price of anarchy}. The price of anarchy, $\rho \in [1,\infty)$, is a measure of inefficiency from algorithmic game theory with lower values indicating more efficient games~\citep{nisan2007algorithmic}. Forcing agents to minimize a group (average) loss with a single local minimum results in a ``game'' with $\rho = 1$. Note that any optimal group loss solution is also Pareto-efficient. Computing the price of anarchy of a game is intractable in general. Instead, we derive a differentiable upper bound on the price of anarchy that agents can optimize incrementally over time. Differentiability of the bound makes it easy to pair the proposed mechanism with, for example, deep learning agents that optimize via gradient descent~\citep{lerer2017maintaining,openai2019dota}. \uline{Budget balance} is achieved exactly by placing constraints on the allowable mixtures of losses. We appeal to \uline{individual rationality} in three ways. One, we initialize all agents to optimize only their own losses. Two, we include penalties for agents that deviate from this state and mix their losses with others. Three, we show empirically on several domains that opting into the proposed mechanism results in better individual outcomes. We also provide specific, albeit narrow, conditions under which agents may achieve a Nash equilibrium, i.e. the mechanism is \uline{incentive compatible}, and demonstrate the agents achieving a Nash equilibrium under our proposed mechanism in a traffic network problem. Note that budget-balance is the only property we guarantee is satisfied in absolute terms. All other properties are appealed to either indirectly via design choices (e.g., minimizing $\rho$) or post-hoc analysis.

{\bf Our Contribution:} We propose a differentiable, local estimator of game inefficiency, as measured by price of anarchy. We then present two instantiations of a single decentralized meta-algorithm, one $1$st order (gradient-feedback) and one $0$th order (bandit-feedback), that reduce this inefficiency. This meta-algorithm is general and can be applied to any group of individual agent learning algorithms. In contrast to the centralized training, decentralized execution framework popular in multiagent reinforcement learning (MARL), we demonstrate the success of our meta-algorithm in a more challenging online setting (decentralized training, decentralized execution) on a range of games and MARL domains.
%\begin{itemize}
%    \item A differentiable, local estimator of the price of anarchy,
%    \item Decentralized gradient-based and decentralized reinforcement learning algorithms to reduce the estimate.
%\end{itemize}
%

% \img{mention decentralized training, decentralized execution}

This paper focuses on how to enable a group of agents to respond to an unknown environment and minimize overall inefficiency. Agents with distinct losses may find their incentives well aligned to the given task, however, they may instead encounter a \emph{social dilemma} (\S\ref{experiments}). We also show that our approach leads to sensible behavior in scenarios where agents may need to sacrifice team reward to \emph{save an individual} (Appx.~\ref{ineq_aver}) or need to form parties and \emph{vote} on a new team direction (Appx.~\ref{election}). Ideally, one meta-algorithm would allow a multiagent system to perform sufficiently well in all these scenarios.
% Whereas many settings in multi-agent research focus on designing an agent to outperform other agent adversaries, our setting focuses on designing a MAP and recognizes the game (e.g., social dilemma) as the adversary.
The approach we propose, D3C (\S\ref{d3c}), represents a holistic effort to design such a meta-algorithm.\footnote{D3C is agnostic to any action or strategy semantics. We are interested in rich environments where high level actions with semantics such as ``cooperation'' and ``defection'' are not easily extracted or do not exist.}

\section{Dynamically Changing the Game}
\label{d3c}
% If selfish, rational play can frequently lead to poor outcomes, why play rationally? Instead, a
In our approach, agents may consider slight re-definitions of their original losses, thereby changing the definition of the original game. Critically, this is done in a way that conserves the original sum of losses (budget-balanced) so that the original group loss can still be measured.
% In this section, we describe our approach to minimizing the price of anarchy.
% Recall that the price of anarchy measures the inefficiency of the game. If the game is maximally efficient, then the players as a whole are experiencing minimal cumulative loss although that loss may be distributed unequally among the participants.
In this section, we derive our approach to minimizing the price of anarchy in several steps. First we formulate minimizing the price of anarchy via compromise as an optimization problem. Second we specifically consider compromise as the linear mixing of agent incentives. Next, we define a \emph{local} price of anarchy and derive an upper bound that agents can differentiate. Then, we decompose this bound into a set of differentiable objectives, one for each agent. Finally, we develop a gradient estimator to minimize the agent objectives in settings with bandit feedback (e.g., RL) that enables scalable decentralization.

\subsection{Notation and Transformed Losses}
Let agent $i$'s loss be $f_i(\boldsymbol{x}): \boldsymbol{x} \in \mathcal{X} \rightarrow \mathbb{R}$ where $\boldsymbol{x}$ is the joint strategy of all agents. Let $f_i^A(\boldsymbol{x})$ denote agent $i$'s transformed loss which mixes losses among agents. Let $\boldsymbol{f}(\boldsymbol{x}) = [f_1(\boldsymbol{x}), \ldots, f_n(\boldsymbol{x})]^\top$ and $\boldsymbol{f}^A(\boldsymbol{x}) = [f^A_1(\boldsymbol{x}), \ldots, f^A_n(\boldsymbol{x})]^\top$ where $n \in \mathbb{Z}$ denotes the number of agents. In general, we require $f_i^A(\boldsymbol{x})>0$ and $\sum_i f_i^A(\boldsymbol{x}) = \sum_i f_i(\boldsymbol{x})$ so that total loss is conserved\footnote{The strict definition of price of anarchy assumes positive losses. This is relaxed in~\S\ref{decentralized} to allow for losses in $\mathbb{R}$.}. Under these constraints, the agents will simply explore the space of possible non-negative group loss decompositions. We consider transformations of the form $\boldsymbol{f}^A(\boldsymbol{x}) = A^\top \boldsymbol{f}(\boldsymbol{x})$ (note the tranpose) where each agent $i$ controls row $i$ of $A$ with each row constrained to the simplex, i.e. $A_i \in \Delta^{n-1}$. For example, agent $1$'s loss is mixed according to the first \textbf{column} of $A$ which may not sum to $1$, and not the first row, which it controls:
% We may at times overload notation and use $f_i(x_i, x_{-i})$ to clarify that agent $i's$ loss depends on agent $i$'s strategy, $x_i$, as well as all other agent strategies, $x_{-i}$.
%
% With this notation, the price of anarchy, $\rho$, is defined as
% \begin{align}
%     \rho_{\mathcal{X}}(\boldsymbol{f}^A) &\myeq \frac{\max_{\mathcal{X}^*} \sum_i f^A_i(\boldsymbol{x}^*)}{\min_{\mathcal{X}} \sum_i f^A_i(\boldsymbol{x})} \ge 1. \label{poa_util_def}
% \end{align}
\begin{align}
    f_1^A(\boldsymbol{x}) &= \langle \overbrace{[0.9, 0.3, 0.5]}^{[A_{11}, A_{21}, A_{31}]}, [f_1(\boldsymbol{x}), f_2(\boldsymbol{x}), f_3(\boldsymbol{x})] \rangle. \label{mixing_example}
\end{align}
Lastly, $[a;b]=[a^\top,b^\top]^\top$ signifies row stacking of vectors, and $\mathcal{X}^*$ denotes the set of Nash equilibria.

\subsection{Price of Anarchy}
% The price of anarchy of a game~\citep{koutsoupias1999worst,nisan2007algorithmic} is a measure of the inefficiency of equilibria. It is defined as the ratio of the worst value of an equilibrium of the game to the best value of the game. In this work, value means the sum of all player losses, best means lowest, and the equilibrium concept is Nash.
\citet{nisan2007algorithmic} define price of anarchy as the worst value of an equilibrium divided by the best value in the game. Here, value means sum of player losses, best means lowest, and Nash is the chosen equilibrium concept. It is well known that Nash can be arbitrarily bad from both an individual agent and group perspective; Appx.~\ref{bad_nash} presents a simple example and demonstrates how opponent shaping~\citep{foerster2018learning,letcher2018stable} is not a balm for these issues. With the above notation, the price of anarchy is defined as
\begin{align}
    \rho_{\mathcal{X}}(\boldsymbol{f}^A) &\myeq \frac{\max_{\mathcal{X}^*} \sum_i f^A_i(\boldsymbol{x}^*)}{\min_{\mathcal{X}} \sum_i f^A_i(\boldsymbol{x})} \ge 1. \label{global_poa_util_def}
\end{align}

Note that computing the price of anarchy precisely requires solving for both the optimal welfare and the worst case Nash equilibrium. We explain how we circumvent this issue with a local approximation in \S\ref{local_poa_subsec}.
% PPAD-complete~\citep{daskalakis2009complexity,gilboa1989nash}

\subsection{Compromise as an Optimization Problem}
% In this work, we equip agents with the ability to deviate from their original losses.
Given a game, we want to minimize the price of anarchy by perturbing the original agent losses:
% More precisely:
% \begin{align}
%     \min_{\substack{\boldsymbol{f}' \\ \boldsymbol{1}^\top \boldsymbol{f}' = \boldsymbol{1}^\top \boldsymbol{f} }} \rho_{\mathcal{X}}(\boldsymbol{f}') + \nu \mathcal{D}(\boldsymbol{f}, \boldsymbol{f}'_i) \label{rho_minimization}
% \end{align}
\begin{align}
    \min_{\substack{\boldsymbol{f}'=\psi_{A}(\boldsymbol{f}) \\ \boldsymbol{1}^\top \boldsymbol{f}' = \boldsymbol{1}^\top \boldsymbol{f} }} \rho_{\mathcal{X}}(\boldsymbol{f}') + \nu \mathcal{D}(\boldsymbol{f}, \boldsymbol{f}') \label{rho_minimization}
\end{align}
where $\boldsymbol{f}$ and $\boldsymbol{f}'=\psi_{A}(\boldsymbol{f})$ denote the vectors of original and perturbed losses respectively, $\psi_{A}: \mathbb{R}^n \rightarrow \mathbb{R}^n$ is parameterized by weights $A$, $\nu$ is a regularization hyperparameter, and $\mathcal{D}$ penalizes deviation of the perturbed losses from the originals or represents constraints through an indicator function. To ensure minimizing the price of anarchy of the perturbed game improves on the original, we incorporate the constraint that the sum of perturbed losses equals the sum of original losses, $\boldsymbol{1}^\top \boldsymbol{f}' = \boldsymbol{1}^\top \boldsymbol{f}$. We refer to this approach as $\rho$-minimization.

% Taking inspiration from \emph{interdependence theory},
% \subsection{Mixing Losses}
Our agents reconstruct their losses using the losses of all other agents as a basis. For simplicity, we consider linear transformations of their loss functions, although the theoretical bounds hereafter are independent of this simplification. We also restrict ourselves to convex combinations so that agents do not learn incentives that are directly adverse to other agents. The problem can now be reformulated. Let $\psi_A(\boldsymbol{f}) = A^\top \boldsymbol{f}$ and $\mathcal{D}(\boldsymbol{f},\boldsymbol{f}') = \sum_i \kld{\boldsymbol{e}_i}{A_i}$
% \begin{align}
%     \min_{\substack{A \\ A_i \in \Delta^{n-1} }} \rho_{\mathcal{X}}(A^\top \boldsymbol{f}) + \nu \sum_i \kld{\boldsymbol{e}_i}{A_i} \label{lin_poa}
% \end{align}
where $A \in \mathbb{R}^{n \times n}$ is a right stochastic matrix (rows are non-negative and sum to $1$), $\boldsymbol{e}_i \in \mathbb{R}^n$ is a unit vector with a $1$ at index $i$, and $\mathcal{D}_{KL}$ denotes the Kullback-Liebler divergence.
% Note OpenAI Five~\citep{openai2019dota} also used a linear mixing approach where the ``team spirit" mixture parameter ($\tau$) is manually annealed throughout training from $0.3$ to $1.0$ (i.e., $A_{ii}=1-0.8\tau, A_{ij}=0.2 \tau, j \ne i$).

\subsection{A Local Price of Anarchy}
\label{local_poa_subsec}
The price of anarchy, $\rho \ge 1$, is defined over the joint strategy space of all players. Computing it is intractable for general games.
However, many agents learn via gradient-based training, and so only observe the portion of the strategy space explored by their learning trajectory. Hence, we imbue our agents with the ability to locally estimate the price of anarchy along this trajectory.
\begin{definition}
[\emph{Local} Price of Anarchy] Define
\begin{align}
    \rho_{\boldsymbol{x}}(\boldsymbol{f}^A, \Delta t) &= \frac{\max_{\mathcal{X}^*_{\tau}} \sum_i f^A_i(\boldsymbol{x}^*)}{\min_{\tau\in[0,\Delta t]} \sum_i f^A_i(\boldsymbol{x}-\tau F(\boldsymbol{x}))} \ge 1 \label{poa_util_def}
\end{align}
where $F(\boldsymbol{x}) = [\nabla_{x_1}f^A_1(\boldsymbol{x}); \ldots; \nabla_{x_n}f^A_n(\boldsymbol{x})]$, $\Delta t$ is a small step size, $f_i^A$ is assumed positive $\forall \, i$, and $\mathcal{X}›_{\tau}$ denotes the set of equilibria of the game when constrained to the line.
\end{definition}
\begin{figure}[ht]
    \centering
    \includegraphics[scale=0.2]{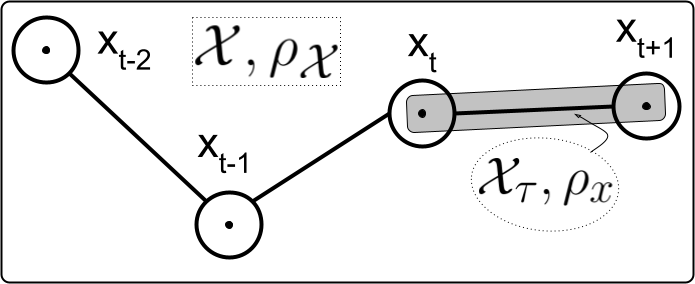}
    \caption{Agents estimate the price of anarchy assuming the joint strategy space, $\mathcal{X}$, of the game is restricted to a local linear region, $\mathcal{X_{\tau}}$, extending from the currently learned joint strategy, $x_t$, to the next, $x_{t+1}$. $\rho_{\mathcal{X}}$ and $\rho_x$ denote the global and local price of anarchy.}
    \label{fig:local_poa}
\end{figure}
% Minimizing a local price of anarchy especially makes sense for online settings as well such as non-stationary environments where the game is constantly changing.

To obtain bounds, we leverage theoretical results on \emph{smooth games}, summarized as a class of games where ``the externality imposed on any one player by the others is bounded''~\citep{roughgarden2015intrinsic}. We assume a Lipschitz property on all $f^A_i(\boldsymbol{x})$ (details in Theorem~\ref{poa_theorem}), which allows us to appeal to this class of games. The bound in \eqref{poa_theorem_bnd} is tight for some games. Proofs can be found in Appx.~\ref{bound_proofs}.

For convenience, we repeat the core definition and lemma put forth by~\citet{roughgarden2015intrinsic} here.

\begin{definition}[Smooth Game]
\label{def:smooth_game}
    A game is $(\lambda, \mu)$-smooth~\citep{roughgarden2015intrinsic} if:
    \begin{align}
        \sum_{i=1}^n f^A_i(x_i,x'_{-i}) &\le \lambda \sum_{i=1}^n f^A_i(x_i,x_{-i}) + \mu \sum_{i=1}^n f^A_i(x'_i,x'_{-i}) \label{eq:smooth}
    \end{align}
    for all $\boldsymbol{x}, \boldsymbol{x}' \in \mathcal{X}$ where $\lambda > 0$, $\mu < 1$. $x_{-i}$ denotes all player $j \ne i$ strategies and $\sum_i f_i^A(\boldsymbol{x})$ is assumed to be non-negative for any $\boldsymbol{x} \in \mathcal{X}$.
\end{definition}
The last condition is needed for the price of anarchy, a ratio of welfares, to be meaningful as a positive measure of inefficiency.

\begin{lemma}[Smooth Games Imply a Bound on Price of Anarchy]
\label{lemma:smooth_to_poa}
The price of anarchy is bounded above by a ratio of the coefficients that satisfy the smooth game definition~\citep{roughgarden2015intrinsic}:
% , $\rho$, the ratio of the worst case Nash total loss to the minimal total loss, 
\begin{align}
    % \rho &= \frac{\max_{\mathcal{X}^*} \sum_i f^A_i(\boldsymbol{x}^*)}{\min_{\mathcal{X}} \sum_i f^A_i(\boldsymbol{x})} \ge 1 \\
    % = \frac{\max_{\mathcal{X}^*} \frac{1}{N} \sum_i f^A_i(\boldsymbol{x}^*)}{\min_{\mathcal{X}} \frac{1}{N} \sum_i f^A_i(\boldsymbol{x})} \ge 1 \\
    % &\le \inf_{\lambda>0, \mu<1} \Big[ \frac{\lambda}{1-\mu} \Big]. \label{eq:poa_opt}
    1 \le \rho_{\mathcal{X}}(\boldsymbol{f}^A) &\le \inf_{\lambda>0, \mu<1} \Big[ \frac{\lambda}{1-\mu} \Big]. \label{eq:poa_opt}
\end{align}
% where $\boldsymbol{x}^*$ is an element of the set of Nash equilibria, $\mathcal{X}^*$.
\end{lemma}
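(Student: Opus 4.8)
The lower bound $\rho_{\mathcal{X}}(\boldsymbol{f}^A) \ge 1$ is immediate: every Nash profile $\boldsymbol{x}^* \in \mathcal{X}^*$ is feasible in the denominator's minimization, so $\min_{\mathcal{X}} \sum_i f^A_i(\boldsymbol{x}) \le \sum_i f^A_i(\boldsymbol{x}^*) \le \max_{\mathcal{X}^*} \sum_i f^A_i(\boldsymbol{x}^*)$, giving a ratio at least $1$. For the upper bound, the plan is to run the standard smoothness-implies-price-of-anarchy argument: fix an arbitrary Nash equilibrium and a welfare-optimal profile, combine the equilibrium (no-beneficial-deviation) condition with the $(\lambda,\mu)$-smoothness inequality of Definition~\ref{def:smooth_game}, and rearrange.

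Concretely, let $\boldsymbol{x}^* \in \mathcal{X}^*$ be any Nash equilibrium and let $\hat{\boldsymbol{x}} \in \arg\min_{\mathcal{X}} \sum_i f^A_i(\boldsymbol{x})$ be a minimum-total-loss profile. First I would invoke the equilibrium condition one player at a time: since no agent $i$ can lower its transformed loss by unilaterally deviating from $x^*_i$ to $\hat{x}_i$ while the others hold $x^*_{-i}$ fixed, we have $f^A_i(x^*_i,x^*_{-i}) \le f^A_i(\hat{x}_i,x^*_{-i})$, and summing over $i$ yields $\sum_i f^A_i(\boldsymbol{x}^*) \le \sum_i f^A_i(\hat{x}_i,x^*_{-i})$. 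Next I would apply \eqref{eq:smooth} with the substitution $\boldsymbol{x} \leftarrow \hat{\boldsymbol{x}}$ and $\boldsymbol{x}' \leftarrow \boldsymbol{x}^*$, so that the smoothness left-hand side is exactly the hybrid sum $\sum_i f^A_i(\hat{x}_i,x^*_{-i})$. Chaining these gives
\begin{align}
    \sum_i f^A_i(\boldsymbol{x}^*) \;\le\; \sum_i f^A_i(\hat{x}_i,x^*_{-i}) \;\le\; \lambda \sum_i f^A_i(\hat{\boldsymbol{x}}) + \mu \sum_i f^A_i(\boldsymbol{x}^*).
\end{align}
Using $\mu < 1$ and the positivity of the optimal total loss (the non-negativity assumption of Definition~\ref{def:smooth_game} that makes the ratio meaningful), rearranging produces
\begin{align}
    \frac{\sum_i f^A_i(\boldsymbol{x}^*)}{\sum_i f^A_i(\hat{\boldsymbol{x}})} \;\le\; \frac{\lambda}{1-\mu}.
\end{align}

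Finally, I would note that this argument used only the equilibrium property of $\boldsymbol{x}^*$, so the bound holds uniformly over all of $\mathcal{X}^*$ and in particular for the maximizing Nash equilibrium in the numerator of \eqref{global_poa_util_def}; the denominator is exactly $\sum_i f^A_i(\hat{\boldsymbol{x}})$. Hence $\rho_{\mathcal{X}}(\boldsymbol{f}^A) \le \lambda/(1-\mu)$ for every pair $(\lambda,\mu)$ satisfying the smoothness definition, and taking the infimum over such pairs gives \eqref{eq:poa_opt}. The one step that requires care — and the main obstacle — is orienting the deviation correctly against the precise form of \eqref{eq:smooth}: the equilibrium inequality and the smoothness inequality must ``meet'' at the same hybrid term $\sum_i f^A_i(\hat{x}_i,x^*_{-i})$, which forces the assignment of $\hat{\boldsymbol{x}}$ to the $\boldsymbol{x}$-slot and $\boldsymbol{x}^*$ to the $\boldsymbol{x}'$-slot. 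Once this matching is fixed, the remainder is routine algebra, so no further difficulties are anticipated.
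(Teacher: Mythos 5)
Your proof is correct and is precisely the standard smoothness argument: chain the Nash no-deviation inequality with the $(\lambda,\mu)$-smoothness inequality at the hybrid profile $\sum_i f^A_i(\hat{x}_i,x^*_{-i})$, rearrange using $\mu<1$ and positivity of the optimal total loss, and take the infimum. The paper does not supply its own proof of this lemma\textemdash it states it as a known result and cites \citet{roughgarden2015intrinsic}\textemdash and your argument is exactly the one from that reference, including the correct orientation of $\hat{\boldsymbol{x}}$ into the $\boldsymbol{x}$-slot and $\boldsymbol{x}^*$ into the $\boldsymbol{x}'$-slot of \eqref{eq:smooth}.
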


\begin{theorem}[Local \emph{Utilitarian} Price of Anarchy]
\label{poa_theorem}
% Assuming each agent's loss function is smooth, there exists a learning rate $\Delta t > 0$ sufficiently small such that, to $\mathcal{O}(\Delta t^2)$, the local \emph{utilitarian} price of anarchy of the game, $\rho_{\boldsymbol{x}}(\boldsymbol{f}^A)$, is upper bounded by
Assuming each agent's loss is positive and its loss gradient is Lipschitz, there exists a learning rate $\Delta t > 0$ sufficiently small such that, to $\mathcal{O}(\Delta t^2)$, the local \textbf{utilitarian} price of anarchy of the game, $\rho_{\boldsymbol{x}}(\boldsymbol{f}^A, \Delta t)$, is upper bounded by
\begin{align}
    &\max_i \{1 + \Delta t \, \emph{\texttt{ReLU}} \Big( \frac{d}{dt} \log(f_i^A(\boldsymbol{x})) + \frac{||\nabla_{x_i} f_i^A(\boldsymbol{x})||^2}{\bar{\mu} f_i^A(\boldsymbol{x})} \Big) \} \label{poa_theorem_bnd}
\end{align}
where $i$ indexes each agent, $\bar{\mu} \in \mathbb{R}_{\ge 0}$ is a user-defined upper bound on the true $\mu$, $\texttt{ReLU}(z) \myeq \max(z, 0)$, and Lipschitz implies there exists a $\beta_i$ such that $||\nabla_{x_i} f_i^A(\boldsymbol{x}) - \nabla_{y_i} f_i^A(\boldsymbol{y}) || \le \beta_i ||\boldsymbol{x}-\boldsymbol{y}|| \,\, \forall \boldsymbol{x}, \boldsymbol{y}, A$.
\end{theorem}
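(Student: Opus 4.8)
The plan is to realize the local price of anarchy $\rho_{\boldsymbol{x}}(\boldsymbol{f}^A,\Delta t)$ as an ordinary price of anarchy of the game restricted to the one-dimensional region $\mathcal{X}_\tau=\{\boldsymbol{x}-\tau F(\boldsymbol{x}):\tau\in[0,\Delta t]\}$ swept out by a single gradient step, and then to invoke Lemma~\ref{lemma:smooth_to_poa}. Concretely, the denominator $\min_{\tau}\sum_i f_i^A(\boldsymbol{x}-\tau F)$ plays the role of the optimal welfare on this segment and the numerator $\max_{\mathcal{X}^*_\tau}\sum_i f_i^A$ the worst equilibrium welfare, so it suffices to exhibit coefficients $(\lambda,\mu)$ for which the restricted game is $(\lambda,\mu)$-smooth in the sense of Definition~\ref{def:smooth_game}; the bound $\rho_{\boldsymbol{x}}\le \lambda/(1-\mu)$ then follows verbatim from the lemma. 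Because the conclusion is only claimed to $\mathcal{O}(\Delta t^2)$, everything is carried out to first order in $\Delta t$, with the Lipschitz-gradient hypothesis supplying uniform control of the quadratic remainder.

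The technical heart is establishing smoothness on the segment by a first-order expansion along the flow. I would parameterize the optimum by $\tau_a$ and the equilibrium by $\tau_b$, both in $[0,\Delta t]$, and Taylor-expand the unilateral-deviation term $f_i^A(a_i,b_{-i})$, together with $f_i^A(\boldsymbol{a})$ and $f_i^A(\boldsymbol{b})$, about the current point $\boldsymbol{x}$. Two local quantities organize the expansion for each agent: the logarithmic rate of change of the loss along the trajectory, $\tfrac{d}{dt}\log f_i^A(\boldsymbol{x})=-\langle\nabla f_i^A,F\rangle/f_i^A$, which governs how $f_i^A(\boldsymbol{a})$ and $f_i^A(\boldsymbol{b})$ drift, and the own-gradient externality $\|\nabla_{x_i}f_i^A\|^2/f_i^A$, which is exactly the first-order cost that agent $i$'s unilateral deviation imposes (here the Lipschitz constant $\beta_i$ bounds the discarded $\tfrac{\beta_i}{2}\|\cdot\|^2$ term by $\mathcal{O}(\Delta t^2)$). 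Collecting terms yields, for each agent, an inequality of the smooth-game form whose minimal $\lambda_i$ at a fixed $\mu$ is explicit to first order.

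I would then fix the smoothness parameter to the user-supplied bound $\bar{\mu}$ and solve the per-agent inequality for the smallest admissible $\lambda_i$. Since all $f_i^A$ are positive, summing the per-agent inequalities shows the restricted game is $(\max_i\lambda_i,\bar{\mu})$-smooth, and because $\bar{\mu}$ is common across agents the lemma's ratio factorizes as $\max_i\lambda_i/(1-\bar{\mu})=\max_i[\lambda_i/(1-\bar{\mu})]$. Simplifying each factor to first order gives $1+\Delta t\,(\tau_b-\tau_a)\big(\tfrac{d}{dt}\log f_i^A+\|\nabla_{x_i}f_i^A\|^2/(\bar\mu f_i^A)\big)$; maximizing over the worst-case forward displacement $\tau_b-\tau_a\in[0,\Delta t]$ of the equilibrium relative to the optimum replaces the displacement by $\Delta t$ precisely when the bracket is positive and by $0$ otherwise, which is the $\texttt{ReLU}$, and taking the maximum over agents produces \eqref{poa_theorem_bnd}.

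The main obstacle I anticipate is not the expansion itself but making the reduction to the smooth-game machinery airtight: I must pin down the equilibrium set $\mathcal{X}^*_\tau$ of the restricted game so that the Nash inequality $\sum_i f_i^A(\boldsymbol{b})\le\sum_i f_i^A(a_i,b_{-i})$ used inside Lemma~\ref{lemma:smooth_to_poa} genuinely holds on the segment, and I must justify that the worst-case relative placement of optimum and equilibrium is captured by the single scalar displacement $\tau_b-\tau_a$ with the forward sign, rather than by independent per-coordinate offsets; this is what makes the positive part (and not an absolute value) appear. Verifying that the first-order coefficients are uniform enough for the $\mathcal{O}(\Delta t^2)$ remainder to be genuinely negligible for all sufficiently small $\Delta t$ is the remaining piece, and is where the Lipschitz hypothesis on every $\nabla_{x_i}f_i^A$ does the work.
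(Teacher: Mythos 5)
Your proposal follows essentially the same route as the paper's proof: restrict the game to the segment swept out by one joint gradient step, invoke the smooth-games bound of Lemma~\ref{lemma:smooth_to_poa}, control the unilateral-deviation term $\sum_i f_i^A(x_i,x'_{-i})$ to first order via the Lipschitz-gradient hypothesis (this is exactly what the paper's chain of lemmas culminating in its ``Taylor'' lemma does, yielding $f_i^A(x_i,x'_{-i})\ge f_i^A(x'_i,x'_{-i})+\Delta t\,\|\nabla_{x_i}f_i^A(\boldsymbol{x})\|^2-\mathcal{O}(\Delta t^2)$), decompose the smoothness condition into per-agent constraints with $\lambda=\max_i\lambda_i$ and a common $\mu$ pinned to the user prior $\bar\mu$, and expand $f_i^A(\boldsymbol{x}')/f_i^A(\boldsymbol{x})=1+\Delta t\,\frac{d}{dt}\log f_i^A(\boldsymbol{x})+\mathcal{O}(\Delta t^2)$ to recover \eqref{poa_theorem_bnd}. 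The one place you genuinely diverge is the origin of the positive part: the paper gets $\max(1,\hat c_i+\hat a_i/\bar\mu)$ as the closed-form solution of a small two-dimensional constrained program per agent, where the constraint $\lambda_i\ge\hat\mu_i$ (encoding $\rho_i\ge 1$) supplies the ``$1$'' branch directly; you instead propose to obtain the \texttt{ReLU} by maximizing over the placement $\tau_b-\tau_a$ of equilibrium versus optimum on the segment. That route can work but is the soft spot you yourself flag\textemdash without using the Nash inequality to constrain where the equilibrium can sit, a naive worst-case placement yields an absolute value rather than a positive part\textemdash and the paper's feasibility-region argument sidesteps this entirely. Also note a dimensional slip: with $\tau_b-\tau_a\in[0,\Delta t]$, your expression $1+\Delta t\,(\tau_b-\tau_a)(\cdot)$ would be $1+\mathcal{O}(\Delta t^2)$ after maximization; the displacement factor should appear once, not multiplied by an extra $\Delta t$.
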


\paragraph{Proof Sketch:}
For a small enough region (\textcolor{gray}{grayed} in Figure~\ref{fig:local_poa}), we can approximate each agent's loss function with its Taylor series expansion. By rewriting all losses in the smoothness constraint (\eqref{eq:smooth}) in terms of expansions about $\boldsymbol{x}$ or $\boldsymbol{x'}$, i.e., quantities we can measure before and after a joint gradient step, we can proceed to define the smoothness constraint with $\mu$ and $\lambda$ in terms of measurable quantities. The smoothness constraint is formulated as a sum over the $n$ agents, but we can decompose this constraint into $n$ individual constraints with their own $\mu_i$'s and $\lambda_i$'s. If each agent can ensure local individual smoothness, which is possible for a small enough region, we show this is sufficient to satisfy the original local smoothness condition with $\mu=\max_i \{\mu_i\}$ and $\lambda=\max_i \{\lambda_i\}$. Each agent can further estimate their own individual price of anarchy, $\rho_i$, via \eqref{eq:poa_opt} which reduces to a tractable two dimensional constrained optimization problem with a closed form solution. We further show that we can upper bound the local price of anarchy for the group (\eqref{poa_util_def}) with the max of these individual estimates. Finally, using another expansion along with the log-trick famous from the policy gradient theorem, we recover the final result presented in Theorem~\ref{poa_theorem} below. The Lipschitz assumption exists simply to ensure the series approximations are sufficiently accurate for a small enough region. The full proof is in Appx.~\ref{bound_proofs}.

Recall that this work focuses on price of anarchy defined using total loss as the value of the game. This is a \emph{utilitarian} objective. We also derive an upper bound on the local \emph{egalitarian} price of anarchy where value is defined as the max loss over all agents (replace $\sum_i$ with $\max_i$ in~\eqref{poa_util_def}; see~Appx.~\ref{egalitarian}), possibly of independent interest.

\begin{theorem}
\label{poa_theorem_egalitarian}
Given $n$ positive losses, $f_i^A(\boldsymbol{x})$, $i\in \{1,\ldots,n\}$, with $\beta_i$-Lipschitz gradients there exists a $\Delta t > 0$ sufficiently small such that, to $\mathcal{O}(\Delta t^2)$, the local \textbf{egalitarian} price of anarchy of the game is upper bounded by
\begin{equation}
    \rho_e \le 1 + \Delta t \, \texttt{ReLU} \Big( \frac{d}{dt} \log(\max_i\{f_i^A(\boldsymbol{x})\}) + \frac{\sum_{i=1}^n ||\nabla_{x_i} f_i^A(\boldsymbol{x})||^2}{\bar{\mu} \max_i \{ f_i^A(\boldsymbol{x}) \}} \Big).
\end{equation}
% where $i$ indexes each agent and $\bar{\mu}$ is a user defined nonnegative scalar.
\end{theorem}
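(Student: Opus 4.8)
The plan is to mirror the proof of the utilitarian bound (Theorem~\ref{poa_theorem}), replacing the additive welfare $\sum_i f_i^A$ by the egalitarian welfare $\max_i f_i^A$ throughout, and to isolate the two places where the $\max$ forces a genuinely different argument. First I would establish the egalitarian analogue of Lemma~\ref{lemma:smooth_to_poa}: if the game satisfies the $\max$-version of~\eqref{eq:smooth}, namely $\max_i f_i^A(x_i, x'_{-i}) \le \lambda \max_i f_i^A(\boldsymbol{x}) + \mu \max_i f_i^A(\boldsymbol{x}')$ for some $\lambda > 0$, $\mu < 1$, then $\rho_e \le \lambda/(1-\mu)$. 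The proof is short: at a line-constrained equilibrium $\boldsymbol{x}'$, let $i^\star$ achieve $\max_i f_i^A(\boldsymbol{x}')$; the best-response property gives $f_{i^\star}^A(\boldsymbol{x}') \le f_{i^\star}^A(x_{i^\star}, x'_{-i^\star}) \le \max_i f_i^A(x_i, x'_{-i})$, and chaining this with egalitarian smoothness and rearranging yields the ratio. The only subtlety relative to the additive case is that the active argmax agent must be tracked, since the $\max$ does not distribute over the per-agent best-response inequalities.

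Next I would localize. Writing $\boldsymbol{x}' = \boldsymbol{x} - \Delta t\, F(\boldsymbol{x})$ for the joint gradient step and Taylor-expanding each loss to first order, the Lipschitz hypothesis on $\nabla_{x_i} f_i^A$ controls every remainder uniformly by $\mathcal{O}(\Delta t^2)$. The one computation I actually need is the deviation term: expanding $f_i^A(x_i, x'_{-i})$ and $f_i^A(\boldsymbol{x}')$ about $\boldsymbol{x}$ and using that agent $i$'s own displacement is $-\Delta t\, \nabla_{x_i} f_i^A(\boldsymbol{x})$ gives the identity $f_i^A(x_i, x'_{-i}) = f_i^A(\boldsymbol{x}') + \Delta t\, \|\nabla_{x_i} f_i^A(\boldsymbol{x})\|^2 + \mathcal{O}(\Delta t^2)$, exactly as in the utilitarian proof. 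Here is the egalitarian-specific step: since $f_i^A(\boldsymbol{x}') \le \max_j f_j^A(\boldsymbol{x}')$ and $\|\nabla_{x_i} f_i^A(\boldsymbol{x})\|^2 \le \sum_j \|\nabla_{x_j} f_j^A(\boldsymbol{x})\|^2$ hold for \emph{every} $i$, the deviation is bounded uniformly in $i$, and taking the $\max$ gives $\max_i f_i^A(x_i, x'_{-i}) \le \max_j f_j^A(\boldsymbol{x}') + \Delta t \sum_j \|\nabla_{x_j} f_j^A(\boldsymbol{x})\|^2$. This is precisely where the per-agent squared gradient of the utilitarian bound is replaced by the full sum $\sum_j \|\nabla_{x_j} f_j^A\|^2$, and where the outer $\max_i$ over single-agent bounds disappears.

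With this bound in hand I would solve the two-dimensional smoothness program~\eqref{eq:poa_opt} as in the utilitarian case. Substituting the deviation bound into the egalitarian smoothness inequality makes the constraint linear in $\lambda$ and $\mu$, so the minimal feasible $\lambda$ is closed-form; substituting the user-supplied bound $\bar{\mu}$ for the $\mu$-coefficient (exactly as in Theorem~\ref{poa_theorem}) then yields $\rho_e \le \frac{\max_i f_i^A(\boldsymbol{x}')}{\max_i f_i^A(\boldsymbol{x})} + \frac{\Delta t \sum_j \|\nabla_{x_j} f_j^A(\boldsymbol{x})\|^2}{\bar{\mu}\, \max_i f_i^A(\boldsymbol{x})}$. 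Finally I would rewrite the leading ratio with the envelope theorem: along the trajectory $\frac{d}{dt}\max_i f_i^A = \frac{d}{dt} f_{i^\star}^A$ at the active agent, so $\frac{\max_i f_i^A(\boldsymbol{x}')}{\max_i f_i^A(\boldsymbol{x})} = 1 + \Delta t\, \frac{d}{dt}\log \max_i f_i^A(\boldsymbol{x}) + \mathcal{O}(\Delta t^2)$. Collecting terms and using $\rho_e \ge 1$ to replace the bracket by its $\texttt{ReLU}$ (the trivial bound $\rho_e \le 1$ covering the case where the bracket is negative) delivers the claimed inequality to $\mathcal{O}(\Delta t^2)$.

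The main obstacle I anticipate is the non-smoothness of the egalitarian value $\max_i f_i^A$: both the equilibrium lemma and the final differentiation step rely on identifying the active argmax agent, and $\frac{d}{dt}\log \max_i f_i^A$ is guaranteed to exist only where that argmax is unique. I would handle ties by passing to one-sided directional derivatives along the trajectory, which suffice because we move only in the single direction $-F(\boldsymbol{x})$, and would lean on the Lipschitz assumption to ensure the first-order expansion dominates the discarded $\mathcal{O}(\Delta t^2)$ terms for all $\Delta t$ below some threshold. The remaining work, propagating the uniform remainder bounds through the $\max$ and the ratio, is routine.
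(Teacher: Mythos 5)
Your proposal is correct and reaches the paper's bound, but it diverges from the paper's proof at one structural point worth noting. You define egalitarian smoothness with a $\max$ on the left-hand side, $\max_i f_i^A(x_i,x'_{-i}) \le \lambda \max_i f_i^A(\boldsymbol{x}) + \mu \max_i f_i^A(\boldsymbol{x}')$, which obliges you to re-prove the smoothness-implies-PoA lemma by tracking the argmax agent at equilibrium (your chain $f_{i^\star}^A(\boldsymbol{x}') \le f_{i^\star}^A(x_{i^\star},x'_{-i^\star}) \le \max_i f_i^A(x_i,x'_{-i})$ is valid). The paper instead keeps the \emph{sum} on the left-hand side, $\sum_i f_i^A(x_i,x'_{-i}) \le \lambda \max_i f_i^A(\boldsymbol{x}) + \mu \max_i f_i^A(\boldsymbol{x}')$, precisely so that Lemma~\ref{taylor} applies verbatim to lower-bound the deviation sum by $\sum_i f_i^A(\boldsymbol{x}') + a_i$, after which positivity gives $\sum_i f_i^A(\boldsymbol{x}') \ge \max_i f_i^A(\boldsymbol{x}')$ and the Roughgarden-style lemma carries over without modification. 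The two routes collapse to the identical two-dimensional $(\lambda,\hat\mu)$ program: your uniform bound $\max_i\bigl(f_i^A(\boldsymbol{x}') + \Delta t\,\|\nabla_{x_i} f_i^A(\boldsymbol{x})\|^2\bigr) \le \max_j f_j^A(\boldsymbol{x}') + \Delta t \sum_j \|\nabla_{x_j} f_j^A(\boldsymbol{x})\|^2$ yields exactly the constraint $\sum_i a_i \le \lambda b + (\mu-1)c$ that the paper obtains by summing the $a_i$, so the closed-form solution, the series expansion of $\hat c$, and the final $\texttt{ReLU}$ expression coincide. What your version buys is a marginally more self-contained argument (you do not need to import the sum-on-left convention for non-utilitarian objectives) and an explicit treatment of ties in the argmax via one-sided directional derivatives, which the paper silently omits; what the paper's version buys is direct reuse of Lemma~\ref{taylor} and Lemma~\ref{lemma:smooth_to_poa} with no new equilibrium argument. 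One small wording slip: when the bracket is negative you should say the program's constraint $\lambda \ge \hat\mu$ clamps the bound at $1$, which together with $\rho_e \ge 1$ gives the $\texttt{ReLU}$ form, rather than invoking "the trivial bound $\rho_e \le 1$" as if it held unconditionally.
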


% \begin{theorem}[Local \emph{Egalitarian} Price of Anarchy]
% % Assuming each agent's loss function is smooth, there exists a learning rate $\Delta t > 0$ sufficiently small such that, to $\mathcal{O}(\Delta t^2)$, the local \emph{egalitarian} price of anarchy of the game, $\rho^e_{\boldsymbol{x}}(\boldsymbol{f}^A)$, is upper bounded by
% Assuming each agent's loss is positive and its loss gradient is Lipschitz, there exists a learning rate $\Delta t > 0$ sufficiently small such that, to $\mathcal{O}(\Delta t^2)$, the local \textbf{egalitarian} price of anarchy of the game, $\rho^e_{\boldsymbol{x}}(\boldsymbol{f}^A, \Delta t)$, is upper bounded by
% \begin{equation}
%     1 + \Delta t \, \texttt{ReLU} \Big( \frac{d}{dt} \log(\max_i\{f_i^A(\boldsymbol{x})\}) + \sum_{i=1}^n \frac{||\nabla_{x_i} f_i^A(\boldsymbol{x})||^2}{f_i^A(\boldsymbol{x})\bar{\mu}} \Big)\}
% \end{equation}
% where $i$ indexes each agent, $\bar{\mu}$ is a user defined positive scalar, and Lipschitz implies there exists a $\beta_i$ such that $||\nabla_{x_i} f_i^A(\boldsymbol{x}) - \nabla_{y_i} f_i^A(\boldsymbol{y}) || \le \beta_i ||\boldsymbol{x}-\boldsymbol{y}|| \,\, \forall \boldsymbol{x}, \boldsymbol{y}, A$.
% \end{theorem}

% % This work focuses on the local \emph{utilitarian} price of anarchy.
% The remainder of the paper assumes the utilitarian definition.

\subsection{Decentralized Learning of the Loss Mixture Matrix $A$}
\label{decentralized}
Minimizing~\eqref{rho_minimization} w.r.t. $A$ can become intractable if $n$ is large. Moreover, if solving for $A$ at each step is the responsibility of a central authority, the system is vulnerable to this authority failing. A distributed solution is therefore appealing, and the local price of anarchy bound admits a natural relaxation that decomposes over agents ($\max_i z_i \le \sum_i z_i$ for $z_i \ge 0$). \Eqref{rho_minimization} then factorizes as
\begin{align}
    \min_{A_i \in \Delta^{n-1}} \rho_i + \nu \kld{\boldsymbol{e}_i}{A_i} \label{lin_poa_dist}
\end{align}
where $\rho_i = 1 + \Delta t \, \texttt{ReLU} \Big( \frac{d}{dt} \log(f_i^A(\boldsymbol{x})) + \frac{||\nabla_{x_i} f_i^A(\boldsymbol{x})||^2}{f_i^A(\boldsymbol{x})\bar{\mu}} \Big)$.
%
% \textcolor{blue}{maybe first state that $\nabla_{A_i} \rho_i \propto \nabla_{A_i} \texttt{ReLU} \Big( \frac{d}{dt} \log(f_i^A(\boldsymbol{x})) + \frac{||\nabla_{x_i} f_i^A(\boldsymbol{x})||^2}{f_i^A(\boldsymbol{x})\bar{\mu}} \Big)$.}
%
Local price of anarchy is subdifferentiable w.r.t. each $A_i$ with gradient
\begin{align}
\nabla_{A_i} \rho_i \propto \nabla_{A_i} \texttt{ReLU} \Big( \frac{d}{dt} \log(f_i^A(\boldsymbol{x})) + \frac{||\nabla_{x_i} f_i^A(\boldsymbol{x})||^2}{f_i^A(\boldsymbol{x})\bar{\mu}} \Big).
\end{align}
The $\log$ appears due to price of anarchy being defined as the worst case Nash total loss \emph{divided} by the minimal total loss. We propose the following modified learning rule for a hypothetical price of anarchy which is defined as a \emph{difference} and accepts negative loss: $A_i \leftarrow A_i - \eta_A \tilde{\nabla}_{A_i} (\rho_i + \nu \mathcal{D}_{KL})$ where $\eta_A$ is a learning rate and
\begin{align}
    % \tilde{\rho}_i &= \texttt{ReLU} \Big( \frac{d}{dt} f_i^A(\boldsymbol{x}) + \epsilon \Big). \label{poa_additive}
    \textcolor{blue}{\tilde{\nabla}_{A_i} \rho_i} &= \textcolor{blue}{\nabla_{A_i} \texttt{ReLU} \Big( \frac{d}{dt} f_i^A(\boldsymbol{x}) + \epsilon \Big)}. \hspace{0.2cm} \text{[$\epsilon$ is a hyperparameter.]} \label{poa_additive}
\end{align}
The update direction in~(\ref{poa_additive}) is proportional to $\nabla_{A_i} \rho_i$ asymptotically for large $f_i^A$; see~Appx.~\ref{multiplicative_poa} for further discussion. Each agent $i$ updates $x_i$ and $A_i$ simultaneously using $\nabla_{x_i} f_i^A(\boldsymbol{x})$ and $\tilde{\nabla}_{A_i} (\rho_i + \nu \mathcal{D}_{KL})$.
% 
% Each agent $i$ performs gradient descent simultaneously on $f_i^A(\boldsymbol{x})$ w.r.t. $x_i$ and $\tilde{\rho_i}$ w.r.t. $A_i$.
% d/dt log(fiA) = dfiA/dt / fiA
% d/dA (dfiA/dt / fiA) =
% d/dA (dfiA/dt) / fiA - dfiA/dt * d/dA (fiA) / (fiA^2)
% = propto d/dA (dfiA/dt) - propto d/dA (fiA)
% vs
% d/dA (dfiA/dt)
% if fiA is large, they are similar because 1/fiA^2 decays second term. basically, accurate for large fiA because linear approximation to log(\boldsymbol{x}) is relatively accurate for large x

% \subsubsection{Improve-Stay, Suffer-Shift}
\textbf{Improve-Stay, Suffer-Shift}\textemdash Win-Stay, Lose-Shift (WSLS)~\citep{robbins1952some} is a strategy shown to outperform Tit-for-Tat~\citep{rapoport1965prisoner} in an iterated prisoner's dilemma~\citep{nowak1993strategy,imhof2007tit}. It was also shown to be psychologically plausible~\citep{wang2014social} in research on human play. The D3C update direction, $\nabla_{A_i} \rho_i$, encodes the rule: if the loss is decreasing, maintain the mixing weights, otherwise, change them. We can interpret this rule as a generalization of WSLS to learning (derivatives) rather than outcomes (losses). Therefore, we have shown that a sensible, agent-centric learning rule (WSLS) can be derived from minimization of the global, game theoretic concept \emph{price of anarchy} by simply a) restricting agents' strategy spaces to be local to their learning trajectory, a form of bounded rationality, and b) having the agents consider improvements (derivatives) instead of direct outcomes. Furthermore, the fact that a lower price of anarchy entails a higher welfare at a Nash equilibrium means this style of WSLS is ultimately compatible with achieving high performance for the entire system.
% , $\Delta t \nabla_{A_i} \texttt{ReLU} \Big( \frac{d}{dt} f_i^A(\boldsymbol{x}) \Big)$,

% We consider discovering this descent direction using the gradients of all losses in the game as a basis.

Note that the trival solution of minimizing average group loss coincides with $A_{ij} = \frac{1}{n}$ for all $i,j$. If the agent strategies converge to a social optimum, this is a fixed point in the augmented strategy space $(\boldsymbol{x},A)$. This can be seen by noting that 1) convergence to an optimum implies $\nabla_{x_i} f_i^A(\boldsymbol{x}) = 0$ and 2) convergence alone implies $\frac{df_i}{dt} = 0$ for all agents so $\nabla A_i = 0$ by~\eqref{poa_additive} assuming $\epsilon=0$.

\subsection{Decentralized Learning \& Extending to Reinforcement Learning}
The time derivative of each agent's loss, $\frac{d}{dt} f_i^A(\boldsymbol{x})$, in \eqref{poa_additive} requires differentiating through potentially all other agent loss functions, which precludes scaling to large populations. In addition, this derivative is not always available as a differentiable function. In order to estimate $\tilde{\nabla}_{A_i} \rho_i$ when only scalar estimates of $\rho_i$ are available as in, e.g., multiagent reinforcement learning (MARL), each agent perturbs their loss mixture and commits to this perturbation for a random number of training steps. If the loss increases over the trial, the agent updates their mixture in a direction \emph{opposite} the perturbation. Otherwise, no update is performed.

This is formally accomplished with approximate one-shot gradient estimates~\citep{shalev2012online} or \emph{evolutionary strategies}~\citep{rechenberg1978evolutionsstrategien}. A one-shot gradient of $\rho_i(A_i)$ is estimated by first perturbing $A_i$ with entropic mirror ascent~\citep{beck2003mirror} as $\tilde{A}_i = \texttt{softmax}(\log(A_i) + \delta \tilde{\boldsymbol{a}}_i)$ where $\delta > 0$ and $\tilde{\boldsymbol{a}_i} \sim U_{sp}(n)$ is drawn uniformly from the unit sphere in $\mathbb{R}^n$. The perturbed weights are then evaluated $\tilde{\rho}_i = \rho_i(\tilde{A}_i)$. Finally, an unbiased gradient is given by $\frac{n}{\delta} \tilde{\rho}_i \tilde{\boldsymbol{a}_i}$. In practice, we cannot evaluate in one shot the $\frac{d}{dt} f_i^A(\boldsymbol{x})$ term that appears in the definition of $\rho_i$. Instead, Algorithm~\ref{alg_rl_top} uses finite differences and we assume the evaluation remains accurate enough across training steps.
\begin{algorithm}[H]
\begin{algorithmic}
    \STATE Input: $\eta_A$, $\delta$, $\nu$, $\tau_{\min}$, $\tau_{\max}$, $A_i^0$, $\epsilon$, $l$, $h$, $\mathbb{L}$, iterations $T$
    \STATE $A_i \leftarrow A_i^0$ \{\textcolor{orange}{Initialize Mixing Weights}\}
    \STATE $G = 0$ \{\textcolor{orange}{Initialize Mean Return of Trial}\}
    \STATE \{\textcolor{orange}{Draw Initial Random Mixing Trial}\}
    \STATE $\tilde{A}_i, \tilde{\boldsymbol{a}}_i, \tau, t_b, G_b = \texttt{trial}(\delta, \tau_{\min}, \tau_{\max}, A_i, 0, G)$
    \FOR {$t = 1 : T$}
        \STATE $g = \mathbb{L}_i(\tilde{A}_j\,\, \forall \,\,j)$ \{\textcolor{orange}{Update Policy With Mixed Rewards}\}
        \STATE $\Delta t_b = t - t_b$ \{\textcolor{orange}{Elapsed Trial Steps}\}
        \STATE $G = (G (\Delta t_b - 1) + g)/\Delta t_b$ \{\textcolor{orange}{Update Mean Return}\}
        % \IF {$t - t_b == \tau$
        \IF {$\Delta t_b == \tau$ \{\textcolor{orange}{Trial Complete}\}}
            \STATE $\textcolor{blue}{\tilde{\rho}_i = \texttt{ReLU}(\frac{G_b - G}{\tau} + \epsilon)}$ \{\textcolor{orange}{Approximate $\rho$}\}
            \STATE $\textcolor{blue}{\nabla_{A_i} = \tilde{\rho}_i \tilde{\boldsymbol{a}}_i - \nu \boldsymbol{e}_i \varoslash A_i}$ \{\textcolor{orange}{Estimate Gradient} \textcolor{blue}{\textemdash(\ref{poa_additive})}\}
            \STATE $A_i = \texttt{softmax} \,_{l}\lfloor \log(A_i)$$ - $$\eta_{A} \nabla_{A_i} \rceil^{h}$ \{\textcolor{orange}{Update}\}
            \STATE \{\textcolor{orange}{Draw New Random Mixing Trial}\}
            \STATE $\tilde{A}_i, \tilde{\boldsymbol{a}}_i, \tau, t_b, G_b = \texttt{trial}(\delta, \tau_{\min}, \tau_{\max}, A_i, t, G)$
        \ENDIF
    \ENDFOR
\end{algorithmic}
\caption{D3C Update for RL Agent $i$}
\label{alg_rl_top}
\end{algorithm}
% \end{minipage}
% \begin{minipage}{0.5\textwidth}
\vspace{-2em}
\begin{algorithm}[H]
\begin{algorithmic}
    \STATE Input: $\delta$, $\tau_{\min}$, $\tau_{\max}$, $A_i$, $t$, $G$
    % \STATE $t_b = t+1$
    % \STATE $G_b = G$
    % \STATE \{\textcolor{orange}{Sample Perturbation Direction}\}
    \STATE $\tilde{\boldsymbol{a}}_i \sim U_{sp}(n)$ \{\textcolor{orange}{Sample Perturbation Direction}\}
    % Direction$\,\,\triangleright\,$Draw Random Perturbation Direction
    % \STATE \{\textcolor{orange}{Perturb Mixture}\}
    \STATE $\tilde{A}_i = \texttt{softmax}(\log(A_i) + \delta \tilde{\boldsymbol{a}}_i)$ \{\textcolor{orange}{Perturb Mixture}\}
    % $\,\,\triangleright\,$Perturb Mixture
    % \STATE \{\textcolor{orange}{Draw Random Trial Length}\}
    \STATE $\tau \sim \texttt{Uniform}\{\tau_{\min}, \tau_{\max}\}$ \{\textcolor{orange}{Draw Random Trial Length}\}
    % $\,\,\triangleright\,$Draw Random Trial Length
    \STATE Output: $\tilde{A}_i, \tilde{\boldsymbol{a}}_i, \tau, t, G$
\end{algorithmic}
\caption{\texttt{trial}\textemdash helper function}
\label{alg_rl_pert}
\end{algorithm}
% \end{minipage}
% \vspace{-2em}
Algorithm~\ref{alg_rl_top} requires several arguments: $\eta_A$ is a global learning rate for each $A_i$, $\delta$ is a perturbation scalar for the one-shot gradient estimate, $\tau_{\min}$ and $\tau_{max}$ specify the lower and upper bounds for the duration of the mixing trial for estimating a finite difference of $\frac{d}{dt} f_i^A(\boldsymbol{x}) \approx -(G - G_b)/\tau$, $l$ and $h$ specify lower and upper bounds for clipping $A$ in logit space ($_{l}\lfloor \cdot \rceil^{h}$), and $\mathbb{L}_i$ (Algorithm~\ref{alg_learner}) represents any generic reinforcement learning algorithm augmented to take $A$ as input (in order to mix rewards) and outputs \emph{discounted return}. $\varoslash$ indicates elementwise division.
% $\frac{d}{dt} f_i^A(\boldsymbol{x}) \approx [\frac{1}{\tau} \sum_{t'=t+1}^{t + \tau} f_i^A(\boldsymbol{x}_{t'}) - f_i^A(\boldsymbol{x}_{t})]/\tau$
\begin{algorithm}[H]
\begin{algorithmic}
    \STATE Input: $\tilde{A} = [\tilde{A}_1; \ldots; \tilde{A}_n]$
    \WHILE{episode not terminal}
        \STATE draw action from agent policy
        \STATE play action and observe reward $r_i$
        \STATE broadcast $r_i$ to all agents
        \STATE update policy with $\tilde{r}_i = \sum_j \tilde{A}_{ji} r_j$
    \ENDWHILE
    \STATE Output: return over episode $g$
\end{algorithmic}
\caption{$\mathbb{L}_i$\textemdash example learner}
\label{alg_learner}
\end{algorithm}
% Algorithm~\ref{alg_rl_top} requires several arguments: $\eta_A$ is a global learning rate for each $A_i$, $\delta$ is a perturbation scalar for the one-shot gradient estimate, $\tau_{\min}$ and $\tau_{max}$ specify the lower and upper bounds for the duration of the mixing trial for estimating a finite difference of $\frac{d}{dt} f_i^A(\boldsymbol{x})$, $l$ and $h$ specify lower and upper bounds for clipping $A$ in logit space ($_{l}\lfloor \cdot \rceil^{h}$), and $\mathbb{L}_i$ (Algorithm~\ref{alg_learner}) represents any generic reinforcement learning algorithm augmented to take $A$ as input (in order to mix rewards) and outputs \emph{discounted return}. $\varoslash$ indicates elementwise division.
% $U$ denotes uniform distribution.

\subsection{Assessment}
We assess Algorithm~\ref{alg_rl_top} with respect to our original design criteria. As described, agents perform gradient descent on a decentralized and local upper bound on the price of anarchy. Recall that a minimal global price of anarchy ($\rho=1$) implies that even the worst case Nash equilibrium of the game is socially optimal; similarly, Algorithm~\ref{alg_rl_top} searches for a locally socially optimal equilibrium. By design, $A_i \in \Delta^{n-1}$ ensures the approach is budget-balancing. We justify the agents learning weight vectors $A_i$ by initializing them to attend primarily to their own losses as in the original game. If they can minimize their original loss, then they never shift attention according to \eqref{poa_additive} because $\frac{df_i}{dt} \le 0$ for all $t$. They only shift $A_i$ if their loss increases. We also include a KL term to encourage the weights to return to their initial values. In addition, in our experiments with symmetric games, learning $A$ helps the agents' outcomes in the long run. We also consider experiments in Appx.~\ref{d3c_robust_in_pd} where only a subset of agents opt into the mechanism. If each agent's original loss is convex with diagonally dominant Hessian and the strategy space is unconstrained, the unique, globally stable fixed point of the game defined with mixed losses is a Nash (see Appx.~\ref{incentive_compat}). Exact gradients $\nabla_{A_i} \rho_i$ require each agent differentiates through all other agents' losses precluding a fully decentralized and scalable algorithm. We circumvent this issue with noisy oneshot gradients. All that is needed in terms of centralization is to share the mixed scalar rewards; this is cheap compared to sharing $x_i \in \mathbb{R}^{d \gg 1}$. The cost of communicating rewards may be mitigated by learning $A_i$ via sparse optimization or sampling but is outside the scope of this paper.

\subsection{Related Work}
\emph{Collective Intelligence} or COIN, surveyed in~\citep{wolpert1999introduction}, examines the problem of how to design reward functions for individual agents such that a decentralized multiagent system maximizes a global world utility function. \citet{wolpert1999introduction} describe several approaches taken by an array of diverse fields and motivate the creation of a collective intelligence as an important challenge. Follow-up works focus on aiding researchers in deriving static agent reward functions that are consistent with optimizing the desired world utility via, for instance, useful visualizations~\cite{agogino2005multi,agogino2008analyzing}. Unlike conventional COIN approaches, D3C learns agent reward functions dynamically through online interaction with the environment. On the other hand, like D3C, studies in COIN find that agents optimizing modified versions of their original reward functions not only achieve high global utility, but also perform better individually~\cite{tumer2013coordinating}. 

In recent MARL work, \citet{lupu2020gifting} augment the agents' action space with a ``gifting'' action where agents can send a $+1$ reward to another agent. They evaluate this approach on a variant of \emph{Harvest} we explore in Appx.~\ref{harvestpatch}. They look at three different reward budget settings; ours is most similar to their zero-sum setting in which gifts are \emph{budget-balanced} by matching $-1$ penalties. In contrast to~\citep{lupu2020gifting}, we consider a continuum of ``gifting'' amounts automatically grounded in the scale of the original rewards via mixing on the simplex. 

Similarly,~\citet{hostallero2020inducing} introduce PED-DQN where agents gift their peers by a reciprocal amount proportional to the positive externality they perceive (as measured by their \emph{td-error)} receiving from the group. Although they make no direct reference to price of anarchy, the stated goal is to shift the system's equilibrium towards an outcome that maximizes social welfare. In contrast to~\citep{hostallero2020inducing}, D3C agents learn to share varying rewards with individual agents rather than sharing an average gift with everyone in their predefined peer group. This is important as the latter prevents the possible discovery of teams as demonstrated by D3C in~Appx.~\ref{election}.

\citet{yang2020learning} propose an algorithm LIO (Learning to Incentivize Others) that equips agents with ``gifting'' policies represented as neural networks. At each time step, each agent observes the environment and actions of all other agents to determine how much reward to gift to the other agents. The parameters of these networks are adjusted to maximize the original environment reward (without gifts) minus some penalty regularizer for gifting meant to approximately maintain \emph{budget-balance}. In order to perform this maximization, each agent requires access to every other agent's action-policy, gifting-policy, and return making this approach difficult to scale and decentralize. \citet{yang2020learning} demonstrate LIO's ability to maximize welfare and achieve division of labor on a very restricted version of the Cleanup game we evaluate in Appx.~\ref{cleanup}. We also evaluate D3C on this restricted variant in the Appx.~\ref{lio}.

Inspired by social psychology, \citet{mckeesocial} explored imbuing agents with a predisposed \emph{social value orientation} that modifies their rewards. Populations with heterogeneous populations achieved higher fitness scores than homogeneous ones in an evolutionary training approach (i.e., learning occurs outside the agent's lifetime).

One key innovation of D3C beyond the above works is its budget-balance guarantee. In~\citep{hostallero2020inducing,yang2020learning,mckeesocial}, agents manifest extra reward to gift to peers, but no explanation is given for where this extra reward might come from. Also, none of these works tie their proposed approaches to the fundamental game theoretic concept price of anarchy. The derivation of D3C from first principles provides an explicit link, showing an agent-centric learning rule can be approximately consistent with the global objective of maximal social welfare.

Like D3C, OpenAI Five~\citep{openai2019dota} also linearly mixed agents rewards which each other, but where the single ``team spirit" mixture parameter ($\tau$) is \textbf{manually} annealed throughout training from $0.3$ to $1.0$ (i.e., $A_{ii}=1-0.8\tau, A_{ij}=0.2 \tau, j \ne i$).

Finally, we point out that loss transformation is consistent with human behavior. Within social psychology, \textit{interdependence theory}~\citep{kelley1978interpersonal} holds that humans make decisions based on self interest \emph{and} social preferences, allowing them to avoid poor Nash equilibria.

\section{Experiments}
\label{experiments}
% \atacchet{Say what you'd like to show: Here is a suggestion: "In this section we show that agents that minimize local estimates of price of anarchy outperform selfish rational agents in terms of blah blah. We report results in four domains: in the first two experiments..."}
Here, we show that agents minimizing local estimates of price of anarchy achieve lower loss on average than selfish, rational agents in five domains.
% Agents minimize local estimates of price of anarchy in four domains. 
In the first two domains, a traffic network (4 players) and a generalized prisoner's dilemma (10 players), players optimize using exact gradients (see \eqref{poa_additive}). Then in three RL domains\textemdash Trust-Your-Brother, Coins and Cleanup\textemdash players optimize with approximate gradients as handled by Algorithm~\ref{alg_rl_top}. Agents train with deep networks and A2C~\citep{espeholt2018impala}. We refer to both algorithms as D3C (\uline{decentralized}, \uline{differentiable}, \uline{dynamic} \uline{compromise}).

For D3C, we initialize $A_{ii}=0.99$ and $A_{ij}=\frac{0.01}{n-1}, \, j \ne i$. We initialize away from a onehot because we use entropic mirror descent~\citep{beck2003mirror} to update $A_i$, and this method requires iterates to be initialized to the interior of the simplex. In the RL domains, updates to $A_i$ are clipped in logit-space to be within $l=-5$ and $h=5$ (see Algorithm~\ref{alg_rl_top}). We set the $\mathcal{D}_{KL}$ coefficient to $0$ except for in Coins, where $\nu = 10^{-5}$. Additional hyperparameters are specified in~Appx.~\ref{agents}. In experiments where we cannot compute price of anarchy (\eqref{global_poa_util_def}) exactly, we either report the total loss of the learning algorithm (e.g., D3C) along with the loss achieved by fully cooperative agents ($A_{ij} = \frac{1}{n}$) or the ratio of these losses referred to as ``ratio to optimal''.
% In experiments, reported price of anarchy refers to the ratio of the sum of losses of the strategy learning converged to over that of the strategy learned by fully cooperative agents ($A_{ij} = \frac{1}{n}$).

% \textbf{Abuse of Notation:} In experiments, reported price of anarchy refers to the ratio of the sum of losses of the strategy learning converged to (which is not necessarily a Nash equilibrium) over that of the strategy learned by fully cooperative agents ($A_{ij} = \frac{1}{n}$).

\subsection{Traffic Networks and Braess's Paradox}
\label{braess}
% \begin{enumerate}
%     \item Describe Braess's paradox
%     \item Include table comparing average commute time for 2 v 3 road v selfish v allied for different starting points (xid=10511984)
%     \item Insert example Figure with alliance, loss, allied loss, and allied objective trajectory in 3 panels -- (2 road selfish, 3 road selfish, 3 road allied)
% \end{enumerate}
% 
\begin{figure}[ht!]
    \begin{subfigure}[b]{.5\textwidth}
        \centering
        \begin{minipage}{.25\textwidth}
            % \centering
            \includegraphics[scale=0.45]{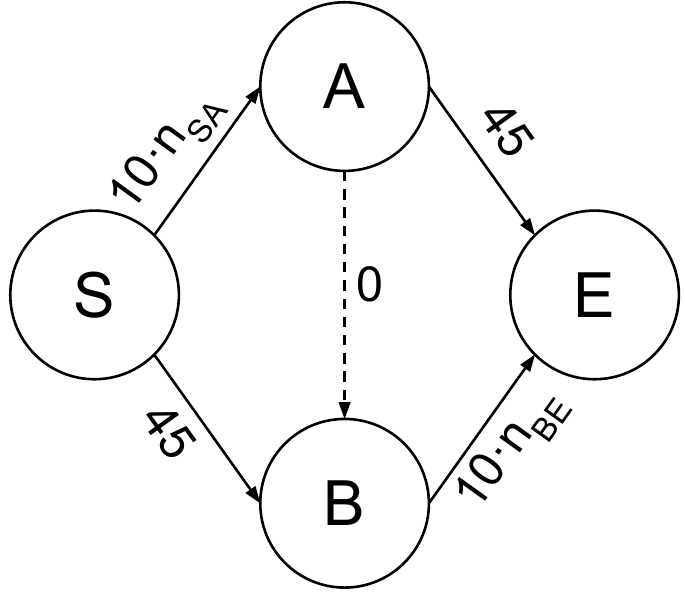}
        \end{minipage}
        \hspace{1.0cm}
        \begin{minipage}{.25\textwidth}
            % \fontsize{10pt}{10pt}\selectfont
            \begin{align}
                &n_{SA} \in \{0-4\}, \, n_{BE} \in \{0-4\}  \nonumber \\
                &10 n_{SA} + 10 n_{BE} < 10 n_{SA} + 45 \nonumber \\
                &10 n_{SA} + 10 n_{BE} < 10 n_{BE} + 45 \nonumber
            \end{align}
        \end{minipage}
        \caption{Traffic Network \label{fig:traffic_network}}
    \end{subfigure}
    \begin{subfigure}[b]{0.5\textwidth}
        \centering
        \includegraphics[scale=0.28]{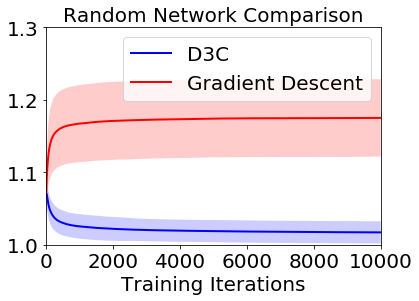}
        \caption{Random Network Results \label{fig:random_traffic_network}}
    \end{subfigure}
    \vspace{-5pt}
    \caption{(\subref{fig:traffic_network}) Four drivers aim to minimize commute time from S to E. Commute time on each edge depends on the number of commuters, $n_{ij}$. Without edge AB, drivers distribute evenly across SAE and SBE for a 65 min commute. After edge AB is added, switching to the shortcut, SABE, always decreases commute time given the other drivers maintain their routes, however, all drivers are incentivized to take the shortcut resulting in an 80 min commute. (\subref{fig:random_traffic_network}) The mean ``ratio to optimal'' over training for 1000 randomly generated networks exhibiting Braess's paradox with $\pm 1$ stdev shaded.}
    \label{fig:braess_diagram_intro}
    % Gradient descent and D3C are compared with shaded regions representing $\pm$ $1$ standard devation.
\end{figure}

In 2009, New York city's mayor closed Broadway near Times Square to alleviate traffic congestion~\citep{neumanbarbaro2009}. This counter-intuitive phenomenon, where restricting commuter choices improves outcomes, is called Braess's paradox~\citep{wardrop1952road,beckmann1956studies,braess1968paradoxon}, and has been observed in real traffic networks~\citep{youn2008price,steinberg1983prevalence}.
%
% Braess's paradox well illustrates the price of anarchy with several real world examples, most popularly traffic networks where adding one road to the network results in increased average commute time.
Braess's paradox is also found in physics~\citep{youn2008price}, decentralized energy grids~\citep{witthaut2012braess}, and can cause extinction cascades in ecosystems~\citep{sahasrabudhe2011rescuing}. Knowing when a network may exhibit this paradox is difficult, which means knowing when network dynamics may result in poor outcomes is difficult.
% As commuters could simply choose to ignore the additional road, the previous average commute time is certainly possible, however, the additional road presents an immediate reduction in commute time that each commuter individually is incentivized to take. The paradox is that the aggregate effect can result in a Nash equilibrium with higher commute time for everyone.
%
% \begin{figure}[ht]
%     \centering
%     \includegraphics[scale=0.3]{}
%     \caption{Caption}
%     \label{fig:braess_diagram}
% \end{figure}

Figure~\ref{fig:braess_diagram_intro}\subref{fig:traffic_network} presents a theoretical traffic network. Without edge AB, drivers commute according to the Nash equilibrium, either learned by gradient descent or D3C. Figure~\ref{fig:braess_results_statistics_tworoad}\subref{fig:coin_stats} shows the price of anarchy approaching 1 for both algorithms.
% Without the route from A to B, any Nash equilibrium consists of two drivers taking the top route (Start to A to End) and the other two drivers taking the bottom route (Start to B to End).
%
% There are only four drivers on the road, so the routes Start to A and B to End take a maximum of 40 minutes. Thus the pure strategy of taking Start to A to B to End (80 minutes) strictly dominates the other two routes (Start to (A or B) to End) for all players. Therefore, the unique Nash equilibrium is for all players to take the shortcut resulting in a travel time of 80 minutes.
%
% At the Nash equilibrium, all four agents employ a pure strategy of taking Start to A to B to End (80 minutes). Any deviation from this path requires swapping a 40 minute segment for a 45 minute segment, thereby increasing commute time. Importantly, the equilibrium when there is no shortcut (two agents take top route, two take bottom) is not a Nash equilibrium when the shortcut is added because any agent can immediately reduce their commute time to 50 minutes.
%
% Uniqueness: The shortcut is a dominant strategy. Commute time across the top route is $10P_{\text{Start}\rightarrow\text{A}} + 45$, bottom route is $10P_{\text{B}\rightarrow\text{End}} + 45$, and shortcut is $10P_{\text{Start}\rightarrow\text{A}} + 10P_{\text{B}\rightarrow\text{End}}$. There are only four drivers so the commute time for the congested segments takes max 40 minutes. Therefore, the shortcut is strictly dominant for all players.
%
If edge AB is added, the network now exhibits Braess's paradox. Figure~\ref{fig:braess_results_statistics_tworoad}\subref{fig:w_shortcut} shows that while gradient descent converges to Nash ($\rho=\frac{80}{65}$), D3C achieves an average ``ratio to optimal'' near $1$.
\begin{figure}[ht!]
    \begin{subfigure}[b]{.5\textwidth}
        \centering
        \includegraphics[width=0.45\textwidth]{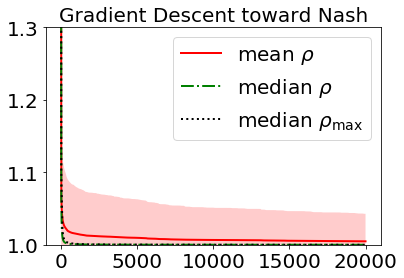}
        \includegraphics[width=0.45\textwidth]{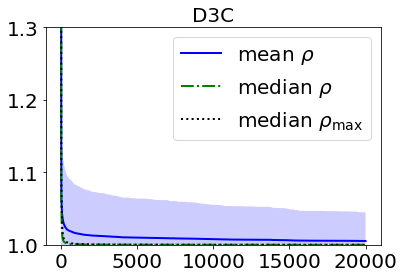}
        \caption{Without Shortcut (Edge AB Removed) \label{fig:wo_shortcut}}
    \end{subfigure}
    \begin{subfigure}[b]{.5\textwidth}
        \centering
        \includegraphics[width=0.45\textwidth]{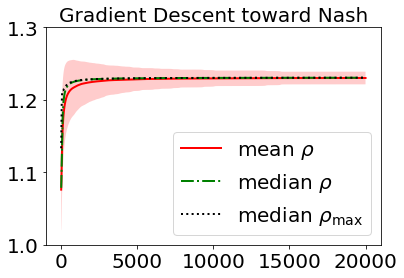}
        \includegraphics[width=0.45\textwidth]{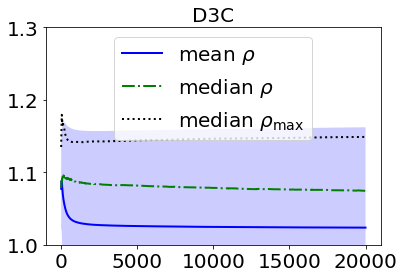}
        \caption{With Shortcut (Edge AB Included) \label{fig:w_shortcut}}
    \end{subfigure}
    \vspace{-5pt}
    \caption{\textbf{Traffic Network}\textemdash (\subref{fig:wo_shortcut}) Without edge AB, agents are initialized with random strategies and train with either gradient descent (left) or D3C (right)\textemdash similar performance is expected. Statistics of $1000$ runs are plotted over training. Median $\rho_{\max}$ tracks the median over trials of the longest-commute among the four drivers. The shaded region captures $\pm$ $1$ stdev around the mean. (\subref{fig:w_shortcut}) After edge AB is added, agents are initialized with random strategies and trained with either gradient descent (left) or D3C (right).}
    \label{fig:braess_results_statistics_tworoad}
    % Statistics of $1000$ runs are plotted over training. Median $\rho_{\max}$ tracks the median over trials of the longest-commute among the four drivers.
    % The shaded region captures $\pm$ $1$ standard deviation around the mean.
\end{figure}
%
% We generalize the network in Figure~\ref{fig:braess_diagram_intro}, specifically the congestion coefficients on edges SA and BE as well as the constants on edges SB, AE, and AB, in order to generate random networks with Braess's paradox in experiments.
% \begin{figure}[ht!]
%     \centering
%     \includegraphics[scale=0.28]{figures/braess/statistics/threeroad_selfish_poa.png}
%     \includegraphics[scale=0.28]{figures/braess/statistics/threeroad_sharing_poa.png}
%     \caption{\textbf{Traffic Network} After edge AB is added, agents are initialized with random strategies and trained with either gradient descent (left) or D3C (middle). Statistics of $1000$ runs are plotted over training. Median $\rho_{\max}$ tracks the median over trials of the longest-commute among the four drivers. The shaded region captures $\pm$ $1$ standard deviation around the mean.}
%     \label{fig:braess_results_statistics_threeroad}
% \end{figure}
%
% Figure~\ref{fig:braess_rand_results}(R)
Figure~\ref{fig:braess_diagram_intro}\subref{fig:random_traffic_network} shows that when faced with a randomly drawn network, D3C agents achieve shorter commutes on average than agents without the ability to compromise.
% \begin{figure}[ht!]
%     \centering
%     \includegraphics[scale=0.35]{figures/braess/statistics/random_network_poa.png}
%     \caption{\textbf{Traffic Network} The price of anarchy throughout training for 1000 randomly generated three road networks exhibiting Braess's paradox. Gradient descent and D3C are compared with shaded regions representing $\pm$ $1$ standard devation.}
%     \label{fig:braess_rand_results}
% \end{figure}

% \subsection{Prisoner's Dilemma Variant}
% % \begin{enumerate}
% %     \item Add description of the PD variant
% %     \item Insert Figure with xdist traj and welfare traj (mean+std) for different starting points (xid=10512560)
% %     \item Insert Figure with alliance, loss, allied loss, and allied objective trajectory for a (more interesting) single run
% % \end{enumerate}
% %

\subsection{Prisoner's Dilemma}
In an $n$-player prisoner's dilemma, each player must decide to defect or cooperate with each of the other players creating a combinatorial action space of size $2^{n-1}$. This requires a payoff tensor with $2^{n(n-1)}$ entries. Instead of generalizing prisoner's dilemma~\citep{rapoport1965prisoner} to $n$ players using $n$th order tensors, we translate it to a game with convex loss functions.
% In the classic prisoner's dilemma, each player would ideally defect and have their opponent cooperate.
%  which would make experiments with several players computationally intractable
Figure~\ref{fig:pd_table_example} shows how we can accomplish this.
% \begin{figure}[ht]
%     \centering
%     \includegraphics[scale=0.3]{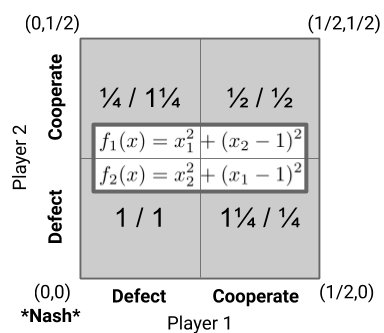}
%     \caption{A reformulation of the prisoner's dilemma using convex loss functions instead of a normal form payoff table.}
%     \label{fig:pd_table_example}
% \end{figure}
\begin{figure}[ht]
    \begin{subfigure}[b]{.23\textwidth}
        \centering
        \includegraphics[width=\textwidth]{figures/nplayer_pd/PD_Variant.png}
        \caption{Prisoner's Dilemma \label{fig:pd_table_example}}
    \end{subfigure}
    \begin{subfigure}[b]{.23\textwidth}
        \centering
        \includegraphics[scale=0.3]{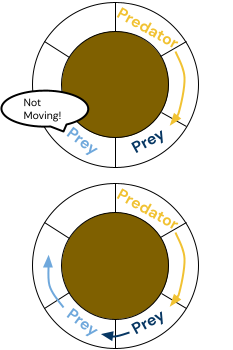}
        \caption{Trust-Your-Brother \label{fig:trustyourbro_game_visual}}
    \end{subfigure}
    \label{fig:domain_illustrations}
    \caption{(\subref{fig:pd_table_example}) A reformulation of the prisoner's dilemma using convex loss functions instead of a normal form payoff table. (\subref{fig:trustyourbro_game_visual}) A bot chases two agents around a table. The predator's prey can only escape if the other prey simultaneously moves out of the way. Selfish (top), cooperative (bottom).}
\end{figure}
% Essentially, each player prefers to defect against the other players and receive a penalty if one of the other players defects against it.
Generalizing this to $n$ players, we say that for all $i, j, k$ distinct, 1) player $i$ wants to defect against player $j$, 2) player $i$ wants player $j$ to defect against player $k$, and 3) player $i$ wants player $j$ to cooperate with itself. In other words, each player desires a free-for-all with the exception that no one attacks it. See~Appx.~\ref{pd_convex} for more details.

For three players, we can define the vector of loss functions with
\begin{align}
    \boldsymbol{f}(\boldsymbol{x}) &= \sum_{columns} \Big[ \Big( \begin{bmatrix}
    \boldsymbol{x}^\top \\ \boldsymbol{x}^\top \\ \boldsymbol{x}^\top
    \end{bmatrix}
    - C \Big)^2 \Big]% \quad \text{(\emph{elementwise})}
\end{align}
where $\boldsymbol{x} = [x_{ij}]$ is a column vector ($i \in [1,n], j \in [1,n-1]$) containing the players' (randomly initialized) strategies and $C$ is an $n \times n(n-1)$ matrix with entries that either equal $0$ or $c \in \mathbb{R}^+$.

Figure~\ref{fig:pd_results_statistics} shows that D3C with a randomly initialized strategy successfully minimizes the price of anarchy. In contrast, gradient descent learners provably converge to Nash at the origin with $\rho=\frac{n}{c(n-1)}$. The price of anarchy grows unbounded as $c \rightarrow 0^+$. We set $n=10$ and $c=1$ ($\rho=\frac{10}{9}$) in this experiment with additional settings explored in Appx.~\ref{appx:more_pd}.
% We set $n=10$ and $c=1$ in experiments for a mild $\rho=\frac{10}{9}$.
% This price of anarchy is maximized for $c=\sqrt{n}$ resulting in $\rho_{\max}=n^{\frac{3}{2}}/(2n-\sqrt{n})$, which grows unbounded in $n$.
\begin{figure}[ht!]
    \centering
    \includegraphics[width=0.23\textwidth]{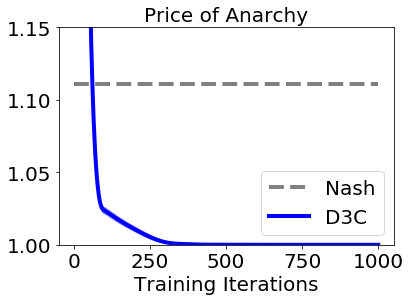}
    \includegraphics[width=0.23\textwidth]{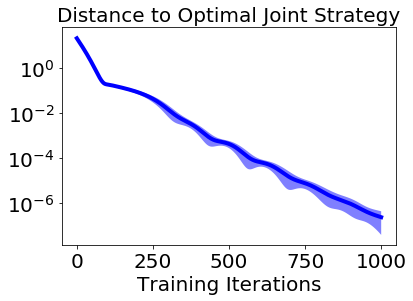}
    \caption{\textbf{Prisoner's Dilemma}\textemdash Convergence to $\rho=1$ (left) and the unique optimal joint strategy (right) over $1000$ runs. The shaded region captures $\pm$ $1$ standard deviation around the mean (too small to see on left). Gradient descent (not shown) provably converges to Nash.}
    \label{fig:pd_results_statistics}
\end{figure}

Figure~\ref{fig:pd_results_singlerun} highlights a single training run. Both agents are initialized to minimize their original loss, but then learn over training to minimize the mean of the two player losses.
\begin{figure}[ht!]
    \centering
    % \includegraphics[scale=0.25]{}
    % \includegraphics[scale=0.25]{}
    % \includegraphics[scale=0.25]{}
    % \includegraphics[scale=0.25]{}
    % \caption{(Top Left) Relative Loss Attention, $\frac{A_{ii}}{A_{i\ne i}}$ (Top Right) $f_i$ (Bottom Left) $f_i^A$ (Bottom Right) ?}
    \includegraphics[width=0.23\textwidth]{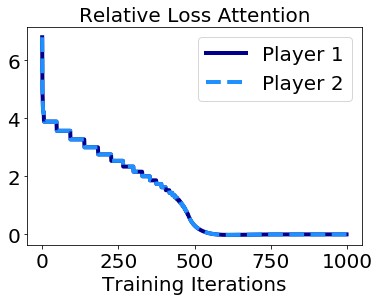}
    \includegraphics[width=0.23\textwidth]{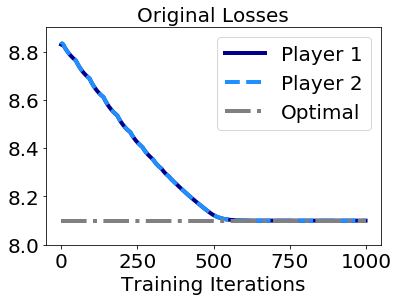}
    \caption{\textbf{Prisoner's Dilemma}\textemdash Single run: relative loss attention measured as $\ln\big(\frac{A_{ii}}{A_{j\ne i}}\big)$ (left) and player losses, $f_i$, (right).}
    \label{fig:pd_results_singlerun}
\end{figure}

% % \textcolor{blue}{cooperators outperform defectors in this game. implies they do better than if they all defected}
% Appendix~\ref{d3c_robust_in_pd} shows D3C outperforms gradient descent agents in a heterogenous prisoner's dilemma as well.

\subsection{Trust-Your-Brother}
% \begin{enumerate}
%     \item Describe game
%     \item Insert Figure with relative alliance trajectory for several runs
%     \item Insert Figure with welfare for several runs
% \end{enumerate}
In this game, a predator chases two prey around a table. The predator uses a hard-coded policy to move towards the nearest prey unless it is already adjacent to a prey, in which case it stays put. If the prey are equidistant to the predator, the predator selects its prey at random. The prey receive $0$ reward if they chose not to move and $-.01$ if they attempted to move. They additionally receive $-1$ if the predator is adjacent to them after moving.
% \begin{figure}[ht!]
%     \centering
%     \includegraphics[scale=0.3]{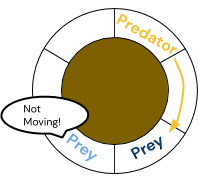}
%     \hspace{1.0cm}
%     \includegraphics[scale=0.3]{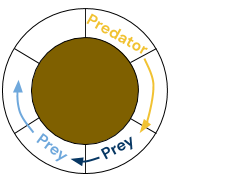}
%     \caption{\textbf{Trust-Your-Brother} A bot chases agents around a table. The predator's prey can only escape if the other prey simultaneously moves out of the way. Selfish (left), cooperative (right).}
%     \label{fig:trustyourbro_game_visual}
% \end{figure}

The prey employ linear softmax policies (no bias term) and train via REINFORCE~\citep{williams1992simple}. Both prey receive the same $2$-d observation vector. The first feature specifies the counter-clockwise distance to the predator minus the clockwise distance for the dark blue prey. The second feature specifies the same for the light blue prey.

Figure~\ref{fig:trustyourbro_training} shows D3C approaches maximal total return over training; this is achieved by the agents compromising on their original reward incentives and attending to those of the other agent instead.
\begin{figure}[ht!]
    \centering
    \includegraphics[scale=0.28]{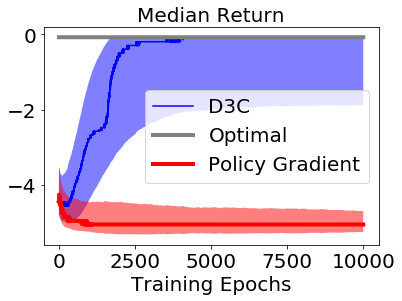}
    \includegraphics[scale=0.28]{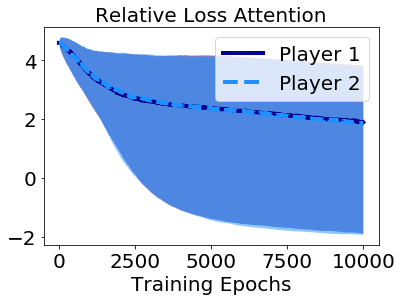}
    \caption{\textbf{Trust-Your-Brother}\textemdash Median return achieved during training for agents trained with policy gradient vs policy gradient augmented with D3C (left); relative reward attention is measured as $\ln\big(\frac{A_{ii}}{A_{j\ne i}}\big)$ where a positive value corresponds to selfish attention and a negative value to other-regarding (right). The $\pm$ $1$ standard deviation shading about the mean for both players overlaps ($1000$ runs).}
    \label{fig:trustyourbro_training}
\end{figure}
% \begin{figure}[ht]
%     \centering
%     \includegraphics[scale=0.4]{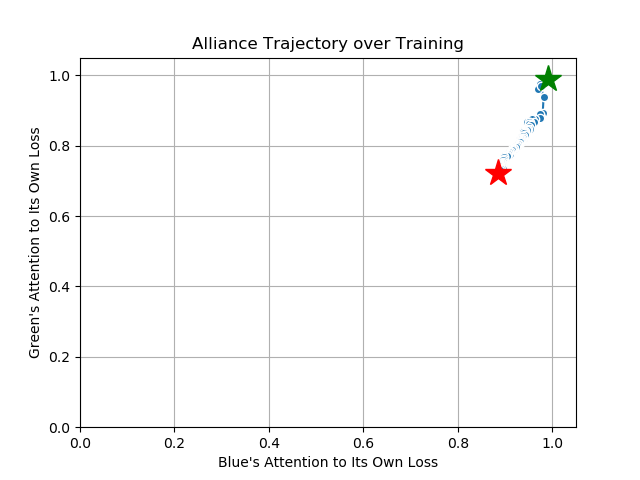}
%     \caption{Caption}
%     \label{fig:trustyourbro_results}
% \end{figure}

\subsection{Coin Dilemma}
% \begin{enumerate}
%     \item Describe game
%     \item Insert Figure with alliance, welfare, and proportion of own coins for perfect cooperation vs loss sharing vs selfish
% \end{enumerate}
In the Coins game~\citep{eccles2019imitation,lerer2017maintaining}, two agents move on a fully-observed $5 \times 5$ gridworld, on which coins of two types corresponding to each agent randomly spawn at each time step with probability $0.005$. When an agent moves into a square with a coin of either type, they get a reward of $1$. When an agent picks up a coin of the other player's type, the other agent receives $-2$. The episode lasts $500$ steps. Total reward is maximized when each agent picks up only coins of their own type, but players are tempted to pick up all coins.
\begin{figure}[ht!]
    \centering
    \begin{subfigure}[b]{.23\textwidth}
    \includegraphics[width=\textwidth]{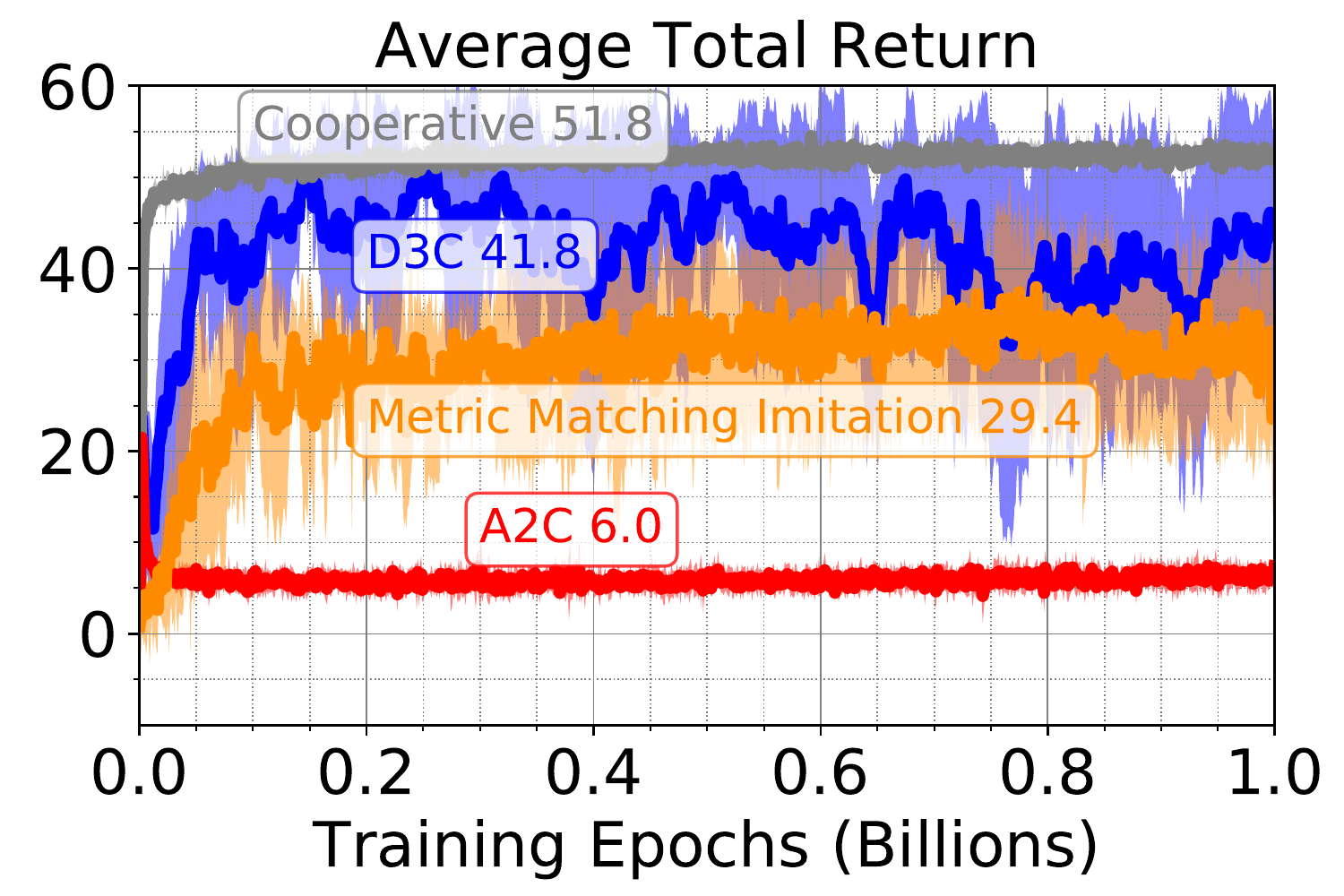}
    \caption{Coin Dilemma: 10 Run Avg \label{fig:coin_stats}}
    \end{subfigure}
    \begin{subfigure}[b]{.23\textwidth}
    \includegraphics[width=\textwidth]{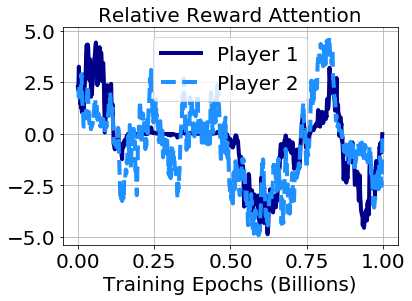}
    \caption{Individual Run: $\ln\big(\frac{A_{ii}}{A_{j\ne i}}\big)$ \label{fig:coin_ind_ratt}}
    \end{subfigure}
    \begin{subfigure}[b]{.23\textwidth}
    \includegraphics[width=\textwidth]{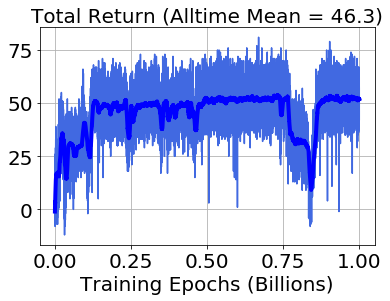}
    \caption{Individual Run: Return \label{fig:coin_ind_r}}
    \end{subfigure}
    \begin{subfigure}[b]{.23\textwidth}
    \includegraphics[width=\textwidth]{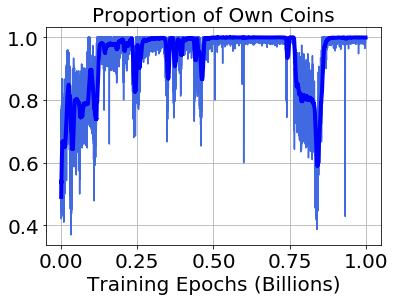}
    \caption{Individual Run: Coins \label{fig:coin_ind_c}}
    \end{subfigure}
    \vspace{-5pt}
    \caption{\textbf{Coin Dilemma}\textemdash (\subref{fig:coin_stats}) Mean total return over ten training runs for agents. Mean return over all epochs is reported in the legend. D3C hyperparameters were selected using five independent validation runs. Cooperative agents trained to maximize total return represent the best possible baseline. Shaded region captures $\pm$ $1$ standard deviation around the mean. (\subref{fig:coin_ind_ratt}-\subref{fig:coin_ind_c}) One training run ($A^0_{ii}=0.9$): relative reward attention measured as $\ln\big(\frac{A_{ii}}{A_{j\ne i}}\big)$ (\subref{fig:coin_ind_ratt}); sum of agent returns (\subref{fig:coin_ind_r}); \% of coins picked up that were the agent's type (\subref{fig:coin_ind_c}).}
    \label{fig:coin_dilemma_stats}
\end{figure}

D3C agents approach optimal cooperative returns (see Figure~\ref{fig:coin_dilemma_stats}\subref{fig:coin_stats}). We compare against Metric Matching Imitation~\citep{eccles2019learning}, which was previously tested on Coins and designed to exhibit reciprocal behavior towards co-players.
Figure~\ref{fig:coin_dilemma_stats}\subref{fig:coin_ind_ratt} shows D3C agents learning to cooperate, then temporarily defecting before rediscovering cooperation. Note that the relative reward attention of both players spikes towards selfish during this small defection window; agents collect more of their opponent's coins during this time. Oscillating between cooperation and defection occurred across various hyperparameter settings. Relative reward attention trajectories between agents appear to be reciprocal (see~Appx.~\ref{coin_reciprocity} for analysis).

\subsection{Cleanup}
\label{cleanup}
We provide additional results on Cleanup, a five-player gridworld game~\citep{hughes2018inequity}. Agents are rewarded for eating apples, but must keep a river clean to ensure apples receive sufficient nutrients. The option to freeload and only eat apples presents a social dilemma. D3C increases both welfare and individual reward over A2C (no loss mixing). We also observe that direct welfare maximization (Cooperation) always results in three agents collecting rewards from apples while two agents sacrifice themselves and clean the river. In contrast, D3C avoids this stark division of labor. Agents take turns on each task and all achieve some positive cumulative return.
% We were unable to reproduce the results of Inequity Aversion (IA) in~\citep{hughes2018inequity}\textemdash results matched A2C.
\begin{figure}[ht!]
    \begin{subfigure}[b]{.23\textwidth}
        \centering
        \includegraphics[width=\textwidth]{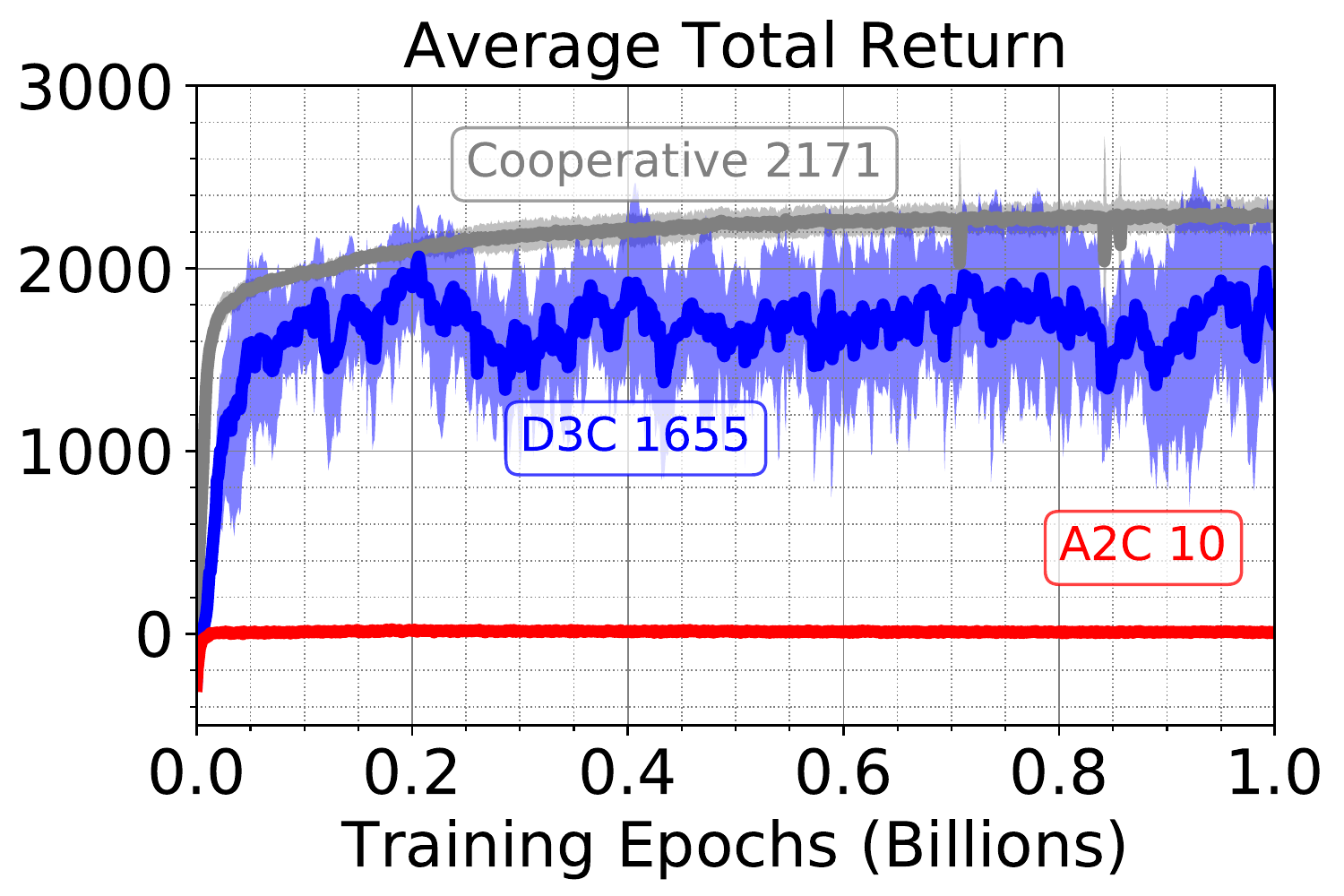}
        \caption{Cleanup: 10 Run Stats~\label{fig:huangpu}}
    \end{subfigure}
    \begin{subfigure}[b]{.23\textwidth}
        \centering
        \includegraphics[width=\textwidth]{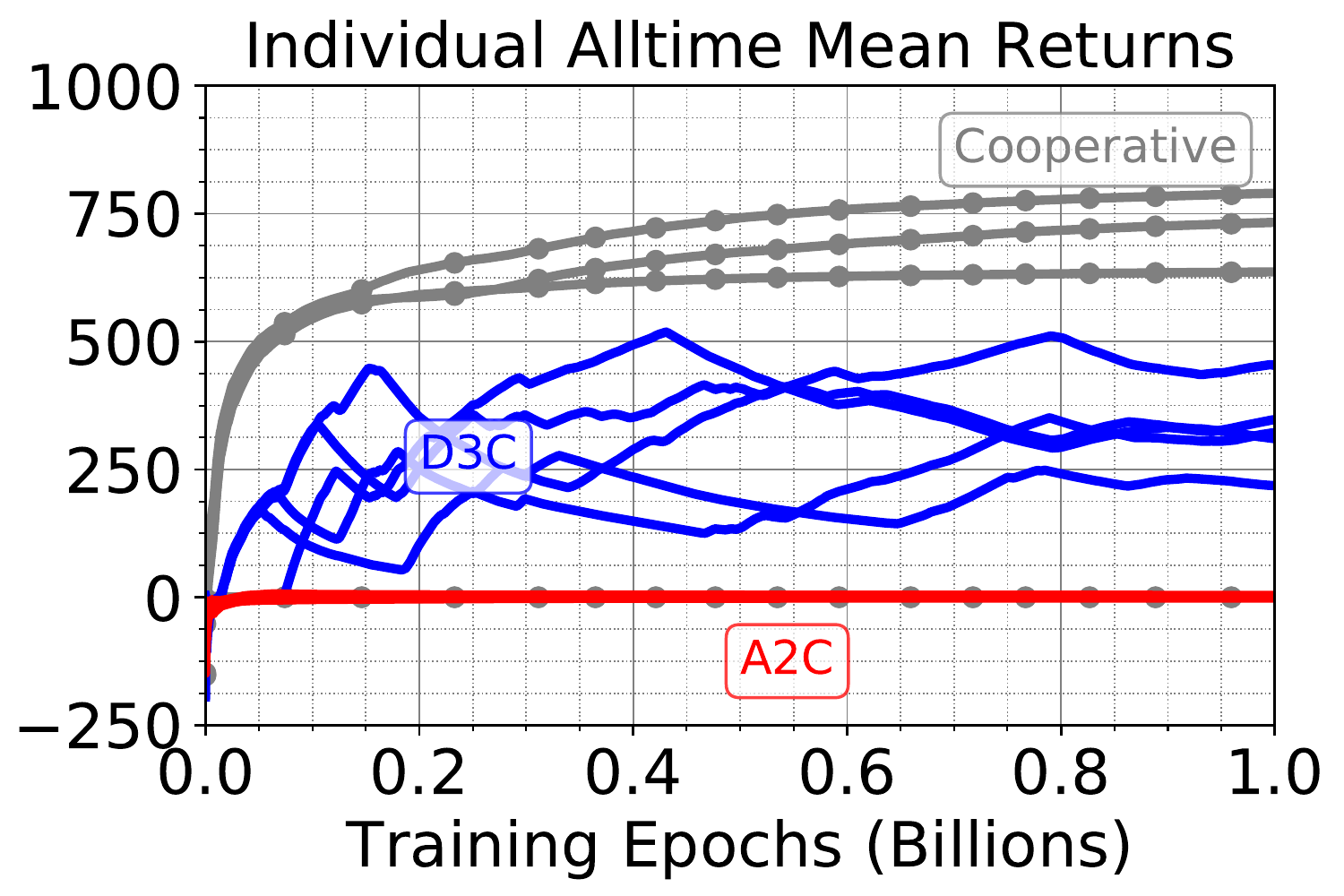}
        \caption{Individual Run~\label{fig:huangpu_ind_1}}
    \end{subfigure}
    \begin{subfigure}[b]{.23\textwidth}
        \centering
        \includegraphics[width=\textwidth]{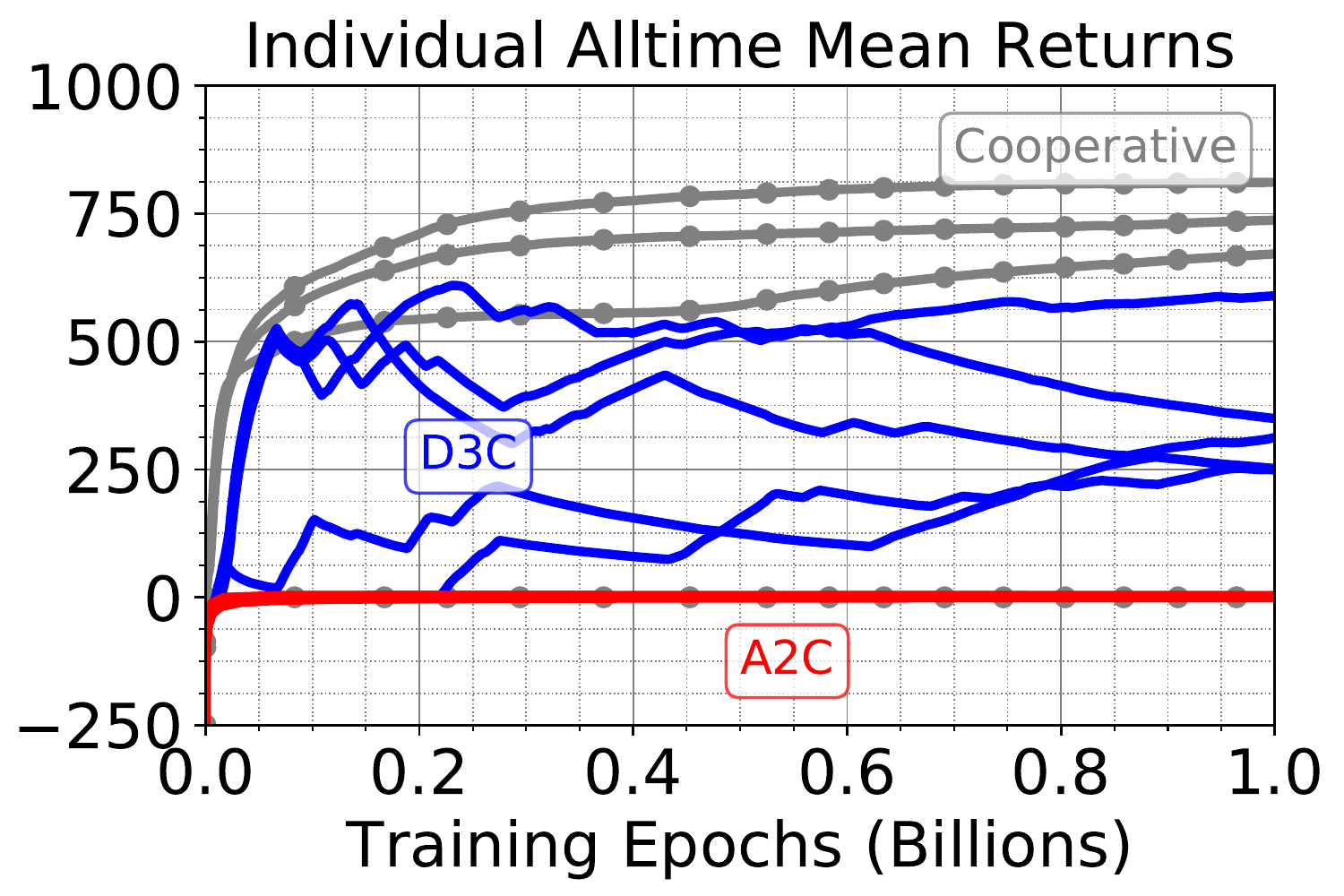}
        \caption{Individual Run~\label{fig:huangpu_ind_2}}
    \end{subfigure}
    \begin{subfigure}[b]{.23\textwidth}
        \centering
        \includegraphics[width=\textwidth]{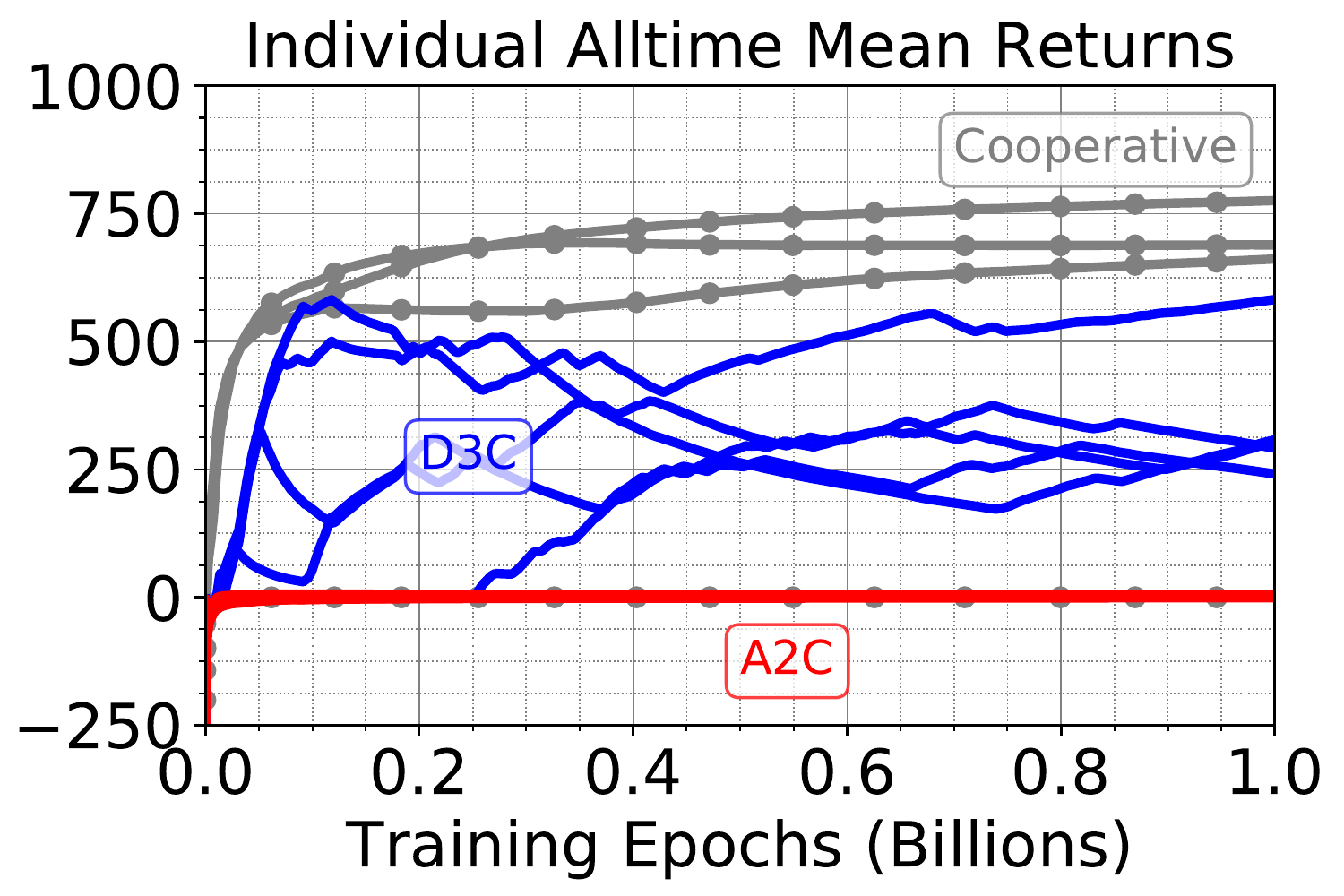}
        \caption{Individual Run~\label{fig:huangpu_ind_3}}
    \end{subfigure}
    \vspace{-5pt}
    \caption{Cleanup (\subref{fig:huangpu}) Mean total returns over ten training runs. D3C hyperparameters were selected using five independent validation runs. Cooperative agents trained to maximize total return represent the best possible baseline. Shaded region captures $\pm$ $1$ standard deviation around the mean. (\subref{fig:huangpu_ind_1}-\subref{fig:huangpu_ind_3}) Three randomly selected runs. Each curve shows the mean return up to the current epoch for 1 of 5 agents.}
    \label{fig:five_player}
\end{figure}

\section{Conclusion}
% Directly maximizing welfare can solve many social dilemmas, but it fails to draw out the rich behavior we would expect from agents in other interesting scenarios.
We formulate learning incentives as a price of anarchy minimization problem and propose a decentralized, gradient-based approach (D3C) that incrementally adapts agent incentives to the environment at hand. We demonstrate its effectiveness on achieving near-optimal agent outcomes in socially adversarial environments.
% Importantly, it also generates reasonable responses where welfare maximization is indifferent.

It is conceptually possible to scale our approach to very large populations through randomly sharing incentives according to the learned mixture weights or sparse optimization over the simplex~\citep{pilanci2012recovery,kyrillidis2013sparse,li2016methods}, but we leave this challenge to future work.

% Acknowledgements should only appear in the accepted version.
% \section*{Acknowledgements}
\begin{acks}
We are grateful to Jan Balaguer for fruitful discussions and advice on revising parts of the manuscript.
\end{acks}

\bibliography{main}

%%% -*-BibTeX-*-
%%% Do NOT edit. File created by BibTeX with style
%%% ACM-Reference-Format-Journals [18-Jan-2012].

\begin{thebibliography}{62}

%%% ====================================================================
%%% NOTE TO THE USER: you can override these defaults by providing
%%% customized versions of any of these macros before the \bibliography
%%% command.  Each of them MUST provide its own final punctuation,
%%% except for \shownote{}, \showDOI{}, and \showURL{}.  The latter two
%%% do not use final punctuation, in order to avoid confusing it with
%%% the Web address.
%%%
%%% To suppress output of a particular field, define its macro to expand
%%% to an empty string, or better, \unskip, like this:
%%%
%%% \newcommand{\showDOI}[1]{\unskip}   % LaTeX syntax
%%%
%%% \def \showDOI #1{\unskip}           % plain TeX syntax
%%%
%%% ====================================================================

\ifx \showCODEN    \undefined \def \showCODEN     #1{\unskip}     \fi
\ifx \showDOI      \undefined \def \showDOI       #1{#1}\fi
\ifx \showISBNx    \undefined \def \showISBNx     #1{\unskip}     \fi
\ifx \showISBNxiii \undefined \def \showISBNxiii  #1{\unskip}     \fi
\ifx \showISSN     \undefined \def \showISSN      #1{\unskip}     \fi
\ifx \showLCCN     \undefined \def \showLCCN      #1{\unskip}     \fi
\ifx \shownote     \undefined \def \shownote      #1{#1}          \fi
\ifx \showarticletitle \undefined \def \showarticletitle #1{#1}   \fi
\ifx \showURL      \undefined \def \showURL       {\relax}        \fi
% The following commands are used for tagged output and should be
% invisible to TeX
\providecommand\bibfield[2]{#2}
\providecommand\bibinfo[2]{#2}
\providecommand\natexlab[1]{#1}
\providecommand\showeprint[2][]{arXiv:#2}

\bibitem[\protect\citeauthoryear{Agogino and Turner}{Agogino and
  Turner}{2005}]%
        {agogino2005multi}
\bibfield{author}{\bibinfo{person}{Adrian Agogino} {and} \bibinfo{person}{Kagan
  Turner}.} \bibinfo{year}{2005}\natexlab{}.
\newblock \showarticletitle{Multi-agent reward analysis for learning in noisy
  domains}. In \bibinfo{booktitle}{\emph{Proceedings of the Fourth
  International Joint Conference on Autonomous Agents and Multiagent Systems}}.
  \bibinfo{pages}{81--88}.
\newblock


\bibitem[\protect\citeauthoryear{Agogino and Tumer}{Agogino and Tumer}{2008}]%
        {agogino2008analyzing}
\bibfield{author}{\bibinfo{person}{Adrian~K Agogino} {and}
  \bibinfo{person}{Kagan Tumer}.} \bibinfo{year}{2008}\natexlab{}.
\newblock \showarticletitle{Analyzing and visualizing multiagent rewards in
  dynamic and stochastic domains}.
\newblock \bibinfo{journal}{\emph{Autonomous Agents and Multiagent Systems}}
  \bibinfo{volume}{17}, \bibinfo{number}{2} (\bibinfo{year}{2008}),
  \bibinfo{pages}{320--338}.
\newblock


\bibitem[\protect\citeauthoryear{Arrow}{Arrow}{1970}]%
        {arrow2012social}
\bibfield{author}{\bibinfo{person}{Kenneth~J Arrow}.}
  \bibinfo{year}{1970}\natexlab{}.
\newblock \bibinfo{booktitle}{\emph{Social choice and individual values}}.
  Vol.~\bibinfo{volume}{12}.
\newblock \bibinfo{publisher}{Yale university press}.
\newblock


\bibitem[\protect\citeauthoryear{Beck and Teboulle}{Beck and Teboulle}{2003}]%
        {beck2003mirror}
\bibfield{author}{\bibinfo{person}{Amir Beck} {and} \bibinfo{person}{Marc
  Teboulle}.} \bibinfo{year}{2003}\natexlab{}.
\newblock \showarticletitle{Mirror descent and nonlinear projected subgradient
  methods for convex optimization}.
\newblock \bibinfo{journal}{\emph{Operations Research Letters}}
  \bibinfo{volume}{31}, \bibinfo{number}{3} (\bibinfo{year}{2003}),
  \bibinfo{pages}{167--175}.
\newblock


\bibitem[\protect\citeauthoryear{Beckmann, McGuire, and Winsten}{Beckmann
  et~al\mbox{.}}{1956}]%
        {beckmann1956studies}
\bibfield{author}{\bibinfo{person}{Martin Beckmann}, \bibinfo{person}{Charles~B
  McGuire}, {and} \bibinfo{person}{Christopher~B Winsten}.}
  \bibinfo{year}{1956}\natexlab{}.
\newblock \bibinfo{booktitle}{\emph{Studies in the Economics of
  Transportation}}.
\newblock \bibinfo{type}{{T}echnical {R}eport}.
\newblock


\bibitem[\protect\citeauthoryear{Bertsimas, Farias, and Trichakis}{Bertsimas
  et~al\mbox{.}}{2011}]%
        {bertsimas2011price}
\bibfield{author}{\bibinfo{person}{Dimitris Bertsimas},
  \bibinfo{person}{Vivek~F Farias}, {and} \bibinfo{person}{Nikolaos
  Trichakis}.} \bibinfo{year}{2011}\natexlab{}.
\newblock \showarticletitle{The price of fairness}.
\newblock \bibinfo{journal}{\emph{Operations Research}} \bibinfo{volume}{59},
  \bibinfo{number}{1} (\bibinfo{year}{2011}), \bibinfo{pages}{17--31}.
\newblock


\bibitem[\protect\citeauthoryear{Bertsimas, Farias, and Trichakis}{Bertsimas
  et~al\mbox{.}}{2012}]%
        {bertsimas2012efficiency}
\bibfield{author}{\bibinfo{person}{Dimitris Bertsimas},
  \bibinfo{person}{Vivek~F Farias}, {and} \bibinfo{person}{Nikolaos
  Trichakis}.} \bibinfo{year}{2012}\natexlab{}.
\newblock \showarticletitle{On the efficiency-fairness trade-off}.
\newblock \bibinfo{journal}{\emph{Management Science}} \bibinfo{volume}{58},
  \bibinfo{number}{12} (\bibinfo{year}{2012}), \bibinfo{pages}{2234--2250}.
\newblock


\bibitem[\protect\citeauthoryear{Braess}{Braess}{1968}]%
        {braess1968paradoxon}
\bibfield{author}{\bibinfo{person}{Dietrich Braess}.}
  \bibinfo{year}{1968}\natexlab{}.
\newblock \showarticletitle{{\"U}ber ein Paradoxon aus der Verkehrsplanung}.
\newblock \bibinfo{journal}{\emph{Unternehmensforschung}} \bibinfo{volume}{12},
  \bibinfo{number}{1} (\bibinfo{year}{1968}), \bibinfo{pages}{258--268}.
\newblock


\bibitem[\protect\citeauthoryear{Cavazzuti, Pappalardo, Passacantando,
  et~al\mbox{.}}{Cavazzuti et~al\mbox{.}}{2002}]%
        {cavazzuti2002nash}
\bibfield{author}{\bibinfo{person}{Ennio Cavazzuti}, \bibinfo{person}{Massimo
  Pappalardo}, \bibinfo{person}{Mauro Passacantando}, {et~al\mbox{.}}}
  \bibinfo{year}{2002}\natexlab{}.
\newblock \showarticletitle{Nash equilibria, variational inequalities, and
  dynamical systems}.
\newblock \bibinfo{journal}{\emph{Journal of Optimization Theory and
  Applications}} \bibinfo{volume}{114}, \bibinfo{number}{3}
  (\bibinfo{year}{2002}), \bibinfo{pages}{491--506}.
\newblock


\bibitem[\protect\citeauthoryear{Clarke}{Clarke}{1971}]%
        {clarke1971multipart}
\bibfield{author}{\bibinfo{person}{Edward~H Clarke}.}
  \bibinfo{year}{1971}\natexlab{}.
\newblock \showarticletitle{Multipart pricing of public goods}.
\newblock \bibinfo{journal}{\emph{Public choice}} (\bibinfo{year}{1971}),
  \bibinfo{pages}{17--33}.
\newblock


\bibitem[\protect\citeauthoryear{Eccles, Hughes, Kram{\'a}r, Wheelwright, and
  Leibo}{Eccles et~al\mbox{.}}{2019a}]%
        {eccles2019imitation}
\bibfield{author}{\bibinfo{person}{Tom Eccles}, \bibinfo{person}{Edward
  Hughes}, \bibinfo{person}{J{\'a}nos Kram{\'a}r}, \bibinfo{person}{Steven
  Wheelwright}, {and} \bibinfo{person}{Joel~Z Leibo}.}
  \bibinfo{year}{2019}\natexlab{a}.
\newblock \showarticletitle{The Imitation Game: Learned Reciprocity in Markov
  games}. In \bibinfo{booktitle}{\emph{Proceedings of the 18th International
  Conference on Autonomous Agents and Multiagent Systems}}. International
  Foundation for Autonomous Agents and Multiagent Systems,
  \bibinfo{pages}{1934--1936}.
\newblock


\bibitem[\protect\citeauthoryear{Eccles, Hughes, Kram{\'a}r, Wheelwright, and
  Leibo}{Eccles et~al\mbox{.}}{2019b}]%
        {eccles2019learning}
\bibfield{author}{\bibinfo{person}{Tom Eccles}, \bibinfo{person}{Edward
  Hughes}, \bibinfo{person}{J{\'a}nos Kram{\'a}r}, \bibinfo{person}{Steven
  Wheelwright}, {and} \bibinfo{person}{Joel~Z Leibo}.}
  \bibinfo{year}{2019}\natexlab{b}.
\newblock \showarticletitle{Learning Reciprocity in Complex Sequential Social
  Dilemmas}.
\newblock \bibinfo{journal}{\emph{arXiv preprint arXiv:1903.08082}}
  (\bibinfo{year}{2019}).
\newblock


\bibitem[\protect\citeauthoryear{Edelman and Gally}{Edelman and Gally}{2001}]%
        {edelman2001degeneracy}
\bibfield{author}{\bibinfo{person}{Gerald~M Edelman} {and}
  \bibinfo{person}{Joseph~A Gally}.} \bibinfo{year}{2001}\natexlab{}.
\newblock \showarticletitle{Degeneracy and complexity in biological systems}.
\newblock \bibinfo{journal}{\emph{Proceedings of the National Academy of
  Sciences}} \bibinfo{volume}{98}, \bibinfo{number}{24} (\bibinfo{year}{2001}),
  \bibinfo{pages}{13763--13768}.
\newblock


\bibitem[\protect\citeauthoryear{Espeholt, Soyer, Munos, Simonyan, Mnih, Ward,
  Doron, Firoiu, Harley, Dunning, et~al\mbox{.}}{Espeholt
  et~al\mbox{.}}{2018}]%
        {espeholt2018impala}
\bibfield{author}{\bibinfo{person}{Lasse Espeholt}, \bibinfo{person}{Hubert
  Soyer}, \bibinfo{person}{Remi Munos}, \bibinfo{person}{Karen Simonyan},
  \bibinfo{person}{Volodymir Mnih}, \bibinfo{person}{Tom Ward},
  \bibinfo{person}{Yotam Doron}, \bibinfo{person}{Vlad Firoiu},
  \bibinfo{person}{Tim Harley}, \bibinfo{person}{Iain Dunning},
  {et~al\mbox{.}}} \bibinfo{year}{2018}\natexlab{}.
\newblock \showarticletitle{Impala: Scalable distributed deep-{RL} with
  importance weighted actor-learner architectures}.
\newblock \bibinfo{journal}{\emph{arXiv preprint arXiv:1802.01561}}
  (\bibinfo{year}{2018}).
\newblock


\bibitem[\protect\citeauthoryear{Facchinei and Pang}{Facchinei and
  Pang}{2007}]%
        {facchinei2007finite}
\bibfield{author}{\bibinfo{person}{Francisco Facchinei} {and}
  \bibinfo{person}{Jong-Shi Pang}.} \bibinfo{year}{2007}\natexlab{}.
\newblock \bibinfo{booktitle}{\emph{Finite-dimensional variational inequalities
  and complementarity problems}}.
\newblock \bibinfo{publisher}{Springer Science \& Business Media}.
\newblock


\bibitem[\protect\citeauthoryear{Foerster, Chen, Al-Shedivat, Whiteson, Abbeel,
  and Mordatch}{Foerster et~al\mbox{.}}{2018}]%
        {foerster2018learning}
\bibfield{author}{\bibinfo{person}{Jakob Foerster}, \bibinfo{person}{Richard~Y
  Chen}, \bibinfo{person}{Maruan Al-Shedivat}, \bibinfo{person}{Shimon
  Whiteson}, \bibinfo{person}{Pieter Abbeel}, {and} \bibinfo{person}{Igor
  Mordatch}.} \bibinfo{year}{2018}\natexlab{}.
\newblock \showarticletitle{Learning with opponent-learning awareness}. In
  \bibinfo{booktitle}{\emph{Proceedings of the 17th International Conference on
  Autonomous Agents and Multiagent Systems}}. International Foundation for
  Autonomous Agents and Multiagent Systems, \bibinfo{pages}{122--130}.
\newblock


\bibitem[\protect\citeauthoryear{Gemici, Koutsoupias, Monnot, Papadimitriou,
  and Piliouras}{Gemici et~al\mbox{.}}{2018}]%
        {gemici2018wealth}
\bibfield{author}{\bibinfo{person}{Kurtulu{\c{s}} Gemici},
  \bibinfo{person}{Elias Koutsoupias}, \bibinfo{person}{Barnab{\'e} Monnot},
  \bibinfo{person}{Christos Papadimitriou}, {and} \bibinfo{person}{Georgios
  Piliouras}.} \bibinfo{year}{2018}\natexlab{}.
\newblock \showarticletitle{Wealth inequality and the price of anarchy}.
\newblock \bibinfo{journal}{\emph{arXiv preprint arXiv:1802.09269}}
  (\bibinfo{year}{2018}).
\newblock


\bibitem[\protect\citeauthoryear{Green and Laffont}{Green and Laffont}{1977}]%
        {green1977characterization}
\bibfield{author}{\bibinfo{person}{Jerry Green} {and}
  \bibinfo{person}{Jean-Jacques Laffont}.} \bibinfo{year}{1977}\natexlab{}.
\newblock \showarticletitle{Characterization of satisfactory mechanisms for the
  revelation of preferences for public goods}.
\newblock \bibinfo{journal}{\emph{Econometrica: Journal of the Econometric
  Society}} (\bibinfo{year}{1977}), \bibinfo{pages}{427--438}.
\newblock


\bibitem[\protect\citeauthoryear{Green and Laffont}{Green and Laffont}{1979}]%
        {green1979incentives}
\bibfield{author}{\bibinfo{person}{Jerry~R Green} {and}
  \bibinfo{person}{Jean-Jacques Laffont}.} \bibinfo{year}{1979}\natexlab{}.
\newblock \showarticletitle{Incentives in public decision making}.
\newblock  (\bibinfo{year}{1979}).
\newblock


\bibitem[\protect\citeauthoryear{Hardin}{Hardin}{1968}]%
        {hardin1968tragedy}
\bibfield{author}{\bibinfo{person}{Garrett Hardin}.}
  \bibinfo{year}{1968}\natexlab{}.
\newblock \showarticletitle{The tragedy of the commons}.
\newblock \bibinfo{journal}{\emph{Science}} \bibinfo{volume}{162},
  \bibinfo{number}{3859} (\bibinfo{year}{1968}), \bibinfo{pages}{1243--1248}.
\newblock


\bibitem[\protect\citeauthoryear{Hartline and Roughgarden}{Hartline and
  Roughgarden}{2008}]%
        {hartline2008optimal}
\bibfield{author}{\bibinfo{person}{Jason~D Hartline} {and} \bibinfo{person}{Tim
  Roughgarden}.} \bibinfo{year}{2008}\natexlab{}.
\newblock \showarticletitle{Optimal mechanism design and money burning}. In
  \bibinfo{booktitle}{\emph{Proceedings of the Fortieth Annual ACM Symposium on
  Theory of Computing}}. ACM, \bibinfo{pages}{75--84}.
\newblock


\bibitem[\protect\citeauthoryear{Hostallero, Kim, Moon, Son, Kang, and
  Yi}{Hostallero et~al\mbox{.}}{2020}]%
        {hostallero2020inducing}
\bibfield{author}{\bibinfo{person}{David~Earl Hostallero},
  \bibinfo{person}{Daewoo Kim}, \bibinfo{person}{Sangwoo Moon},
  \bibinfo{person}{Kyunghwan Son}, \bibinfo{person}{Wan~Ju Kang}, {and}
  \bibinfo{person}{Yung Yi}.} \bibinfo{year}{2020}\natexlab{}.
\newblock \showarticletitle{Inducing cooperation through reward reshaping based
  on peer evaluations in deep multi-agent reinforcement learning}. In
  \bibinfo{booktitle}{\emph{Proceedings of the 19th International Conference on
  Autonomous Agents and Multiagent Systems}}. \bibinfo{pages}{520--528}.
\newblock


\bibitem[\protect\citeauthoryear{Hughes, Leibo, Phillips, Tuyls,
  Due{\~n}ez-Guzman, Casta{\~n}eda, Dunning, Zhu, McKee, Koster,
  et~al\mbox{.}}{Hughes et~al\mbox{.}}{2018}]%
        {hughes2018inequity}
\bibfield{author}{\bibinfo{person}{Edward Hughes}, \bibinfo{person}{Joel~Z
  Leibo}, \bibinfo{person}{Matthew Phillips}, \bibinfo{person}{Karl Tuyls},
  \bibinfo{person}{Edgar Due{\~n}ez-Guzman},
  \bibinfo{person}{Antonio~Garc{\'\i}a Casta{\~n}eda}, \bibinfo{person}{Iain
  Dunning}, \bibinfo{person}{Tina Zhu}, \bibinfo{person}{Kevin McKee},
  \bibinfo{person}{Raphael Koster}, {et~al\mbox{.}}}
  \bibinfo{year}{2018}\natexlab{}.
\newblock \showarticletitle{Inequity aversion improves cooperation in
  intertemporal social dilemmas}. In \bibinfo{booktitle}{\emph{Advances in
  Neural Information Processing Systems}}. \bibinfo{pages}{3326--3336}.
\newblock


\bibitem[\protect\citeauthoryear{Imhof, Fudenberg, and Nowak}{Imhof
  et~al\mbox{.}}{2007}]%
        {imhof2007tit}
\bibfield{author}{\bibinfo{person}{Lorens~A Imhof}, \bibinfo{person}{Drew
  Fudenberg}, {and} \bibinfo{person}{Martin~A Nowak}.}
  \bibinfo{year}{2007}\natexlab{}.
\newblock \showarticletitle{Tit-for-tat or win-stay, lose-shift?}
\newblock \bibinfo{journal}{\emph{Journal of Theoretical Biology}}
  \bibinfo{volume}{247}, \bibinfo{number}{3} (\bibinfo{year}{2007}),
  \bibinfo{pages}{574--580}.
\newblock


\bibitem[\protect\citeauthoryear{Kelley and Thibaut}{Kelley and
  Thibaut}{1978}]%
        {kelley1978interpersonal}
\bibfield{author}{\bibinfo{person}{Harold~H. Kelley} {and}
  \bibinfo{person}{John~W. Thibaut}.} \bibinfo{year}{1978}\natexlab{}.
\newblock \bibinfo{booktitle}{\emph{Interpersonal Relations: A Theory of
  Interdependence}}.
\newblock \bibinfo{publisher}{John Wiley \& Sons}.
\newblock


\bibitem[\protect\citeauthoryear{Kyrillidis, Becker, Cevher, and
  Koch}{Kyrillidis et~al\mbox{.}}{2013}]%
        {kyrillidis2013sparse}
\bibfield{author}{\bibinfo{person}{Anastasios Kyrillidis},
  \bibinfo{person}{Stephen Becker}, \bibinfo{person}{Volkan Cevher}, {and}
  \bibinfo{person}{Christoph Koch}.} \bibinfo{year}{2013}\natexlab{}.
\newblock \showarticletitle{Sparse projections onto the simplex}. In
  \bibinfo{booktitle}{\emph{International Conference on Machine Learning}}.
  \bibinfo{pages}{235--243}.
\newblock


\bibitem[\protect\citeauthoryear{Leibo, Zambaldi, Lanctot, Marecki, and
  Graepel}{Leibo et~al\mbox{.}}{2017}]%
        {leibo2017multi}
\bibfield{author}{\bibinfo{person}{Joel~Z Leibo}, \bibinfo{person}{Vinicius
  Zambaldi}, \bibinfo{person}{Marc Lanctot}, \bibinfo{person}{Janusz Marecki},
  {and} \bibinfo{person}{Thore Graepel}.} \bibinfo{year}{2017}\natexlab{}.
\newblock \showarticletitle{Multi-agent reinforcement learning in sequential
  social dilemmas}.
\newblock \bibinfo{journal}{\emph{arXiv preprint arXiv:1702.03037}}
  (\bibinfo{year}{2017}).
\newblock


\bibitem[\protect\citeauthoryear{Lerer and Peysakhovich}{Lerer and
  Peysakhovich}{2017}]%
        {lerer2017maintaining}
\bibfield{author}{\bibinfo{person}{Adam Lerer} {and} \bibinfo{person}{Alexander
  Peysakhovich}.} \bibinfo{year}{2017}\natexlab{}.
\newblock \showarticletitle{Maintaining cooperation in complex social dilemmas
  using deep reinforcement learning}.
\newblock \bibinfo{journal}{\emph{arXiv preprint arXiv:1707.01068}}
  (\bibinfo{year}{2017}).
\newblock


\bibitem[\protect\citeauthoryear{Letcher, Foerster, Balduzzi, Rockt{\"a}schel,
  and Whiteson}{Letcher et~al\mbox{.}}{2018}]%
        {letcher2018stable}
\bibfield{author}{\bibinfo{person}{Alistair Letcher}, \bibinfo{person}{Jakob
  Foerster}, \bibinfo{person}{David Balduzzi}, \bibinfo{person}{Tim
  Rockt{\"a}schel}, {and} \bibinfo{person}{Shimon Whiteson}.}
  \bibinfo{year}{2018}\natexlab{}.
\newblock \showarticletitle{Stable opponent shaping in differentiable games}.
\newblock \bibinfo{journal}{\emph{arXiv preprint arXiv:1811.08469}}
  (\bibinfo{year}{2018}).
\newblock


\bibitem[\protect\citeauthoryear{Li, Rangapuram, and Slawski}{Li
  et~al\mbox{.}}{2016}]%
        {li2016methods}
\bibfield{author}{\bibinfo{person}{Ping Li}, \bibinfo{person}{Syama~Sundar
  Rangapuram}, {and} \bibinfo{person}{Martin Slawski}.}
  \bibinfo{year}{2016}\natexlab{}.
\newblock \showarticletitle{Methods for sparse and low-rank recovery under
  simplex constraints}.
\newblock \bibinfo{journal}{\emph{arXiv preprint arXiv:1605.00507}}
  (\bibinfo{year}{2016}).
\newblock


\bibitem[\protect\citeauthoryear{Lupu and Precup}{Lupu and Precup}{2020}]%
        {lupu2020gifting}
\bibfield{author}{\bibinfo{person}{Andrei Lupu} {and} \bibinfo{person}{Doina
  Precup}.} \bibinfo{year}{2020}\natexlab{}.
\newblock \showarticletitle{Gifting in multi-agent reinforcement learning}. In
  \bibinfo{booktitle}{\emph{Proceedings of the 19th International Conference on
  Autonomous Agents and Multiagent Systems}}. \bibinfo{pages}{789--797}.
\newblock


\bibitem[\protect\citeauthoryear{McKee, Gemp, McWilliams,
  Du{\'e}{\~n}ez-Guzm{\'a}n, Hughes, and Leibo}{McKee et~al\mbox{.}}{2020}]%
        {mckeesocial}
\bibfield{author}{\bibinfo{person}{Kevin~R McKee}, \bibinfo{person}{Ian Gemp},
  \bibinfo{person}{Brian McWilliams}, \bibinfo{person}{Edgar~A
  Du{\'e}{\~n}ez-Guzm{\'a}n}, \bibinfo{person}{Edward Hughes}, {and}
  \bibinfo{person}{Joel~Z Leibo}.} \bibinfo{year}{2020}\natexlab{}.
\newblock \showarticletitle{Social Diversity and Social Preferences in
  Mixed-Motive Reinforcement Learning}. In
  \bibinfo{booktitle}{\emph{Proceedings of the 19th International Conference on
  Autonomous Agents and Multiagent Systems}}. \bibinfo{pages}{869--877}.
\newblock


\bibitem[\protect\citeauthoryear{Murray}{Murray}{1994}]%
        {murray1994drunk}
\bibfield{author}{\bibinfo{person}{Michael~P Murray}.}
  \bibinfo{year}{1994}\natexlab{}.
\newblock \showarticletitle{A drunk and her dog: An illustration of
  cointegration and error correction}.
\newblock \bibinfo{journal}{\emph{The American Statistician}}
  \bibinfo{volume}{48}, \bibinfo{number}{1} (\bibinfo{year}{1994}),
  \bibinfo{pages}{37--39}.
\newblock


\bibitem[\protect\citeauthoryear{Myerson and Satterthwaite}{Myerson and
  Satterthwaite}{1983}]%
        {myerson1983efficient}
\bibfield{author}{\bibinfo{person}{Roger~B Myerson} {and}
  \bibinfo{person}{Mark~A Satterthwaite}.} \bibinfo{year}{1983}\natexlab{}.
\newblock \showarticletitle{Efficient mechanisms for bilateral trading}.
\newblock \bibinfo{journal}{\emph{Journal of Economic Theory}}
  \bibinfo{volume}{29}, \bibinfo{number}{2} (\bibinfo{year}{1983}),
  \bibinfo{pages}{265--281}.
\newblock


\bibitem[\protect\citeauthoryear{Nagurney and Zhang}{Nagurney and
  Zhang}{2012}]%
        {nagurney2012projected}
\bibfield{author}{\bibinfo{person}{Anna Nagurney} {and} \bibinfo{person}{Ding
  Zhang}.} \bibinfo{year}{2012}\natexlab{}.
\newblock \bibinfo{booktitle}{\emph{Projected dynamical systems and variational
  inequalities with applications}}. Vol.~\bibinfo{volume}{2}.
\newblock \bibinfo{publisher}{Springer Science \& Business Media}.
\newblock


\bibitem[\protect\citeauthoryear{Neuman and Barbaro}{Neuman and
  Barbaro}{2009}]%
        {neumanbarbaro2009}
\bibfield{author}{\bibinfo{person}{William Neuman} {and}
  \bibinfo{person}{Michael Barbaro}.} \bibinfo{year}{2009}\natexlab{}.
\newblock \showarticletitle{Mayor Plans to Close Parts of Broadway to Traffic}.
\newblock
  \bibinfo{howpublished}{\url{https://www.nytimes.com/2009/02/26/nyregion/26broadway.html}}.
\newblock \bibinfo{journal}{\emph{NYTimes.com}} (\bibinfo{date}{Feb}
  \bibinfo{year}{2009}).
\newblock


\bibitem[\protect\citeauthoryear{Nisan, Roughgarden, Tardos, and
  Vazirani}{Nisan et~al\mbox{.}}{2007}]%
        {nisan2007algorithmic}
\bibfield{author}{\bibinfo{person}{Noam Nisan}, \bibinfo{person}{Tim
  Roughgarden}, \bibinfo{person}{Eva Tardos}, {and} \bibinfo{person}{Vijay~V
  Vazirani}.} \bibinfo{year}{2007}\natexlab{}.
\newblock \bibinfo{booktitle}{\emph{Algorithmic game theory}}.
\newblock \bibinfo{publisher}{Cambridge university press}.
\newblock


\bibitem[\protect\citeauthoryear{Nowak and Sigmund}{Nowak and Sigmund}{1993}]%
        {nowak1993strategy}
\bibfield{author}{\bibinfo{person}{Martin Nowak} {and} \bibinfo{person}{Karl
  Sigmund}.} \bibinfo{year}{1993}\natexlab{}.
\newblock \showarticletitle{A strategy of win-stay, lose-shift that outperforms
  tit-for-tat in the Prisoner's Dilemma game}.
\newblock \bibinfo{journal}{\emph{Nature}} \bibinfo{volume}{364},
  \bibinfo{number}{6432} (\bibinfo{year}{1993}), \bibinfo{pages}{56}.
\newblock


\bibitem[\protect\citeauthoryear{Oliphant}{Oliphant}{2006}]%
        {oliphant2006guide}
\bibfield{author}{\bibinfo{person}{Travis~E Oliphant}.}
  \bibinfo{year}{2006}\natexlab{}.
\newblock \bibinfo{booktitle}{\emph{A guide to NumPy}}.
  Vol.~\bibinfo{volume}{1}.
\newblock \bibinfo{publisher}{Trelgol Publishing USA}.
\newblock


\bibitem[\protect\citeauthoryear{OpenAI, Berner, Brockman, Chan, Cheung,
  Dębiak, Dennison, Farhi, Fischer, Hashme, Hesse, Józefowicz, Gray, Olsson,
  Pachocki, Petrov, de~Oliveira~Pinto, Raiman, Salimans, Schlatter, Schneider,
  Sidor, Sutskever, Tang, Wolski, and Zhang}{OpenAI et~al\mbox{.}}{2019}]%
        {openai2019dota}
\bibfield{author}{\bibinfo{person}{OpenAI}, \bibinfo{person}{Christopher
  Berner}, \bibinfo{person}{Greg Brockman}, \bibinfo{person}{Brooke Chan},
  \bibinfo{person}{Vicki Cheung}, \bibinfo{person}{Przemysław Dębiak},
  \bibinfo{person}{Christy Dennison}, \bibinfo{person}{David Farhi},
  \bibinfo{person}{Quirin Fischer}, \bibinfo{person}{Shariq Hashme},
  \bibinfo{person}{Chris Hesse}, \bibinfo{person}{Rafal Józefowicz},
  \bibinfo{person}{Scott Gray}, \bibinfo{person}{Catherine Olsson},
  \bibinfo{person}{Jakub Pachocki}, \bibinfo{person}{Michael Petrov},
  \bibinfo{person}{Henrique~Pondé de Oliveira~Pinto},
  \bibinfo{person}{Jonathan Raiman}, \bibinfo{person}{Tim Salimans},
  \bibinfo{person}{Jeremy Schlatter}, \bibinfo{person}{Jonas Schneider},
  \bibinfo{person}{Szymon Sidor}, \bibinfo{person}{Ilya Sutskever},
  \bibinfo{person}{Jie Tang}, \bibinfo{person}{Filip Wolski}, {and}
  \bibinfo{person}{Susan Zhang}.} \bibinfo{year}{2019}\natexlab{}.
\newblock \showarticletitle{Dota 2 with Large Scale Deep Reinforcement
  Learning}.
\newblock  (\bibinfo{year}{2019}).
\newblock
\showeprint[arxiv]{1912.06680}
\urldef\tempurl%
\url{https://arxiv.org/abs/1912.06680}
\showURL{%
\tempurl}


\bibitem[\protect\citeauthoryear{Pilanci, Ghaoui, and Chandrasekaran}{Pilanci
  et~al\mbox{.}}{2012}]%
        {pilanci2012recovery}
\bibfield{author}{\bibinfo{person}{Mert Pilanci}, \bibinfo{person}{Laurent~E
  Ghaoui}, {and} \bibinfo{person}{Venkat Chandrasekaran}.}
  \bibinfo{year}{2012}\natexlab{}.
\newblock \showarticletitle{Recovery of sparse probability measures via convex
  programming}. In \bibinfo{booktitle}{\emph{Advances in Neural Information
  Processing Systems}}. \bibinfo{pages}{2420--2428}.
\newblock


\bibitem[\protect\citeauthoryear{Rapoport, Chammah, and Orwant}{Rapoport
  et~al\mbox{.}}{1965}]%
        {rapoport1965prisoner}
\bibfield{author}{\bibinfo{person}{Anatol Rapoport}, \bibinfo{person}{Albert~M
  Chammah}, {and} \bibinfo{person}{Carol~J Orwant}.}
  \bibinfo{year}{1965}\natexlab{}.
\newblock \bibinfo{booktitle}{\emph{Prisoner's dilemma: A study in conflict and
  cooperation}}. Vol.~\bibinfo{volume}{165}.
\newblock \bibinfo{publisher}{University of Michigan press}.
\newblock


\bibitem[\protect\citeauthoryear{Rechenberg}{Rechenberg}{1978}]%
        {rechenberg1978evolutionsstrategien}
\bibfield{author}{\bibinfo{person}{Ingo Rechenberg}.}
  \bibinfo{year}{1978}\natexlab{}.
\newblock \showarticletitle{Evolutionsstrategien}.
\newblock In \bibinfo{booktitle}{\emph{Simulationsmethoden in der Medizin und
  Biologie}}. \bibinfo{publisher}{Springer}, \bibinfo{pages}{83--114}.
\newblock


\bibitem[\protect\citeauthoryear{Robbins}{Robbins}{1952}]%
        {robbins1952some}
\bibfield{author}{\bibinfo{person}{Herbert Robbins}.}
  \bibinfo{year}{1952}\natexlab{}.
\newblock \showarticletitle{Some aspects of the sequential design of
  experiments}.
\newblock \bibinfo{journal}{\emph{Bull. Amer. Math. Soc.}}
  \bibinfo{volume}{58}, \bibinfo{number}{5} (\bibinfo{year}{1952}),
  \bibinfo{pages}{527--535}.
\newblock


\bibitem[\protect\citeauthoryear{Rothkopf}{Rothkopf}{2007}]%
        {rothkopf2007thirteen}
\bibfield{author}{\bibinfo{person}{Michael~H Rothkopf}.}
  \bibinfo{year}{2007}\natexlab{}.
\newblock \showarticletitle{Thirteen reasons why the Vickrey-Clarke-Groves
  process is not practical}.
\newblock \bibinfo{journal}{\emph{Operations Research}} \bibinfo{volume}{55},
  \bibinfo{number}{2} (\bibinfo{year}{2007}), \bibinfo{pages}{191--197}.
\newblock


\bibitem[\protect\citeauthoryear{Roughgarden}{Roughgarden}{2015}]%
        {roughgarden2015intrinsic}
\bibfield{author}{\bibinfo{person}{Tim Roughgarden}.}
  \bibinfo{year}{2015}\natexlab{}.
\newblock \showarticletitle{Intrinsic robustness of the price of anarchy}.
\newblock \bibinfo{journal}{\emph{Journal of the ACM (JACM)}}
  \bibinfo{volume}{62}, \bibinfo{number}{5} (\bibinfo{year}{2015}),
  \bibinfo{pages}{32}.
\newblock


\bibitem[\protect\citeauthoryear{Roughgarden and Schoppmann}{Roughgarden and
  Schoppmann}{2015}]%
        {roughgarden2015local}
\bibfield{author}{\bibinfo{person}{Tim Roughgarden} {and}
  \bibinfo{person}{Florian Schoppmann}.} \bibinfo{year}{2015}\natexlab{}.
\newblock \showarticletitle{Local smoothness and the price of anarchy in
  splittable congestion games}.
\newblock \bibinfo{journal}{\emph{Journal of Economic Theory}}
  \bibinfo{volume}{156} (\bibinfo{year}{2015}), \bibinfo{pages}{317--342}.
\newblock


\bibitem[\protect\citeauthoryear{Sahasrabudhe and Motter}{Sahasrabudhe and
  Motter}{2011}]%
        {sahasrabudhe2011rescuing}
\bibfield{author}{\bibinfo{person}{Sagar Sahasrabudhe} {and}
  \bibinfo{person}{Adilson~E Motter}.} \bibinfo{year}{2011}\natexlab{}.
\newblock \showarticletitle{Rescuing ecosystems from extinction cascades
  through compensatory perturbations}.
\newblock \bibinfo{journal}{\emph{Nature Communications}}  \bibinfo{volume}{2}
  (\bibinfo{year}{2011}), \bibinfo{pages}{170}.
\newblock


\bibitem[\protect\citeauthoryear{Satterthwaite}{Satterthwaite}{1975}]%
        {satterthwaite1975strategy}
\bibfield{author}{\bibinfo{person}{Mark~Allen Satterthwaite}.}
  \bibinfo{year}{1975}\natexlab{}.
\newblock \showarticletitle{Strategy-proofness and Arrow's conditions:
  Existence and correspondence theorems for voting procedures and social
  welfare functions}.
\newblock \bibinfo{journal}{\emph{Journal of Economic Theory}}
  \bibinfo{volume}{10}, \bibinfo{number}{2} (\bibinfo{year}{1975}),
  \bibinfo{pages}{187--217}.
\newblock


\bibitem[\protect\citeauthoryear{Sch{\"a}fer and Anandkumar}{Sch{\"a}fer and
  Anandkumar}{2019}]%
        {schafer2019competitive}
\bibfield{author}{\bibinfo{person}{Florian Sch{\"a}fer} {and}
  \bibinfo{person}{Anima Anandkumar}.} \bibinfo{year}{2019}\natexlab{}.
\newblock \showarticletitle{Competitive gradient descent}. In
  \bibinfo{booktitle}{\emph{Advances in Neural Information Processing
  Systems}}. \bibinfo{pages}{7623--7633}.
\newblock


\bibitem[\protect\citeauthoryear{Shalev-Shwartz et~al\mbox{.}}{Shalev-Shwartz
  et~al\mbox{.}}{2012}]%
        {shalev2012online}
\bibfield{author}{\bibinfo{person}{Shai Shalev-Shwartz} {et~al\mbox{.}}}
  \bibinfo{year}{2012}\natexlab{}.
\newblock \showarticletitle{Online learning and online convex optimization}.
\newblock \bibinfo{journal}{\emph{Foundations and Trends{\textregistered} in
  Machine Learning}} \bibinfo{volume}{4}, \bibinfo{number}{2}
  (\bibinfo{year}{2012}), \bibinfo{pages}{107--194}.
\newblock


\bibitem[\protect\citeauthoryear{Steinberg and Zangwill}{Steinberg and
  Zangwill}{1983}]%
        {steinberg1983prevalence}
\bibfield{author}{\bibinfo{person}{Richard Steinberg} {and}
  \bibinfo{person}{Willard~I Zangwill}.} \bibinfo{year}{1983}\natexlab{}.
\newblock \showarticletitle{The prevalence of Braess' paradox}.
\newblock \bibinfo{journal}{\emph{Transportation Science}}
  \bibinfo{volume}{17}, \bibinfo{number}{3} (\bibinfo{year}{1983}),
  \bibinfo{pages}{301--318}.
\newblock


\bibitem[\protect\citeauthoryear{Tibshirani and Efron}{Tibshirani and
  Efron}{1993}]%
        {tibshirani1993introduction}
\bibfield{author}{\bibinfo{person}{Robert~J Tibshirani} {and}
  \bibinfo{person}{Bradley Efron}.} \bibinfo{year}{1993}\natexlab{}.
\newblock \showarticletitle{An introduction to the bootstrap}.
\newblock \bibinfo{journal}{\emph{Monographs on Statistics and Applied
  Probability}}  \bibinfo{volume}{57} (\bibinfo{year}{1993}),
  \bibinfo{pages}{1--436}.
\newblock


\bibitem[\protect\citeauthoryear{Tumer and Proper}{Tumer and Proper}{2013}]%
        {tumer2013coordinating}
\bibfield{author}{\bibinfo{person}{Kagan Tumer} {and} \bibinfo{person}{Scott
  Proper}.} \bibinfo{year}{2013}\natexlab{}.
\newblock \showarticletitle{Coordinating actions in congestion games: impact of
  top--down and bottom--up utilities}.
\newblock \bibinfo{journal}{\emph{Autonomous Agents and Multiagent Systems}}
  \bibinfo{volume}{27}, \bibinfo{number}{3} (\bibinfo{year}{2013}),
  \bibinfo{pages}{419--443}.
\newblock


\bibitem[\protect\citeauthoryear{Wang, Xu, and Zhou}{Wang
  et~al\mbox{.}}{2014}]%
        {wang2014social}
\bibfield{author}{\bibinfo{person}{Zhijian Wang}, \bibinfo{person}{Bin Xu},
  {and} \bibinfo{person}{Hai-Jun Zhou}.} \bibinfo{year}{2014}\natexlab{}.
\newblock \showarticletitle{Social cycling and conditional responses in the
  Rock-Paper-Scissors game}.
\newblock \bibinfo{journal}{\emph{Scientific Reports}} \bibinfo{volume}{4},
  \bibinfo{number}{1} (\bibinfo{year}{2014}), \bibinfo{pages}{1--7}.
\newblock


\bibitem[\protect\citeauthoryear{Wardrop}{Wardrop}{1952}]%
        {wardrop1952road}
\bibfield{author}{\bibinfo{person}{John~Glen Wardrop}.}
  \bibinfo{year}{1952}\natexlab{}.
\newblock \showarticletitle{Some theoretical aspects of road traffic research.}
\newblock \bibinfo{journal}{\emph{Proceedings of the Institution of Civil
  Engineers}} \bibinfo{volume}{1}, \bibinfo{number}{3} (\bibinfo{year}{1952}),
  \bibinfo{pages}{325--362}.
\newblock


\bibitem[\protect\citeauthoryear{Williams}{Williams}{1992}]%
        {williams1992simple}
\bibfield{author}{\bibinfo{person}{Ronald~J Williams}.}
  \bibinfo{year}{1992}\natexlab{}.
\newblock \showarticletitle{Simple statistical gradient-following algorithms
  for connectionist reinforcement learning}.
\newblock \bibinfo{journal}{\emph{Machine learning}} \bibinfo{volume}{8},
  \bibinfo{number}{3-4} (\bibinfo{year}{1992}), \bibinfo{pages}{229--256}.
\newblock


\bibitem[\protect\citeauthoryear{Wilson}{Wilson}{2019}]%
        {wilson2019harmonic}
\bibfield{author}{\bibinfo{person}{Daniel~J Wilson}.}
  \bibinfo{year}{2019}\natexlab{}.
\newblock \showarticletitle{The harmonic mean p-value for combining dependent
  tests}.
\newblock \bibinfo{journal}{\emph{Proceedings of the National Academy of
  Sciences}} \bibinfo{volume}{116}, \bibinfo{number}{4} (\bibinfo{year}{2019}),
  \bibinfo{pages}{1195--1200}.
\newblock


\bibitem[\protect\citeauthoryear{Witthaut and Timme}{Witthaut and
  Timme}{2012}]%
        {witthaut2012braess}
\bibfield{author}{\bibinfo{person}{Dirk Witthaut} {and} \bibinfo{person}{Marc
  Timme}.} \bibinfo{year}{2012}\natexlab{}.
\newblock \showarticletitle{Braess's paradox in oscillator networks,
  desynchronization and power outage}.
\newblock \bibinfo{journal}{\emph{New Journal of Physics}}
  \bibinfo{volume}{14}, \bibinfo{number}{8} (\bibinfo{year}{2012}),
  \bibinfo{pages}{083036}.
\newblock


\bibitem[\protect\citeauthoryear{Wolpert and Tumer}{Wolpert and Tumer}{1999}]%
        {wolpert1999introduction}
\bibfield{author}{\bibinfo{person}{David~H Wolpert} {and}
  \bibinfo{person}{Kagan Tumer}.} \bibinfo{year}{1999}\natexlab{}.
\newblock \showarticletitle{An introduction to collective intelligence}.
\newblock \bibinfo{journal}{\emph{arXiv preprint cs/9908014}}
  (\bibinfo{year}{1999}).
\newblock


\bibitem[\protect\citeauthoryear{Yang, Li, Farajtabar, Sunehag, Hughes, and
  Zha}{Yang et~al\mbox{.}}{2020}]%
        {yang2020learning}
\bibfield{author}{\bibinfo{person}{Jiachen Yang}, \bibinfo{person}{Ang Li},
  \bibinfo{person}{Mehrdad Farajtabar}, \bibinfo{person}{Peter Sunehag},
  \bibinfo{person}{Edward Hughes}, {and} \bibinfo{person}{Hongyuan Zha}.}
  \bibinfo{year}{2020}\natexlab{}.
\newblock \showarticletitle{Learning to Incentivize Other Learning Agents}.
\newblock \bibinfo{journal}{\emph{Advances in Neural Information Processing
  Systems}}  \bibinfo{volume}{33} (\bibinfo{year}{2020}).
\newblock


\bibitem[\protect\citeauthoryear{Youn, Gastner, and Jeong}{Youn
  et~al\mbox{.}}{2008}]%
        {youn2008price}
\bibfield{author}{\bibinfo{person}{Hyejin Youn}, \bibinfo{person}{Michael~T
  Gastner}, {and} \bibinfo{person}{Hawoong Jeong}.}
  \bibinfo{year}{2008}\natexlab{}.
\newblock \showarticletitle{Price of anarchy in transportation networks:
  efficiency and optimality control}.
\newblock \bibinfo{journal}{\emph{Physical Review Letters}}
  \bibinfo{volume}{101}, \bibinfo{number}{12} (\bibinfo{year}{2008}),
  \bibinfo{pages}{128701}.
\newblock


\end{thebibliography}
\bibliographystyle{ACM-Reference-Format}

\newpage
\onecolumn
\appendix
\appendixpage
\tableofcontents
\addtocontents{toc}{\protect\setcounter{tocdepth}{2}}
\newpage

\section{Mechanism Design}
Mechanism design prescribes a way for resolving compromise between self-interested agents~\citep{nisan2007algorithmic}. For example, in the VCG mechanism~\citep{clarke1971multipart}, all agents must reveal their incentives to a central coordinator, the \emph{principal}. This mechanism achieves optimal group behavior by taxing each agent appropriately but then ``burns'' the collected payments, failing eliminate all the original inefficiency~\citep{hartline2008optimal,green1979incentives,rothkopf2007thirteen}, i.e., VCG is not strongly \emph{budget-balanced}.

\section{Bad Nash \& Futile Opponent Shaping}
\label{bad_nash}
Here, we present a small two-player game where the Nash equilibrium results in poor outcomes for both agents individually and as a group. We then point out how although an opponent shaping approach would typically be able to manipulate players into avoiding such equilibria, it fails in this specific game. We seek a general algorithm for resolving multiagent dilemmas and so we propose a new solution.

\subsection{Bad Nash}
\textbf{Game 1} (Nash Paradox) $\min_{x_1 \in [0,1]} f_1(x_1,x_2) = x_1^2 + \frac{1}{x_2^2 + \kappa}, \,\, \min_{x_2 \in [0,1]} f_2(x_1,x_2) = x_2^2 + \frac{1}{x_1^2 + \kappa}$.\\\\
% \label{ex:nash_paradox_intro}
The unique Nash equilibrium of this general-sum game is $(x_1,x_2)$$=$$(0,0)$ regardless of $\kappa \in [0,1)$; at Nash, each player sees a loss of $\frac{1}{\kappa}$. The minimal total loss solution is $(x_1,x_2) = (\sqrt{1-\kappa}, \sqrt{1-\kappa})$ for $\kappa < 1$ where each player sees a loss of $2-\kappa$. The price of anarchy is $\frac{1/\kappa}{2-\kappa}$ which goes to $\infty$ as $\kappa \rightarrow 0$. For $\kappa < $ golden ratio$- 1 \approx 0.618$, Nash achieves maximum total loss among all possible strategy sets. While computing a Nash is an important technical problem, Game~1 proves that even if a Nash can be computed, it may be undesirable. Thus solving for Nash is orthogonal to this work.

\subsection{Gradient Descent Without Descent}
\label{gdwgd}
% \ref{ex:nash_paradox_intro}
Game~1 shows that the Nash equilibrium can give the worst outcome for all agents. It follows that agents learning with gradient descent in this game must observe their loss increase upon their final approach to Nash. Why stick to gradient descent then? In multiagent games, the adjustment of another player's strategy coupled with our own can increase our loss.
% Therefore, in multi-agent games a descent direction, if it exists, must be discovered.
% Sticking hard and fast to the negative gradient as a search direction if the point is to decrease loss is irrational.
% In smooth, single player games (optimization problems), gradient descent with a sufficiently small step size is guaranteed to decrease the loss by an amount proportional to the norm of the gradient.
% 
% In general, a descent direction cannot be uncovered using only information of the agent's loss. It depends on behaviors of the other agents as well.
Let $f_i(t)$ be shorthand for $f_i(\mathbf{x}(t))$ where $\mathbf{x}(t)$ contains all strategies at time (iteration) $t$. Then a series expansion (see \eqref{eqn:series}) of agent $i$'s loss around the current time step makes this concrete:
\begin{align}
    &f_i(t+\Delta t) = f_i(t) + \Delta t \frac{df_i}{dt} + \frac{\Delta t^2}{2} \frac{d^2 f_i}{dt^2} + \mathcal{O}(\Delta t^3) \label{eqn:series} \\
    % &= l_i(t) + \Delta t \Big[\sum_j \frac{dl_i}{dx_j} \frac{dx_j}{dt}\Big] + \frac{1}{2} \Delta t^2 \Big[\sum_j \frac{d^2l_i}{dx_j^2} \frac{d^2x_j}{dt^2} \Big] \nonumber \\
    % &+ \frac{1}{2} \Delta t^2 \Big[\sum_{k \ne j} \boldsymbol{\frac{d^2l_i}{dx_j dx_k} \frac{d x_j}{dt} \frac{d x_k}{dt}} \Big] + \mathcal{O}(\Delta t^3)
    % &= f_i(t) + \Delta t \frac{df_i}{dx_i} \frac{dx_i}{dt} + \frac{\Delta t^2}{2} \frac{d^2f_i}{dx_i^2} \frac{d^2x_i}{dt^2} + h(\frac{dx_{j \ne i}}{dt}) %\nonumber \\
    % + \frac{\Delta t^2}{2} \Big[\sum_{j \ne i} \boldsymbol{\frac{d^2f_i}{dx_i dx_j} \frac{d x_i}{dt} \frac{d x_j}{dt}} \Big] + \mathcal{O}(\Delta t^3) \nonumber
    &= f_i(t) + \Delta t \frac{\partial f_i}{\partial x_i} \frac{dx_i}{dt} + \frac{\Delta t^2}{2} \Big[ \frac{\partial^2 f_i}{\partial x_i^2} \Big( \frac{dx_i}{dt} \Big)^2 + \frac{\partial f_i}{\partial x_i} \frac{d^2 x_i}{dt^2} + 2 \sum_{j \ne i} \boldsymbol{\frac{\partial^2 f_i}{\partial x_i \partial x_j} \frac{dx_i}{dt} \frac{dx_j}{dt}} \Big] \nonumber
    \\ &+ h(\frac{dx_{j \ne i}}{dt}) + \mathcal{O}(\Delta t^3)
\end{align}
where $\Delta t > 0$ is a small learning rate and $h(\frac{dx_{j \ne i}}{dt})$ contains terms that agent $i$ cannot manipulate (i.e., $h$ is constant w.r.t. agent $i$'s update dynamics, $\frac{dx_i}{dt}$). We show the full derivation of the series expansion in Section~\ref{app:gdwgd} for those interested.

\subsection{The Update Is Not The Only Problem}
\label{game_is_prob}
In \eqref{eqn:series}, other agents can affect $f_i(t+\Delta t)$ through the bold terms and $h(\frac{dx_{j \ne i}}{dt})$.
% The only term in the above formula that an agent controls is their search direction $\frac{dx_i}{dt}$ and this cannot help them - according to that formula, the negative gradient is the best direction. In order to succeed, the other agents must change their trajectories. That, or the agent must change the loss they care about -- ``if you can't beat 'em, join 'em".
% The bold terms in \eqref{eqn:series} indicate where agent $i$'s learning dynamics couple with other players~\citep{schafer2019competitive}. To account for these terms, agent $i$ must predict the dynamics of the other agents, $\frac{d x_j}{dt}$, and understand how the other agents' behaviors affect our loss, $\frac{d^2f_i}{dx_i dx_j}$.
The bold terms indicate where agent $i$'s update couples with other players' updates~\citep{schafer2019competitive}. To account for these terms, agent $i$ must predict the other agents' updates, $\frac{d x_j}{dt}$, and understand how their behaviors affect agent $i$'s loss, $\frac{d^2f_i}{dx_i dx_j}$.
% An improved theory of mind cannot help the agents in the \emph{Nash paradox} though.
%
% \ref{ex:nash_paradox_intro}
% LOLA, LookAhead, nor Stable Opponent Shaping
Recent methods, such as LOLA, LookAhead and Stable Opponent Shaping~\citep{foerster2018learning,letcher2018stable}, model these terms. However, all these methods converge to Nash in Game~1 because $\frac{d^2f_i}{dx_i dx_j} = 0$ as do all other mixed derivatives of agent $i$'s loss.
% they do not change the gradients: $\nabla_x f_1(x, y - \eta \nabla_y f_2(x,y)) = \nabla_x f_1(x, y - \eta \bot \nabla_y f_2(x,y)) = 2x = \nabla_x f_1(x,y)$ where $\bot$ denotes the stop gradient operator (w.r.t. $\boldsymbol{x}$).
%
In contrast, agent $i$ can never mitigate increases in loss due to $h$. Incorporating more terms in the expansion generates higher level reasoning, but even the infinite expansion cannot avoid the \emph{Nash paradox} in Game~1. If $x_1$ knows $x_2$'s learning trajectory converges to $0$, $x_1$ is still incentivized to play $0$. \textbf{The fault lies in the game, not the learning.}

\section{Taylor Series Expansion}
\label{app:gdwgd}

Here, we derive the Taylor series expansion given in Section~\ref{gdwgd}. The derivation is as follows:
\begin{align}
    \frac{df_i}{dt} &= \sum_j \frac{\partial f_i}{\partial x_j} \frac{dx_j}{dt} = \frac{\partial f_i}{\partial x_i} \frac{dx_i}{dt} + \textcolor{blue}{\sum_{j \ne i} \frac{\partial f_i}{\partial x_j} \frac{dx_j}{dt}}
    \\ \frac{d^2 f_i}{dt^2} &= \frac{d}{dt}\Big( \frac{df_i}{dt} \Big) = \frac{d}{dt} \Big( \sum_j \frac{\partial f_i}{\partial x_j} \frac{dx_j}{dt} \Big) = \sum_j \frac{d}{dt} \Big( \frac{\partial f_i}{\partial x_j} \frac{dx_j}{dt} \Big)
    \\ &= \sum_j \Big[ \frac{d}{dt} \Big( \frac{\partial f_i}{\partial x_j} \Big) \frac{dx_j}{dt} + \frac{\partial f_i}{\partial x_j} \frac{d^2x_j}{dt^2} \Big]
    \\ &= \sum_j \Big[ \Big( \sum_k \frac{\partial^2 f_i}{\partial x_j \partial x_k} \frac{dx_k}{dt} \Big) \frac{dx_j}{dt} + \frac{\partial f_i}{\partial x_j} \frac{d^2x_j}{dt^2} \Big]
    \\ &= 2 \sum_{j \ne i} \frac{\partial^2 f_i}{\partial x_j \partial x_i} \frac{dx_i}{dt} \frac{dx_j}{dt} + \frac{\partial^2 f_i}{\partial x_i^2} \Big( \frac{dx_i}{dt} \Big)^2
    \\ &+ \textcolor{blue}{\sum_{j \ne i} \sum_{k \ne i} \frac{\partial^2 f_i}{\partial x_j \partial x_k} \frac{dx_j}{dt} \frac{dx_k}{dt} + \sum_{j \ne i} \frac{\partial f_i}{\partial x_j} \frac{d^2 x_j}{dt^2}} + \frac{\partial f_i}{\partial x_i} \frac{d^2 x_i}{dt^2}
    % \\ &\stackrel{\text{Game 1}}{=} 2 \Big( \frac{dx_i}{dt} \Big)^2 + 2x_i \frac{d^2 x_i}{dt^2} + 2 \Big( \frac{dx_j}{dt} \Big)^2 + \frac{\partial f_i}{\partial x_j} \frac{d^2 x_j}{dt^2}
    \\ \textcolor{blue}{h(\frac{dx_{j \ne i}}{dt})} &= \Delta t \sum_{j \ne i} \frac{\partial f_i}{\partial x_j} \frac{dx_j}{dt} + \frac{\Delta t^2}{2} \Big[ \sum_{j \ne i} \sum_{k \ne i} \frac{\partial^2 f_i}{\partial x_j \partial x_k} \frac{dx_j}{dt} \frac{dx_k}{dt} + \sum_{j \ne i} \frac{\partial f_i}{\partial x_j} \frac{d^2 x_j}{dt^2} \Big]
    \\ f_i(t+\Delta t) &= f_i(t) + \Delta t \frac{\partial f_i}{\partial x_i} \frac{dx_i}{dt} + \frac{\Delta t^2}{2} \Big[ \frac{\partial^2 f_i}{\partial x_i^2} \Big( \frac{dx_i}{dt} \Big)^2 + \frac{\partial f_i}{\partial x_i} \frac{d^2 x_i}{dt^2} + 2 \sum_{j \ne i} \boldsymbol{\frac{\partial^2 f_i}{\partial x_i \partial x_j} \frac{dx_i}{dt} \frac{dx_j}{dt}} \Big] + \textcolor{blue}{h(\frac{dx_{j \ne i}}{dt})} + \mathcal{O}(\Delta t^3).
\end{align}

\section{Derivation of an Upper Bound on \emph{Local} Price of Anarchy}
\label{bound_proofs}

\begin{repdefinition}{def:smooth_game}[Smooth Game]
    A game is $(\lambda, \mu)$-smooth~\citep{roughgarden2015intrinsic} if:
    \begin{align}
        \sum_{i=1}^n f^A_i(x_i,x'_{-i}) &\le \lambda \sum_{i=1}^n f^A_i(x_i,x_{-i}) + \mu \sum_{i=1}^n f^A_i(x'_i,x'_{-i})
    \end{align}
    for all $\boldsymbol{x}, \boldsymbol{x}' \in \mathcal{X}$ where $\lambda > 0$, $\mu < 1$, and $\sum_i f_i^A(\boldsymbol{x})$ is assumed to be non-negative for any $\boldsymbol{x} \in \mathcal{X}$.
\end{repdefinition}
The last condition is needed for the price of anarchy to be meaningful.

\begin{replemma}{lemma:smooth_to_poa}[Smooth Games Imply a Bound on Price of Anarchy]
The price of anarchy, $\rho$, the ratio of the worst case Nash total loss to the minimal total loss, is bounded above by a ratio of the coefficients of a smooth game~\citep{roughgarden2015intrinsic}:
\begin{align}
    \rho &= \frac{\max_{\mathcal{X}^*} \sum_i f^A_i(\boldsymbol{x}^*)}{\min_{\mathcal{X}} \sum_i f^A_i(\boldsymbol{x})} \ge 1 \\
    % = \frac{\max_{\mathcal{X}^*} \frac{1}{N} \sum_i f^A_i(\boldsymbol{x}^*)}{\min_{\mathcal{X}} \frac{1}{N} \sum_i f^A_i(\boldsymbol{x})} \ge 1 \\
    &\le \inf_{\lambda>0, \mu<1} \Big[ \frac{\lambda}{1-\mu} \Big].
\end{align}
where $\boldsymbol{x}^*$ is an element of the set of Nash equilibria, $\mathcal{X}^*$.
\end{replemma}

% This sense of smoothness comes from the optimization literature and is not to be confused with the smoothness of a game. We refer to it as functionally-smooth to distinguish them. 
Assume the loss function gradients are Lipschitz as well. We say a loss function, $f_i^A(\boldsymbol{x}) = f_i^A(x_i, x_{-i})$, has a $\beta_i$-Lipschitz gradient for all $A$ if
\begin{align}
    ||\nabla_{x_i} f_i^A(\boldsymbol{x}) - \nabla_{y_i} f_i^A(\boldsymbol{y}) || &\le \beta_i ||\boldsymbol{x}-\boldsymbol{y}|| \,\, \forall \boldsymbol{x}, \boldsymbol{y}, A.
\end{align}
Note that this implies
\begin{align}
    ||\nabla_{x_i} f_i^A(x_i,z_{-i}) - \nabla_{y_i} f_i^A(y_i, z_{-i}) || &\le \beta_i ||x_i-y_i|| \,\, \forall x_i, y_i, z_{-i}, A
\end{align}
as a special case.

The following lemmas are useful in deriving a local notion of smoothness.
\begin{lemma}
\label{norm_diff}
If $f_i^A(x_i, x_{-i}) = g_i(\boldsymbol{x})$ has a $\beta_i$-Lipschitz gradient, then
\begin{equation}
\Big\vert||\nabla_{x_i} g_i(\boldsymbol{x})|| - ||\nabla_{y_i} g_i(\boldsymbol{y})||\Big\vert \le \beta_i ||\boldsymbol{x}-\boldsymbol{y}|| \,\, \forall \,\, \boldsymbol{x}, \boldsymbol{y}.
\end{equation}
\end{lemma}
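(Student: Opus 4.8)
The plan is to prove this directly from the $\beta_i$-Lipschitz hypothesis using the reverse triangle inequality, so there is essentially no heavy lifting involved. The key observation is that the quantity on the left is the absolute difference of two norms, and the reverse triangle inequality in any normed vector space states that $\big\vert \, ||u|| - ||v|| \, \big\vert \le ||u - v||$ for all $u, v$. I would instantiate this with $u = \nabla_{x_i} g_i(\boldsymbol{x})$ and $v = \nabla_{y_i} g_i(\boldsymbol{y})$.

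Concretely, first I would write down the reverse triangle inequality applied to the two gradient vectors, giving
\begin{align}
\Big\vert ||\nabla_{x_i} g_i(\boldsymbol{x})|| - ||\nabla_{y_i} g_i(\boldsymbol{y})|| \Big\vert &\le ||\nabla_{x_i} g_i(\boldsymbol{x}) - \nabla_{y_i} g_i(\boldsymbol{y})||. \nonumber
\end{align}
Then I would invoke the hypothesis that $g_i = f_i^A$ has a $\beta_i$-Lipschitz gradient, which by the definition stated just above the lemma bounds the right-hand side by $\beta_i ||\boldsymbol{x} - \boldsymbol{y}||$. Chaining the two inequalities yields the claim directly, with no case analysis and no appeal to smoothness of the objective beyond the Lipschitz gradient assumption.

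The only thing worth flagging is a subtlety of notation rather than a genuine obstacle: the subscripts $x_i$ and $y_i$ on the two gradients indicate that the first gradient is evaluated at $\boldsymbol{x}$ (with respect to the $i$-th block) and the second at $\boldsymbol{y}$, so the difference $\nabla_{x_i} g_i(\boldsymbol{x}) - \nabla_{y_i} g_i(\boldsymbol{y})$ is exactly the quantity controlled by the Lipschitz hypothesis $||\nabla_{x_i} f_i^A(\boldsymbol{x}) - \nabla_{y_i} f_i^A(\boldsymbol{y})|| \le \beta_i ||\boldsymbol{x} - \boldsymbol{y}||$ as stated in the excerpt. Once that identification is made explicit, the proof is a two-line inequality chain and there is no remaining difficulty.
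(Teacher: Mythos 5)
Your proposal is correct and is essentially the paper's own argument: the paper simply unrolls the reverse triangle inequality into its two one-sided bounds (adding and subtracting a gradient, applying the triangle inequality, and repeating with the roles of $\boldsymbol{x}$ and $\boldsymbol{y}$ swapped) before invoking the Lipschitz hypothesis. Citing the reverse triangle inequality directly, as you do, compresses the same two steps into one line with no loss of rigor.
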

\begin{proof}
The proof proceeds in two main steps. First,
\begin{align}
    ||\nabla_{y_i} g_i(\boldsymbol{y})|| &= ||\nabla_{x_i} g_i(\boldsymbol{x}) + \nabla_{y_i} g_i(\boldsymbol{y}) - \nabla_{x_i} g_i(\boldsymbol{x})|| \\
    &\le ||\nabla_{x_i} g_i(\boldsymbol{x})|| + ||\nabla_{x_i} g_i(\boldsymbol{x})-\nabla_{y_i} g_i(\boldsymbol{y})|| \text{ by triangle inequality} \\
    &\le ||\nabla_{x_i} g_i(\boldsymbol{x})|| + \beta_i ||\boldsymbol{x}-\boldsymbol{y}|| \text{ by Lipschitz gradient}
\end{align}
which implies $||\nabla_{y_i} g_i(\boldsymbol{y})|| - ||\nabla_{x_i} g_i(\boldsymbol{x})|| \le \beta_i ||\boldsymbol{x}-\boldsymbol{y}||$. And vice versa,
\begin{align}
    ||\nabla_{x_i} g_i(\boldsymbol{x})|| &= ||\nabla_{y_i} g_i(\boldsymbol{y}) + \nabla_{x_i} g_i(\boldsymbol{x}) - \nabla_{y_i} g_i(\boldsymbol{y})|| \\
    &\le ||\nabla_{y_i} g_i(\boldsymbol{y})|| + ||\nabla_{x_i} g_i(\boldsymbol{x})-\nabla_{y_i} g_i(\boldsymbol{y})|| \text{ by triangle inequality} \\
    &\le ||\nabla_{y_i} g_i(\boldsymbol{y})|| + \beta_i ||\boldsymbol{x}-\boldsymbol{y}|| \text{ by Lipschitz gradient}
\end{align}
which implies $||\nabla_{x_i} g_i(\boldsymbol{x})|| - ||\nabla_{y_i} g_i(\boldsymbol{y})|| \le \beta_i ||\boldsymbol{x}-\boldsymbol{y}||$. The two implications together prove the lemma.
\end{proof}

\begin{lemma}
\label{norm_sq_diff}
If $f_i^A(x_i, x_{-i}) = g_i(\boldsymbol{x})$ has a $\beta_i$-Lipschitz gradient, then
\begin{equation}
\Big\vert||\nabla_{x_i} g_i(\boldsymbol{x})||^2 - ||\nabla_{y_i} g_i(\boldsymbol{y})||^2\Big\vert \le 3\beta_i^2 ||\boldsymbol{x}-\boldsymbol{y}||^2 + 2\beta_i ||\nabla_{x_i} g_i(\boldsymbol{x})|| ||\boldsymbol{x}-\boldsymbol{y}|| \,\, \forall \,\, x,y.
\end{equation}
\end{lemma}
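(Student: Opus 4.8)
The plan is to reduce the claim to the difference-of-norms bound already established in Lemma~\ref{norm_diff}, so that the present lemma becomes a short consequence rather than a fresh computation. Write $a \myeq ||\nabla_{x_i} g_i(\boldsymbol{x})||$ and $b \myeq ||\nabla_{y_i} g_i(\boldsymbol{y})||$; both are nonnegative, and Lemma~\ref{norm_diff} supplies the single estimate $|a-b| \le \beta_i ||\boldsymbol{x}-\boldsymbol{y}||$. Since $|a^2 - b^2| = \max(a^2 - b^2,\, b^2 - a^2)$, I would bound the two signed quantities separately and then take the larger expression as a bound on the absolute value.

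For the direction $b^2 - a^2$ I would start from $b \le a + \beta_i||\boldsymbol{x}-\boldsymbol{y}||$, square (both sides nonnegative), and subtract $a^2$, obtaining
\[
b^2 - a^2 \le 2\beta_i\, a\, ||\boldsymbol{x}-\boldsymbol{y}|| + \beta_i^2 ||\boldsymbol{x}-\boldsymbol{y}||^2,
\]
which already sits inside the claimed right-hand side. The opposite direction is where the constant $3$ enters. Starting from $a \le b + \beta_i ||\boldsymbol{x}-\boldsymbol{y}||$ and squaring yields $a^2 - b^2 \le 2\beta_i\, b\, ||\boldsymbol{x}-\boldsymbol{y}|| + \beta_i^2 ||\boldsymbol{x}-\boldsymbol{y}||^2$, a bound phrased in terms of $b$. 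Because the target inequality is anchored at $\boldsymbol{x}$ (its right-hand side uses $||\nabla_{x_i} g_i(\boldsymbol{x})||$), I would re-express this using $b \le a + \beta_i||\boldsymbol{x}-\boldsymbol{y}||$, which inflates the quadratic term and gives
\[
a^2 - b^2 \le 2\beta_i\, a\, ||\boldsymbol{x}-\boldsymbol{y}|| + 3\beta_i^2 ||\boldsymbol{x}-\boldsymbol{y}||^2.
\]
Taking the maximum of the two one-sided bounds produces exactly the stated inequality.

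The only genuine subtlety — and the reason the coefficient is $3$ rather than $1$ — is this asymmetry: the right-hand side is anchored at the point $\boldsymbol{x}$, so one of the two directions requires the extra conversion from $b$ back to $a$, contributing an additional $2\beta_i^2||\boldsymbol{x}-\boldsymbol{y}||^2$. I expect no other obstacle; everything else is squaring and the triangle-type estimate from Lemma~\ref{norm_diff}.

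As a remark, a cleaner route avoids the asymmetry entirely. Setting $\boldsymbol{d} = \nabla_{y_i} g_i(\boldsymbol{y}) - \nabla_{x_i} g_i(\boldsymbol{x})$ with $||\boldsymbol{d}|| \le \beta_i||\boldsymbol{x}-\boldsymbol{y}||$ (directly from the $\beta_i$-Lipschitz-gradient hypothesis), one expands $b^2 = ||\nabla_{x_i} g_i(\boldsymbol{x}) + \boldsymbol{d}||^2$ to get $b^2 - a^2 = 2\langle \nabla_{x_i} g_i(\boldsymbol{x}), \boldsymbol{d}\rangle + ||\boldsymbol{d}||^2$, and Cauchy--Schwarz then bounds $|b^2 - a^2|$ by $2\beta_i\, a\, ||\boldsymbol{x}-\boldsymbol{y}|| + \beta_i^2 ||\boldsymbol{x}-\boldsymbol{y}||^2$ symmetrically, which implies the lemma \emph{a fortiori} with a strictly tighter constant. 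I would most likely present the first argument for continuity with Lemma~\ref{norm_diff} and note that the second yields the sharper coefficient.
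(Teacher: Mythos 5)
Your proposal is correct and follows essentially the same route as the paper's proof: bound each signed difference separately using Lemma~\ref{norm_diff} (the paper phrases this via the triangle inequality on the gradient vectors before squaring, which is equivalent to squaring the scalar bound $|a-b|\le\beta_i\|\boldsymbol{x}-\boldsymbol{y}\|$), and then convert the $b$-anchored cross term back to $a$ via $b \le a + \beta_i\|\boldsymbol{x}-\boldsymbol{y}\|$, which is exactly where the paper's coefficient $3$ also comes from. Your closing remark that the inner-product expansion yields the sharper coefficient $1$ on the quadratic term is accurate, but it is not the argument the paper uses.
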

\begin{proof}
The proof proceeds similarly to before. First,
\begin{align}
    ||\nabla_{y_i} g_i(\boldsymbol{y})||^2 &= ||\nabla_{x_i} g_i(\boldsymbol{x}) + \nabla_{y_i} g_i(\boldsymbol{y}) - \nabla_{x_i} g_i(\boldsymbol{x})||^2 \\
    &\le \Big(||\nabla_{x_i} g_i(\boldsymbol{x})|| + ||\nabla_{x_i} g_i(\boldsymbol{x})-\nabla_{y_i} g_i(\boldsymbol{y})||\Big)^2 \text{ by triangle inequality} \\
    &= ||\nabla_{x_i} g_i(\boldsymbol{x})||^2 + ||\nabla_{x_i} g_i(\boldsymbol{x})-\nabla_{y_i} g_i(\boldsymbol{y})||^2 + 2 ||\nabla_{x_i} g_i(\boldsymbol{x})|| ||\nabla_{x_i} g_i(\boldsymbol{x}) - \nabla_{y_i} g_i(\boldsymbol{y})|| \\
    &\le ||\nabla_{x_i} g_i(\boldsymbol{x})||^2 + \beta_i^2 ||\boldsymbol{x}-\boldsymbol{y}||^2 + 2 \beta_i ||\nabla_{x_i} g_i(\boldsymbol{x})|| ||\boldsymbol{x}-\boldsymbol{y}|| \text{ by Lipschitz gradient and Lemma~\ref{norm_diff}}
\end{align}
which implies $||\nabla_{y_i} g_i(\boldsymbol{y})||^2 - ||\nabla_{x_i} g_i(\boldsymbol{x})||^2 \le \beta_i^2 ||\boldsymbol{x}-\boldsymbol{y}||^2 + 2 \beta_i ||\nabla_{x_i} g_i(\boldsymbol{x})|| ||\boldsymbol{x}-\boldsymbol{y}||$. And vice versa,
\begin{align}
    ||\nabla_{x_i} g_i(\boldsymbol{x})||^2 &= ||\nabla_{y_i} g_i(\boldsymbol{y}) + \nabla_{x_i} g_i(\boldsymbol{x}) - \nabla_{y_i} g_i(\boldsymbol{y})||^2 \\
    &\le \Big(||\nabla_{y_i} g_i(\boldsymbol{y})|| + ||\nabla_{x_i} g_i(\boldsymbol{x})-\nabla_{y_i} g_i(\boldsymbol{y})||\Big)^2 \text{ by triangle inequality} \\
    &= ||\nabla_{y_i} g_i(\boldsymbol{y})||^2 + ||\nabla_{x_i} g_i(\boldsymbol{x})-\nabla_{y_i} g_i(\boldsymbol{y})||^2 + 2 ||\nabla_{y_i} g_i(\boldsymbol{y})|| ||\nabla_{x_i} g_i(\boldsymbol{x}) - \nabla_{y_i} g_i(\boldsymbol{y})|| \\
    &\le ||\nabla_{y_i} g_i(\boldsymbol{y})||^2 + \beta_i^2 ||\boldsymbol{x}-\boldsymbol{y}||^2 + 2 \beta_i ||\nabla_{y_i} g_i(\boldsymbol{y})|| ||\boldsymbol{x}-\boldsymbol{y}|| \text{ by Lipschitz gradient and Lemma~\ref{norm_diff}}
\end{align}
which implies $||\nabla_{x_i} g_i(\boldsymbol{x})||^2 - ||\nabla_{y_i} g_i(\boldsymbol{y})||^2 \le \beta_i^2 ||\boldsymbol{x}-\boldsymbol{y}||^2 + 2 \beta_i ||\nabla_{y_i} g_i(\boldsymbol{y})|| ||\boldsymbol{x}-\boldsymbol{y}||$. The two implications together imply
\begin{align}
    \Big\vert||\nabla_{x_i} g_i(\boldsymbol{x})||^2 - ||\nabla_{y_i} g_i(\boldsymbol{y})||^2\Big\vert &\le \beta_i^2 ||\boldsymbol{x}-\boldsymbol{y}||^2 + 2 \beta_i \max\{||\nabla_{x_i} g_i(\boldsymbol{x})||,||\nabla_{y_i} g_i(\boldsymbol{y})||\} ||\boldsymbol{x}-\boldsymbol{y}|| \\
    &\le \beta_i^2 ||\boldsymbol{x}-\boldsymbol{y}||^2 + 2 \beta_i \max\{||\nabla_{x_i} g_i(\boldsymbol{x})||,||\nabla_{x_i} g_i(\boldsymbol{x})|| + \beta_i||\boldsymbol{x}-\boldsymbol{y}||\} ||\boldsymbol{x}-\boldsymbol{y}|| \\
    &= 3 \beta_i^2 ||\boldsymbol{x}-\boldsymbol{y}||^2 + 2 \beta_i ||\nabla_{x_i} g_i(\boldsymbol{x})|| ||\boldsymbol{x}-\boldsymbol{y}||
\end{align}
where the last inequality follows from Lemma~\ref{norm_diff}.
\end{proof}

\begin{lemma}
\label{smooth_1}
If $f_i^A(x_i, x_{-i}) = g_i(\boldsymbol{x})$ has a $\beta_i$-Lipschitz gradient, then there exists a $\Delta t > 0$ sufficiently small s.t.
\begin{equation}
\langle \nabla_{x_i} g_i(\boldsymbol{x}), \nabla_{x_i'} g_i(\boldsymbol{x}') \rangle \ge ||\nabla_{x_i} g_i(\boldsymbol{x})||^2 - \delta_i \Delta t - \gamma_i \Delta t^2 \ge 0 %\frac{1}{2}[\alpha_i - \gamma_i \Delta t^2]    
\end{equation}
where $x_i' = x_i - \Delta t \nabla_{x_i} g_i(\boldsymbol{x})$ for each $i$, $\Delta t > 0$, $\delta_i = \beta_i ||\nabla_{x_i} g_i(\boldsymbol{x})|| \zeta$, $\gamma_i = 2 \beta_i^2 \zeta^2$, and $\zeta = \sqrt{\sum_j ||\nabla_{x_j} g_j(\boldsymbol{x})||^2}$.
% $\delta_i = \beta_i ||\nabla_{x_i} g_i(\boldsymbol{x})||^2$ and $\gamma_i = \frac{1}{2} \beta_i^2 \big( 3||\nabla_{x_i} g_i(\boldsymbol{x})||^2 + \sum_j ||\nabla_{x_j} g_j(\boldsymbol{x})||^2 \big)$.
% $\alpha_i = ||\nabla_{x'_i} g_i(\boldsymbol{x}')||^2 + ||\nabla_{x_i} g_i(\boldsymbol{x})||^2$, and $\gamma_i = \beta_i^2 \sum_j ||\nabla_{x_j} g_j(\boldsymbol{x})||^2$.
\end{lemma}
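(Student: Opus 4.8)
The plan is to reduce everything to a single geometric estimate: both vectors in the inner product are evaluations of the same map $\nabla_{x_i} g_i(\cdot)$ at the two nearby points $\boldsymbol{x}$ and $\boldsymbol{x}'$, where $\boldsymbol{x}' = \boldsymbol{x} - \Delta t\, F(\boldsymbol{x})$ is the \emph{joint} gradient step. First I would record the one fact that ties the step to $\zeta$: since every coordinate is updated simultaneously, $\lVert \boldsymbol{x} - \boldsymbol{x}' \rVert = \Delta t \lVert F(\boldsymbol{x}) \rVert = \Delta t \sqrt{\sum_j \lVert \nabla_{x_j} g_j(\boldsymbol{x}) \rVert^2} = \Delta t\, \zeta$. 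The $\beta_i$-Lipschitz gradient assumption then gives the key displacement bound $\lVert \nabla_{x_i'} g_i(\boldsymbol{x}') - \nabla_{x_i} g_i(\boldsymbol{x}) \rVert \le \beta_i \lVert \boldsymbol{x} - \boldsymbol{x}' \rVert = \beta_i \zeta\, \Delta t$, which is the only analytic input beyond the earlier lemmas.

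For the first inequality I would set $a = \nabla_{x_i} g_i(\boldsymbol{x})$, $b = \nabla_{x_i'} g_i(\boldsymbol{x}')$ and expand via the polarization identity $\langle a, b\rangle = \tfrac{1}{2}(\lVert a\rVert^2 + \lVert b\rVert^2 - \lVert a - b\rVert^2)$. The cross term is controlled directly by the displacement bound, $\lVert a-b\rVert^2 \le \beta_i^2 \zeta^2 \Delta t^2$, while $\lVert b\rVert^2$ is bounded below by feeding the same displacement $\lVert\boldsymbol{x}-\boldsymbol{x}'\rVert = \zeta\Delta t$ into Lemma~\ref{norm_sq_diff}, giving $\lVert b\rVert^2 \ge \lVert a\rVert^2 - 3\beta_i^2\zeta^2\Delta t^2 - 2\beta_i \lVert a\rVert \zeta \Delta t$. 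Substituting and collecting powers of $\Delta t$ yields exactly $\langle a,b\rangle \ge \lVert a\rVert^2 - \beta_i\lVert a\rVert\zeta\,\Delta t - 2\beta_i^2\zeta^2\Delta t^2 = \lVert a\rVert^2 - \delta_i \Delta t - \gamma_i \Delta t^2$, which is precisely why the stated constants take those values. A shorter route — $\langle a,b\rangle = \lVert a\rVert^2 + \langle a, b-a\rangle \ge \lVert a\rVert^2 - \lVert a\rVert\,\beta_i\zeta\Delta t$ by Cauchy--Schwarz — already establishes the first inequality \emph{a fortiori}, since the extra $-\gamma_i\Delta t^2$ only weakens the bound; I would nonetheless keep the polarization version because it reproduces the quadratic term that is used downstream.

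For the second inequality I would observe that the middle expression $q(\Delta t) = \lVert a\rVert^2 - \delta_i \Delta t - \gamma_i \Delta t^2$ is a downward parabola in $\Delta t$ with $q(0) = \lVert \nabla_{x_i} g_i(\boldsymbol{x})\rVert^2 > 0$ (assuming no agent sits at a stationary point of its own loss). Hence $q$ stays nonnegative on $[0,\Delta t_i)$, where $\Delta t_i = (-\delta_i + \sqrt{\delta_i^2 + 4\gamma_i \lVert a\rVert^2})/(2\gamma_i)$ is its unique positive root, and choosing $\Delta t < \min_i \Delta t_i$ makes the full chain hold for every $i$ simultaneously — this is how the ``$\Delta t$ sufficiently small'' quantifier is discharged.

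I expect the only real friction to be bookkeeping rather than conceptual: matching the constants forces the polarization-plus-Lemma~\ref{norm_sq_diff} route instead of the one-line Cauchy--Schwarz estimate, and the nonnegativity step needs the implicit nondegeneracy assumption $\nabla_{x_i} g_i(\boldsymbol{x}) \neq 0$. If some agent's own gradient vanishes while $\zeta > 0$, then $\delta_i = 0$ but $\gamma_i > 0$, so $q(\Delta t) < 0$ and the clean chain fails unless the claim is read as vacuous there (the inner product is then $0$). I would flag this caveat explicitly and otherwise treat the argument as a routine small-step Lipschitz estimate.
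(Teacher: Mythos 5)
Your proof is correct and follows essentially the same route as the paper's: expanding $\|\nabla_{x_i} g_i(\boldsymbol{x}) - \nabla_{x_i'} g_i(\boldsymbol{x}')\|^2$ via the squared Lipschitz bound (your polarization identity is the same algebra), lower-bounding $\|\nabla_{x_i'} g_i(\boldsymbol{x}')\|^2$ with Lemma~\ref{norm_sq_diff} at displacement $\zeta\,\Delta t$, and collecting terms to recover $\delta_i$ and $\gamma_i$ exactly. The degenerate case you flag ($\nabla_{x_i} g_i(\boldsymbol{x}) = 0$ with $\zeta > 0$) is the same one the paper disposes of with its closing remark that the inner product is then zero, so no further action is needed.
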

\begin{proof}
We begin with the assumption of a Lipschitz gradient which trivially implies the following:
\begin{align}
    ||\nabla_{x_i} g_i(\boldsymbol{x}) - \nabla_{y_i} g_i(\boldsymbol{y}) || &\le \beta_i ||\boldsymbol{x}-\boldsymbol{y}|| \,\, \forall \boldsymbol{x}, \boldsymbol{y} \\
    \implies ||\nabla_{x_i} g_i(\boldsymbol{x}) - \nabla_{y_i} g_i(\boldsymbol{y}) ||^2 &\le \beta_i^2 ||\boldsymbol{x}-\boldsymbol{y}||^2 \,\, \forall \boldsymbol{x}, \boldsymbol{y}.
\end{align}
This, in turn, is equivalent to
\begin{align}
    \langle \nabla_{x_i} g_i(\boldsymbol{x}) - \nabla_{y_i} g_i(\boldsymbol{y}), \nabla_{x_i} g_i(\boldsymbol{x}) - \nabla_{y_i} g_i(\boldsymbol{y}) \rangle &\le \beta_i^2 ||\boldsymbol{x}-\boldsymbol{y}||^2 \,\, \forall \boldsymbol{x}, \boldsymbol{y} \\
    =||\nabla_{x_i} g_i(\boldsymbol{x})||^2 + ||\nabla_{y_i} g_i(\boldsymbol{y})||^2 - 2\langle \nabla_{x_i} g_i(\boldsymbol{x}), \nabla_{y_i} g_i(\boldsymbol{y}) \rangle &\le \beta_i^2 ||\boldsymbol{x}-\boldsymbol{y}||^2 \,\, \forall \boldsymbol{x}, \boldsymbol{y}.
\end{align}
Rearranging terms gives
\begin{align}
    \langle \nabla_{x_i} g_i(\boldsymbol{x}), \nabla_{y_i} g_i(\boldsymbol{y}) \rangle &\ge \frac{1}{2}\Big[ ||\nabla_{x_i} g_i(\boldsymbol{x})||^2 + ||\nabla_{y_i} g_i(\boldsymbol{y})||^2 - \beta_i^2 ||\boldsymbol{x}-\boldsymbol{y}||^2 \Big] \,\, \forall \boldsymbol{x}, \boldsymbol{y}.
\end{align}
Now let $y_i= x_i' = x_i - \Delta t \nabla_{x_i} g_i(\boldsymbol{x})$ for each $i$. Lemma~\ref{norm_sq_diff} implies
\begin{align}
    ||\nabla_{x'_i} g_i(\boldsymbol{x}')||^2 &\ge ||\nabla_{x_i} g_i(\boldsymbol{x})||^2 - 3\beta_i^2||\boldsymbol{x}-\boldsymbol{y}||^2 - 2\beta_i ||\nabla_{x_i} g_i(\boldsymbol{x})|| ||\boldsymbol{x}-\boldsymbol{y}|| \\
    &= ||\nabla_{x_i} g_i(\boldsymbol{x})||^2 - 3\beta_i^2 \Delta t^2 \underbrace{\sum_j ||\nabla_{x_j} g_j(\boldsymbol{x})||^2}_{\zeta^2} - 2\beta_i \Delta t ||\nabla_{x_i} g_i(\boldsymbol{x})|| \underbrace{\sqrt{\sum_j ||\nabla_{x_j} g_j(\boldsymbol{x})||^2}}_{\zeta}.
\end{align}
% \begin{align}
%     ||\nabla_{x'_i} g_i(\boldsymbol{x}')||^2 &\ge ||\nabla_{x_i} g_i(\boldsymbol{x})||^2 - 3\beta_i^2||x_i-y_i||^2 - 2\beta_i ||\nabla_{x_i} g_i(\boldsymbol{x})|| ||x_i-y_i|| \\
%     &= ||\nabla_{x_i} g_i(\boldsymbol{x})||^2 - 3\beta_i^2 \Delta t^2 ||\nabla_{x_i} g_i(\boldsymbol{x})||^2 - 2\beta_i ||\nabla_{x_i} g_i(\boldsymbol{x})|| \Delta t ||\nabla_{x_i} g_i(\boldsymbol{x})|| \\
%     &= ||\nabla_{x_i} g_i(\boldsymbol{x})||^2 (1 - 2\beta_i \Delta t - 3\beta_i^2 \Delta t^2).
% \end{align}

Then
\begin{align}
    \langle \nabla_{x_i} g_i(\boldsymbol{x}), \nabla_{x_i'} g_i(\boldsymbol{x}') \rangle &\ge \frac{1}{2}\Big[ ||\nabla_{x_i} g_i(\boldsymbol{x})||^2 + ||\nabla_{x'_i} g_i(\boldsymbol{x}')||^2 - \Delta t^2 \beta_i^2 \sum_j ||\nabla_{x_j} g_j(\boldsymbol{x})||^2 \Big] \\
    &\ge \frac{1}{2}\Big[ 2||\nabla_{x_i} g_i(\boldsymbol{x})||^2 - 2\beta_i \Delta t ||\nabla_{x_i} g_i(\boldsymbol{x})|| \zeta - 3 \Delta t^2 \beta_i^2 \zeta^2 - \Delta t^2 \beta_i^2 \zeta^2 \Big] \\
    &= ||\nabla_{x_i} g_i(\boldsymbol{x})||^2 - \Delta t \beta_i ||\nabla_{x_i} g_i(\boldsymbol{x})|| \zeta - 2 \Delta t^2 \beta_i^2 \zeta^2 \\
    &= ||\nabla_{x_i} g_i(\boldsymbol{x})||^2 - \delta_i \Delta t - \gamma_i \Delta t^2
\end{align}
% \begin{align}
%     \langle \nabla_{x_i'} g_i(\boldsymbol{x}'), \nabla_{x_i} g_i(\boldsymbol{x}) \rangle &\ge \frac{1}{2}\Big[ ||\nabla_{x'_i} g_i(\boldsymbol{x}')||^2 + ||\nabla_{x_i} g_i(\boldsymbol{x})||^2 - \Delta t^2 \beta_i^2 \sum_j ||\nabla_{x_j} g_j(\boldsymbol{x})||^2 \Big] \\
%     &\ge \frac{1}{2}\Big[ ||\nabla_{x_i} g_i(\boldsymbol{x})||^2(2 - 2\beta_i \Delta t - 3\beta_i^2 \Delta t^2) - \Delta t^2 \beta_i^2 \sum_j ||\nabla_{x_j} g_j(\boldsymbol{x})||^2 \Big] \\
%     &\ge ||\nabla_{x_i} g_i(\boldsymbol{x})||^2 - \frac{\Delta t \beta_i}{2}\Big[ ||\nabla_{x_i} g_i(\boldsymbol{x})||^2(2 + 3 \Delta t \beta_i) + \Delta t \beta_i \sum_j ||\nabla_{x_j} g_j(\boldsymbol{x})||^2 \Big] \\
%     &= ||\nabla_{x_i} g_i(\boldsymbol{x})||^2 - \delta_i \Delta t - \gamma_i \Delta t^2
% \end{align}
where $\delta_i = \beta_i ||\nabla_{x_i} g_i(\boldsymbol{x})|| \zeta$ and $\gamma_i = 2 \beta_i^2 \zeta^2$. Note that $||\nabla_{x_i} g_i(\boldsymbol{x})||^2 \ge 0$ and if $||\nabla_{x_i} g_i(\boldsymbol{x})||^2 = 0$, then $\langle \nabla_{x_i} g_i(\boldsymbol{x}), \nabla_{x_i'} g_i(\boldsymbol{x}') \rangle = 0$.
% iff $x_i=x_i'$ in which case $f^A_i(x_i,x'_{-i}) = f^A_i(x'_i,x'_{-i})$.
% where $\delta_i = \beta_i ||\nabla_{x_i} g_i(\boldsymbol{x})||^2$ and $\gamma_i = \frac{1}{2} \beta_i^2 \big( 3||\nabla_{x_i} g_i(\boldsymbol{x})||^2 + \sum_j ||\nabla_{x_j} g_j(\boldsymbol{x})||^2 \big)$. Note that $||\nabla_{x_i} g_i(\boldsymbol{x})||^2 \ge 0$ and $||\nabla_{x_i} g_i(\boldsymbol{x})||^2 = 0$ iff $x_i=x_i'$ in which case $f^A_i(x_i,x'_{-i}) = f^A_i(x'_i,x'_{-i})$.
\end{proof}

\begin{lemma}
\label{smooth_2}
If $f_i^A(x_i, x_{-i}) = g_i(\boldsymbol{x})$ has a $\beta_i$-Lipschitz gradient, then
\begin{equation}
f_i^A(x_i, x'_{-i}) \ge f_i^A(x_i', x'_{-i}) + \langle \nabla_{x'_i} f_i^A(x'_i, x'_{-i}), x_i-x'_i \rangle - \frac{\beta_i}{2} ||x_i-x'_i||^2 
\end{equation}
where $x_i' = x_i - \Delta t \nabla_{x_i} g_i(\boldsymbol{x})$ for each $i$ and $\Delta t > 0$.
\end{lemma}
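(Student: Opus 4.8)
The plan is to recognize this as the standard quadratic lower bound (the \emph{descent lemma}) applied not to the full joint loss but to the restriction of $f_i^A$ to its $i$-th argument, with the remaining coordinates frozen at $x'_{-i}$. First I would define the single-argument map $\phi(z) \myeq f_i^A(z, x'_{-i})$. By the special case of the Lipschitz-gradient assumption recorded just above the lemma, namely $||\nabla_{x_i} f_i^A(x_i, z_{-i}) - \nabla_{y_i} f_i^A(y_i, z_{-i})|| \le \beta_i ||x_i - y_i||$, the gradient of $\phi$ is itself $\beta_i$-Lipschitz. The claimed inequality is then exactly the quadratic lower bound for $\phi$ evaluated at $x_i$ with base point $x_i'$, since $\nabla \phi(x_i') = \nabla_{x_i'} f_i^A(x_i', x'_{-i})$.

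To prove that lower bound from first principles, I would integrate the gradient of $\phi$ along the segment joining $x_i'$ and $x_i$. Using the fundamental theorem of calculus,
\begin{align}
\phi(x_i) - \phi(x_i') - \langle \nabla \phi(x_i'), x_i - x_i' \rangle &= \int_0^1 \langle \nabla \phi(x_i' + t(x_i - x_i')) - \nabla \phi(x_i'), \, x_i - x_i' \rangle \, dt. \nonumber
\end{align}
I would then lower-bound the integrand by $-||\nabla \phi(x_i' + t(x_i - x_i')) - \nabla \phi(x_i')|| \cdot ||x_i - x_i'||$ via Cauchy-Schwarz, and apply the $\beta_i$-Lipschitz property of $\nabla \phi$ to replace the first factor with $\beta_i t \, ||x_i - x_i'||$. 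Integrating $\int_0^1 \beta_i t \, dt = \frac{\beta_i}{2}$ yields an error no smaller than $-\frac{\beta_i}{2}||x_i - x_i'||^2$, which is precisely the stated inequality once we substitute back $\nabla \phi(x_i') = \nabla_{x_i'} f_i^A(x_i', x'_{-i})$.

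Since this is a textbook consequence of a Lipschitz gradient, I expect no genuine obstacle; the single point requiring care is that the Lipschitz bound must apply to the coordinate-restricted gradient rather than the full gradient, which is exactly why the excerpt isolates the ``special case'' implication beforehand. I would also note that the hypothesis $x_i' = x_i - \Delta t \nabla_{x_i} g_i(\boldsymbol{x})$ plays no role in the inequality itself; it merely fixes how $x_i'$ arises in the surrounding derivation, and the bound in fact holds for \emph{any} pair $x_i, x_i'$ sharing the frozen coordinates $x'_{-i}$.
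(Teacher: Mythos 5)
Your proposal is correct and follows essentially the same route as the paper: freeze the other coordinates at $x'_{-i}$, observe the coordinate-restricted gradient inherits the $\beta_i$-Lipschitz property, and invoke the standard quadratic (descent-lemma) lower bound at base point $x_i'$. The only difference is that you derive that quadratic bound explicitly via the fundamental theorem of calculus and Cauchy--Schwarz, whereas the paper asserts it as a known implication; your closing remark that the specific form of $x_i'$ is immaterial to the inequality is also accurate.
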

\begin{proof}
Let $f_i^A(x_i, x'_{-i}) = h_i(x_i)$. We begin with the assumption of a Lipschitz gradient which implies the following:
\begin{align}
    ||\nabla_{x_i} h_i(x_i) - \nabla_{y_i} h_i(y_i) || &\le \beta_i ||x_i-y_i|| \,\, \forall \boldsymbol{x}, \boldsymbol{y} \\
    \implies |h_i(x_i) - h_i(y_i) - \langle \nabla_{y_i} h_i(y_i), x_i-y_i \rangle | &\le \frac{\beta_i}{2} ||x_i-y_i||^2 \,\, \forall \boldsymbol{x}, \boldsymbol{y}.
\end{align}
This then implies
\begin{align}
    h_i(x_i) = h_i(y_i) + \langle \nabla_{y_i} h_i(y_i), x_i-y_i \rangle &+ \kappa_i ||x_i-y_i||^2  \,\, \forall \boldsymbol{x}, \boldsymbol{y} \text{ where $\kappa_i \in [-\frac{\beta_i}{2}, \frac{\beta_i}{2}]$}
\end{align}
Rewriting with $f_i^A$ for clarity, letting $y_i = x_i' = x_i - \Delta t \nabla_{x_i} g_i(\boldsymbol{x})$ for each $i$, and selecting the lower bound gives
\begin{align}
    f_i^A(x_i, x'_{-i}) \ge f_i^A(x_i', x'_{-i}) + \langle \nabla_{x'_i} f_i^A(x'_i, x'_{-i}), x_i-x'_i \rangle &- \frac{\beta_i}{2} ||x_i-x'_i||^2.
\end{align}
\end{proof}

\begin{lemma}
\label{taylor}
If every $f_i^A(x_i, x_{-i}) = g_i(\boldsymbol{x})$ has a $\beta_i$-Lipschitz gradient, then by Lemmas~\ref{smooth_1} and~\ref{smooth_2}, there exists a $\Delta t$ such that
\begin{equation}
    \sum_{i=1}^n f^A_i(x_i,x'_{-i}) \ge \sum_{i=1}^n f^A_i(x'_i,x'_{-i}) + \underbrace{a_i}_{\ge 0}
\end{equation}
where $x_i' = x_i - \Delta t \nabla_{x_i} f^A_i(\boldsymbol{x})$ and $a_i = ||\nabla_{x_i} f_i^A(x_i, x_{-i})||^2 - \delta_i \Delta t - \gamma_i \Delta t^2$ for each $i$.
% $a_i = \frac{\alpha_i}{2} \Delta t - \frac{1}{2}(\beta_i \xi_i + \gamma_i) \Delta t^2$.
\end{lemma}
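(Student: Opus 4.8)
Because both ingredient lemmas are stated per agent, the plan is to establish the inequality one agent at a time and then sum, using the identification $g_i = f_i^A$. I would start from Lemma~\ref{smooth_2}, which gives $f_i^A(x_i, x'_{-i}) \ge f_i^A(x'_i, x'_{-i}) + \langle \nabla_{x'_i} f_i^A(x'_i, x'_{-i}), x_i - x'_i \rangle - \frac{\beta_i}{2}\|x_i - x'_i\|^2$. The crucial move is to feed in the gradient-descent update itself: since $x'_i = x_i - \Delta t\, \nabla_{x_i} f_i^A(\boldsymbol{x})$, we have $x_i - x'_i = \Delta t\, \nabla_{x_i} f_i^A(\boldsymbol{x})$, so the linear term becomes $\Delta t\, \langle \nabla_{x'_i} f_i^A(\boldsymbol{x}'), \nabla_{x_i} f_i^A(\boldsymbol{x}) \rangle$ and the penalty becomes $\frac{\beta_i}{2}\Delta t^2 \|\nabla_{x_i} f_i^A(\boldsymbol{x})\|^2$.

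Next I would invoke Lemma~\ref{smooth_1} to lower-bound the gradient inner product by exactly $a_i = \|\nabla_{x_i} f_i^A(\boldsymbol{x})\|^2 - \delta_i \Delta t - \gamma_i \Delta t^2$, a quantity the same lemma already certifies to be non-negative once $\Delta t$ is small. Substituting yields the per-agent estimate $f_i^A(x_i, x'_{-i}) \ge f_i^A(x'_i, x'_{-i}) + \Delta t\, a_i - \frac{\beta_i}{2}\Delta t^2 \|\nabla_{x_i} f_i^A(\boldsymbol{x})\|^2$. Summing over the $n$ agents and choosing $\Delta t$ below the smallest of the finitely many per-agent thresholds from Lemma~\ref{smooth_1} then delivers the stated inequality with a non-negative correction, since each $a_i \ge 0$.

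\textbf{Main obstacle.} The delicate point is the leftover second-order penalty $\frac{\beta_i}{2}\Delta t^2 \|\nabla_{x_i} f_i^A(\boldsymbol{x})\|^2$ produced by Lemma~\ref{smooth_2}, which is not folded into $a_i$ and so must be shown not to destroy positivity of the net correction. The way I would control it is to note that $\|\nabla_{x_i} f_i^A(\boldsymbol{x})\| \le \zeta = \sqrt{\sum_j \|\nabla_{x_j} f_j^A(\boldsymbol{x})\|^2}$, whence $\frac{\beta_i}{2}\|\nabla_{x_i} f_i^A(\boldsymbol{x})\|^2 \le \frac{1}{2}\beta_i \|\nabla_{x_i} f_i^A(\boldsymbol{x})\|\, \zeta = \frac{1}{2}\delta_i$. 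Thus the penalty is at most $\frac{1}{2}\delta_i \Delta t^2$, of the same order as the slack $\delta_i \Delta t^2 + \gamma_i \Delta t^3$ already present, so the net per-agent correction is $\Delta t\,(\|\nabla_{x_i} f_i^A(\boldsymbol{x})\|^2 - \tfrac{3}{2}\delta_i \Delta t - \gamma_i \Delta t^2)$, which is non-negative for all sufficiently small $\Delta t$ exactly as in Lemma~\ref{smooth_1}. The remaining bookkeeping is purely a matter of confirming that the leading correction is order $\Delta t\,\|\nabla_{x_i} f_i^A(\boldsymbol{x})\|^2$ while every other term is order $\Delta t^2$ or higher, so that a single uniform $\Delta t$ preserves the inequality across all agents and the emphasized term $a_i$ stays non-negative.
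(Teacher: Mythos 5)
Your proposal is correct and follows essentially the same route as the paper's proof: apply Lemma~\ref{smooth_2} with the substitution $x_i - x'_i = \Delta t\,\nabla_{x_i} f_i^A(\boldsymbol{x})$, lower-bound the resulting inner product via Lemma~\ref{smooth_1}, and absorb the leftover $\tfrac{\beta_i}{2}\Delta t^2\|\nabla_{x_i} f_i^A(\boldsymbol{x})\|^2$ penalty into the higher-order slack (the paper folds it into a coefficient $\xi_i = \delta_i + \tfrac{\beta_i}{2}\|\nabla_{x_i} f_i^A(\boldsymbol{x})\|^2$, whereas you bound it by $\tfrac{1}{2}\delta_i$ using $\|\nabla_{x_i} f_i^A(\boldsymbol{x})\| \le \zeta$ --- an equivalent bookkeeping choice). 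Your final per-agent correction $\Delta t\,(\|\nabla_{x_i} f_i^A(\boldsymbol{x})\|^2 - \tfrac{3}{2}\delta_i\Delta t - \gamma_i\Delta t^2)$ matches the form $a_i$ actually used in the paper's proof (which carries an overall factor of $\Delta t$ relative to the expression in the lemma statement), so your handling is consistent with how the bound is used downstream.
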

\begin{proof}
Consider simultaneous gradient descent dynamics. Let $x_i' = x_i - \Delta t \nabla_{x_i} f^A_i(\boldsymbol{x})$. Then by Lemmas~\ref{smooth_1} and~\ref{smooth_2}, we find
\begin{align}
    % f^A_i(x_i,x'_{-i}) &\ge f^A_i(x'_i,x'_{-i}) + \langle \nabla_{x'_i} f^A_i(x_i, x'_{-i}), -\Delta t \dot{x} \rangle - \frac{\beta_i}{2} ||x_i-x'_i||^2 \\
    f^A_i(x_i,x'_{-i}) &\ge f^A_i(x'_i,x'_{-i}) + \langle \nabla_{x'_i} f^A_i(x'_i, x'_{-i}), x_i - x'_i \rangle - \frac{\beta_i}{2} ||x_i-x'_i||^2 \\
    &= f^A_i(x'_i,x'_{-i}) + \Delta t \langle \nabla_{x'_i} f^A_i(x'_i, x'_{-i}), \nabla_{x_i} f^A_i(x_i, x_{-i}) \rangle - \frac{\beta_i}{2} ||x_i-x'_i||^2 \\
    &= f^A_i(x'_i,x'_{-i}) + \Delta t \langle \nabla_{x'_i} f^A_i(x'_i, x'_{-i}), \nabla_{x_i} f^A_i(x_i, x_{-i}) \rangle - \frac{\beta_i}{2} \Delta t^2 ||\nabla_{x_i} f_i^A(x_i, x_{-i})||^2 \\
    % &\ge f^A_i(x'_i,x'_{-i}) + \underbrace{\frac{\alpha_i}{2} \Delta t - \frac{1}{2}(\beta_i \xi_i + \gamma_i) \Delta t^2}_{a_i}
    &\ge f^A_i(x'_i,x'_{-i}) + \underbrace{||\nabla_{x_i} f_i^A(x_i, x_{-i})||^2 \Delta t - \xi_i \Delta t^2 - \gamma_i \Delta t^3}_{a_i}
\end{align}
where $\xi_i = \delta_i + \frac{\beta_i}{2} ||\nabla_{x_i} f_i^A(x_i, x_{-i})||^2$. The parameters $\xi_i$ and $\gamma_i$ are bounded, therefore, there exists a $\Delta t > 0$ small enough such that $a_i \ge 0$.
% $\xi_i = \frac{3}{2} \beta_i ||\nabla_{x_i} f_i^A(x_i, x_{-i})||^2$
% The constants $\alpha_i$ are all greater than or equal to zero. If $\alpha_i=0$, then $f_i^A(x_i,x'_{-i}) = f_i^A(x'_i,x'_{-1})$ implying $a_i=0$. On the other hand, if $\alpha_i>0$, because the remaining parameters, $\rho_i, \xi_i, \gamma_i$ are bounded, there exists a $\Delta t > 0$ small enough such that $a_i \ge 0$.
% \frac{1}{2} \beta_i \big( (3\beta_i + 1)||\nabla_{x_i} f_i^A(x_i, x_{-i})||^2 + \beta_i \sum_j ||\nabla_{x_j} f_j^A(x_j, x_{-j})||^2 \big)
\end{proof}

\begin{theorem}[Local Smoothness]
\label{local_smooth}
Given $n$ losses, $f_i^A(\boldsymbol{x})$, $i\in \{1,\ldots,n\}$, with $\beta_i$-Lipschitz gradients there exists a $\Delta t > 0$ sufficiently small such that the game defined by these losses is smooth only if
\begin{align}
    \sum_{i=1}^n a_i &\le \lambda \sum_{i=1}^n f^A_i(x_i,x_{-i}) + (\mu-1) \sum_{i=1}^n f^A_i(x'_i,x'_{-i}) \,\, \forall x_i
\end{align}
where $x_i' = x_i - \Delta t \nabla_{x_i} f^A_i(\boldsymbol{x})$ and $a_i = ||\nabla_{x_i} f_i^A(x_i, x_{-i})||^2 \Delta t - \xi_i \Delta t^2 - \gamma_i \Delta t^3 \ge 0$. Note this is a necessary, not sufficient condition for a game to be globally smooth.
\end{theorem}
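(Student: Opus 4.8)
The plan is to observe that Theorem~\ref{local_smooth} follows almost immediately by combining the smoothness definition (Definition~\ref{def:smooth_game}) with the lower bound already established in Lemma~\ref{taylor}, specialized to the particular deviation $\boldsymbol{x}'$ generated by one step of simultaneous gradient descent. Because smoothness is a universally quantified statement—it must hold for \emph{all} $\boldsymbol{x}, \boldsymbol{x}' \in \mathcal{X}$—a necessary condition is obtained by testing it on any single convenient choice of $\boldsymbol{x}'$; here the convenient choice is $x_i' = x_i - \Delta t \nabla_{x_i} f^A_i(\boldsymbol{x})$, precisely the point around which Lemma~\ref{taylor} is stated.

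First I would instantiate the smoothness inequality at this gradient-descent deviation, which is legitimate since $\boldsymbol{x}' \in \mathcal{X}$ and smoothness holds for every pair. This upper bounds the cross term $\sum_i f^A_i(x_i, x'_{-i})$:
\begin{align}
    \sum_{i=1}^n f^A_i(x_i,x'_{-i}) &\le \lambda \sum_{i=1}^n f^A_i(x_i,x_{-i}) + \mu \sum_{i=1}^n f^A_i(x'_i,x'_{-i}). \nonumber
\end{align}
Next I would invoke Lemma~\ref{taylor}, which—for $\Delta t$ small enough that each $a_i \ge 0$, as guaranteed there via Lemmas~\ref{smooth_1} and~\ref{smooth_2}—lower bounds the \emph{same} cross term:
\begin{align}
    \sum_{i=1}^n f^A_i(x_i,x'_{-i}) &\ge \sum_{i=1}^n f^A_i(x'_i,x'_{-i}) + \sum_{i=1}^n a_i. \nonumber
\end{align}
Chaining these lower and upper bounds on $\sum_i f^A_i(x_i,x'_{-i})$ and cancelling the common $\sum_i f^A_i(x'_i,x'_{-i})$ term yields
\begin{align}
    \sum_{i=1}^n a_i &\le \lambda \sum_{i=1}^n f^A_i(x_i,x_{-i}) + (\mu-1) \sum_{i=1}^n f^A_i(x'_i,x'_{-i}), \nonumber
\end{align}
which is exactly the asserted inequality. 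The existence of a sufficiently small $\Delta t$ is inherited directly from Lemma~\ref{taylor}.

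The only genuinely delicate point—and the reason the statement is phrased as a \emph{necessary but not sufficient} condition—is the direction of quantification. The condition is obtained by evaluating global smoothness at a \emph{single} deviation $\boldsymbol{x}'$ (the gradient step), so satisfying this local inequality does not recover smoothness for all other $\boldsymbol{x}'$; hence the ``only if.'' I would state this explicitly to forestall any misreading as an equivalence. Everything else is bookkeeping: matching the $a_i$, $\xi_i$, $\gamma_i$ notation to Lemma~\ref{taylor} and confirming the cancellation of the shared $\sum_i f^A_i(x'_i,x'_{-i})$ term.
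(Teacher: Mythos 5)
Your proposal is correct and follows exactly the paper's own argument: instantiate the smoothness definition at the simultaneous gradient-descent deviation, lower bound the cross term $\sum_i f^A_i(x_i,x'_{-i})$ via Lemma~\ref{taylor}, chain the two bounds, and cancel the shared $\sum_i f^A_i(x'_i,x'_{-i})$ term. Your added remark on the direction of quantification (testing a universally quantified condition at a single deviation yields only necessity) is a clear articulation of why the theorem is stated as ``only if,'' consistent with the paper's closing note.
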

\begin{proof}
Plugging Lemma~\ref{taylor} into the original definition of smoothness for $x_i' = x_i - \Delta t \nabla_{x_i} f^A_i(\boldsymbol{x})$ and $\Delta t$ sufficiently small gives
\begin{align}
    \sum_{i=1}^n f^A_i(x'_i,x'_{-i}) + a_i \le \sum_{i=1}^n f^A_i(x_i,x'_{-i}) &\le \lambda \sum_{i=1}^n f^A_i(x_i,x_{-i}) + \mu \sum_{i=1}^n f^A_i(x'_i,x'_{-i}).
\end{align}
Rearranging the outer terms of the inequalities gives
\begin{align}
    \sum_{i=1}^n a_i &\le \lambda \sum_{i=1}^n f^A_i(x_i,x_{-i}) + (\mu-1) \sum_{i=1}^n f^A_i(x'_i,x'_{-i}).
\end{align}
\end{proof}

Note this is different than the definition of local smoothness in~\citep{roughgarden2015local}. 

\begin{theorem}
Given $n$ losses, $f_i^A(\boldsymbol{x})$, $i\in \{1,\ldots,n\}$, with $\beta_i$-Lipschitz gradients there exists a $\Delta t > 0$ sufficiently small such that the \textbf{utilitarian} local price of anarchy of the game (to $\mathcal{O}(\Delta t^2)$) is upper bounded by
\begin{equation}
    \rho \le \max_i\{1 + \Delta t \, \texttt{ReLU} \Big( \frac{d}{dt} \log(f_i^A(\boldsymbol{x})) + \frac{||\nabla_{x_i} f_i^A(\boldsymbol{x})||^2}{f_i^A(\boldsymbol{x})\bar{\mu}} \Big)\} \label{poa_theorem_proof}
\end{equation}
where $i$ indexes each agent and $\bar{\mu}$ is a user defined nonnegative scalar.
\end{theorem}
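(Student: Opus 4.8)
The plan is to turn the aggregate local-smoothness certificate of Theorem~\ref{local_smooth} into one certificate per agent, apply the Roughgarden ratio bound of Lemma~\ref{lemma:smooth_to_poa} agent-by-agent, and recombine through per-agent ratio inequalities. Abbreviate $F_i = f_i^A(\boldsymbol{x})$ and $F_i' = f_i^A(\boldsymbol{x}')$ for the losses before and after the simultaneous gradient step $x_i' = x_i - \Delta t\,\nabla_{x_i} f_i^A(\boldsymbol{x})$, so that Theorem~\ref{local_smooth} reads $\sum_i a_i \le \lambda \sum_i F_i + (\mu-1)\sum_i F_i'$ with each $a_i \ge 0$. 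First I would show that, for $\Delta t$ small, it is enough to find per-agent coefficients $(\lambda_i,\mu_i)$ satisfying $a_i \le \lambda_i F_i + (\mu_i-1)F_i'$: summing these and using $F_i' \ge 0$ together with $\lambda=\max_i\lambda_i$, $\mu=\max_i\mu_i$ recovers the aggregate inequality, so local individual smoothness implies the group condition. This is the decomposition flagged in the proof sketch and uses the positivity hypothesis $f_i^A>0$.

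Next I would solve, for each $i$, the two-dimensional program $\rho_i = \inf_{\lambda_i>0,\,\mu_i<1}\tfrac{\lambda_i}{1-\mu_i}$ subject to this constraint and to the universal constraint $\lambda_i+\mu_i\ge 1$ (from setting $\boldsymbol{x}=\boldsymbol{x}'$ in Definition~\ref{def:smooth_game}). Taking $\lambda_i$ at its least feasible value $\lambda_i=\max\{\,1-\mu_i,\ (a_i+(1-\mu_i)F_i')/F_i\,\}$ collapses the objective to $\max\{\,1,\ \tfrac{F_i'}{F_i}+\tfrac{a_i}{(1-\mu_i)F_i}\,\}$, a scalar problem with closed-form optimizer. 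The user-defined $\bar{\mu}$ is a lower bound on the certifying factor $1-\mu_i$ (equivalently an upper bound on $\mu_i$) that I substitute into the denominator so that $\rho_i$ is computable, giving $\rho_i = 1 + \texttt{ReLU}\!\big(\tfrac{F_i'-F_i}{F_i}+\tfrac{a_i}{\bar{\mu}F_i}\big)$, where the $\texttt{ReLU}$ is precisely the $\max\{1,\cdot\}$ rewritten around $1$.

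For the group bound, I would drop $a_i\ge 0$ in the per-agent constraint to get $(1-\mu_i)F_i'\le\lambda_i F_i$, i.e.\ $F_i'\le\rho_i F_i$ for every $i$; summing gives $\sum_i F_i'\le(\max_i\rho_i)\sum_i F_i$, which bounds the local utilitarian price of anarchy by $\max_i\rho_i$. It then remains to expand to first order: Lemma~\ref{taylor} gives $a_i=\|\nabla_{x_i}f_i^A\|^2\Delta t+\mathcal{O}(\Delta t^2)$, and the finite-difference/log-derivative identity $\tfrac{F_i'-F_i}{F_i}=\Delta t\,\tfrac{d}{dt}\log f_i^A(\boldsymbol{x})+\mathcal{O}(\Delta t^2)$ turns the $\texttt{ReLU}$ argument into $\Delta t\big(\tfrac{d}{dt}\log f_i^A+\tfrac{\|\nabla_{x_i}f_i^A\|^2}{\bar{\mu}f_i^A}\big)+\mathcal{O}(\Delta t^2)$, yielding \eqref{poa_theorem_proof} after the maximum over $i$.

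The step I expect to be the main obstacle is the decomposition itself: justifying that per-agent certification is legitimate and, crucially, that the group ratio is governed by $\max_i\rho_i$ rather than by $\tfrac{\max_i\lambda_i}{1-\max_i\mu_i}$ (these need not agree, since the maximizing indices for $\lambda_i$ and $\mu_i$ can differ). The clean way around this is to never recombine into a single $(\lambda,\mu)$ but to pass directly through the per-agent inequalities $F_i'\le\rho_i F_i$ as above. A secondary but delicate point is the $\Delta t$ bookkeeping: because $a_i$ already carries an explicit factor of $\Delta t$, every $\mathcal{O}(\Delta t^2)$ remainder — the $\xi_i\Delta t^2$ and $\gamma_i\Delta t^3$ terms of Lemma~\ref{taylor} and the curvature term in $F_i'$ — must be checked to sit below the stated truncation, which is exactly where the Lipschitz-gradient assumption is used.
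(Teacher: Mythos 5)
Your overall route is the paper's: decompose the aggregate local-smoothness certificate of Theorem~\ref{local_smooth} into per-agent constraints $a_i \le \lambda_i b_i + (\mu_i-1)c_i$, observe that summing with $\lambda=\max_i\lambda_i$, $\mu=\max_i\mu_i$ recovers the group condition (using non-negativity of the losses), solve the per-agent two-dimensional program in closed form with $\hat{\mu}_i = 1-\mu_i$ pinned to $\bar{\mu}$ so that $\rho_i = \max\bigl(1,\hat{c}_i + \hat{a}_i/\bar{\mu}\bigr)$, and Taylor-expand $\hat{c}_i$ and $a_i$ to first order in $\Delta t$. Your derivation of the constraint $\lambda_i+\mu_i\ge 1$ by setting $\boldsymbol{x}=\boldsymbol{x}'$ in Definition~\ref{def:smooth_game} is a cleaner justification than the paper's appeal to ``$\rho\ge 1$ by definition,'' and your $\Delta t$ bookkeeping matches Lemma~\ref{taylor}.

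The one place you deliberately deviate is the recombination, and that is where a genuine gap appears. Your worry\textemdash that $\tfrac{\max_i\lambda_i}{1-\max_i\mu_i}$ need not equal $\max_i\rho_i$ because the maximizing indices can differ\textemdash dissolves in this construction: every agent's optimal $\hat{\mu}_i$ is the \emph{same} constant $\bar{\mu}$, so $1-\max_i\mu_i=\min_i\hat{\mu}_i=\bar{\mu}$ and $\tfrac{\max_i\lambda_i}{1-\max_i\mu_i}=\max_i\tfrac{\lambda_i}{\bar{\mu}}=\max_i\rho_i$ exactly; the paper then invokes Lemma~\ref{lemma:smooth_to_poa} on the recombined pair. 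Your proposed substitute\textemdash drop $a_i\ge 0$ to get $F_i'\le\rho_i F_i$ and sum\textemdash does not prove the theorem: it bounds the ratio of total losses at the two endpoints of a single gradient step, which is not the local price of anarchy. The price of anarchy compares the worst Nash of the restricted game against the best achievable welfare on the segment, and the passage from a smoothness certificate to that comparison is precisely the content of Lemma~\ref{lemma:smooth_to_poa} (the Roughgarden argument that uses the Nash best-response property at $\boldsymbol{x}^*$ against the deviation to the optimum). Bypassing that lemma leaves the Nash-versus-optimum comparison unestablished, so you should keep the recombination into a single $(\lambda,\mu)$ pair and route the conclusion through Lemma~\ref{lemma:smooth_to_poa} as the paper does.
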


\begin{proof}
To ease exposition, let $b_i = f_i^A(x_i, x_{-i})$ and $c_i = f_i^A(x'_i, x'_{-i})$ so that local smoothness becomes
\begin{align}
    \sum_{i=1}^n a_i &\le \lambda \sum_{i=1}^n b_i + (\mu-1) \sum_{i=1}^n c_i.
\end{align}

If each agent $i$ ensures local \emph{individual} smoothness is satisfied, i.e.,
\begin{align}
\label{simple_smooth}
    a_i &\le \lambda_i b_i + (\mu_i-1) c_i,
\end{align}
then this is sufficient to satisfy local smoothness
\begin{align}
    \sum_{i=1}^n a_i &\le \max_i\{\lambda_i\} \sum_{i=1}^n b_i + (\max_i\{\mu_i\}-1) \sum_{i=1}^n c_i. \label{global_from_local}
\end{align}

Rearranging inequality~\ref{simple_smooth} and letting $\hat{\mu}_i = 1-\mu_i$, $\hat{a}_i = a_i/b_i$, and $\hat{c}_i = c_i/b_i$ gives
\begin{align}
    \lambda_i &\ge \frac{a_i}{b_i} - (\mu_i-1) \frac{c_i}{b_i} \\
    \lambda_i &\ge \hat{a}_i + \hat{\mu}_i \hat{c}_i.
\end{align}

Let each agent $i$ attempt to measure the local price of anarchy given the losses it observes on its trajectory and call this measure $\rho_i$. Then
\begin{align}
    \rho_i &= \inf_{\lambda_i, \hat{\mu}_i} \Big[ \frac{\lambda_i}{\hat{\mu}_i} \Big] \\
    &\text{ s.t. } \\
    \lambda_i &\ge \hat{a}_i + \hat{c}_i \hat{\mu}_i \label{con_smooth} \\
    \lambda_i &\ge \hat{\mu}_i \label{con_poa} \\
    \hat{\mu}_i &> 0 \label{con_feasible} \\
    \hat{\mu}_i &\le \bar{\mu} \label{con_bound}
\end{align}
where constraint~\ref{con_smooth} ensures local individual smoothness, constraint~\ref{con_poa} encodes that price of anarchy $\ge1$ by definition, and constraint~\ref{con_feasible} is required by the original conditions on $\mu$ for smoothness. Note that including an additional constraint for $\lambda_i > 0$ would be redundant and so is omitted. Constraint~\ref{con_bound} is optional and included to encode a prior by the agents on the smoothness parameters.

Recall that $\hat{a}_i$ and $\hat{c}_i$ are both non-negative; $\hat{c}_i$ controls the slope of constraint~\ref{con_smooth}. We can solve this optimization in closed form for the four distinct cases outlined in Figure~\ref{fig:derivation_figures}.
\begin{figure}[ht]
    \centering
    \includegraphics[scale=0.125]{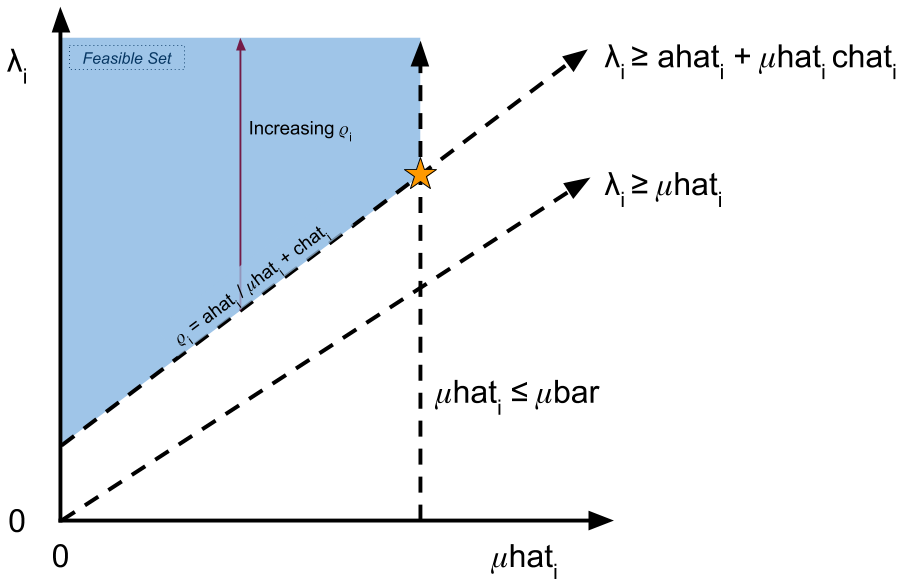}
    \includegraphics[scale=0.125]{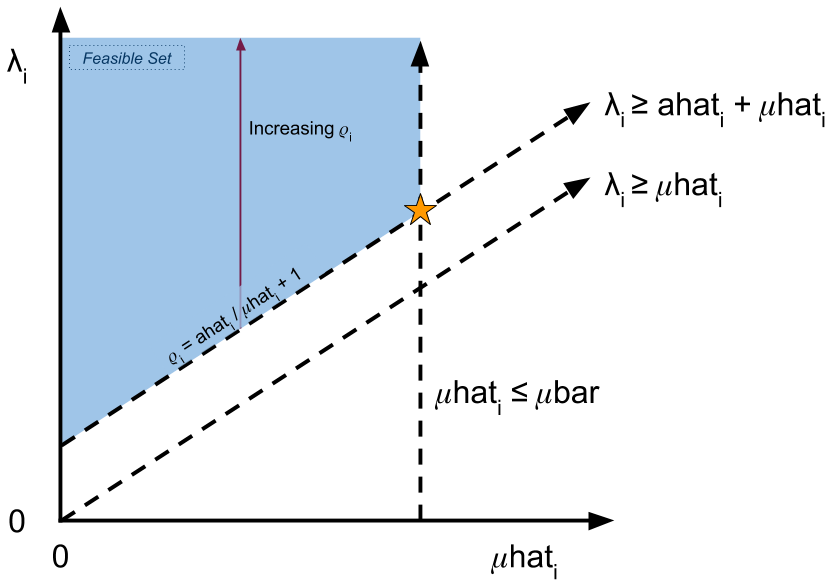}
    \includegraphics[scale=0.125]{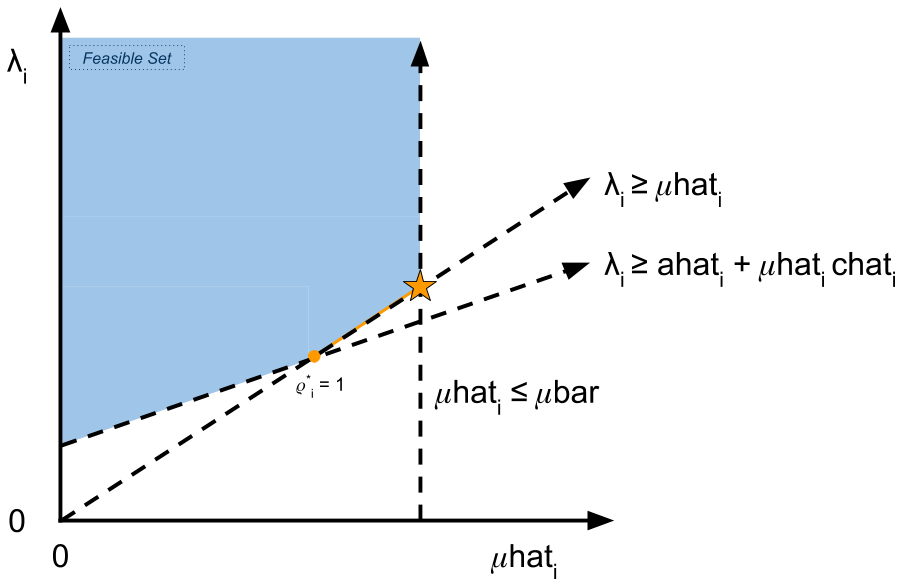}
    \includegraphics[scale=0.125]{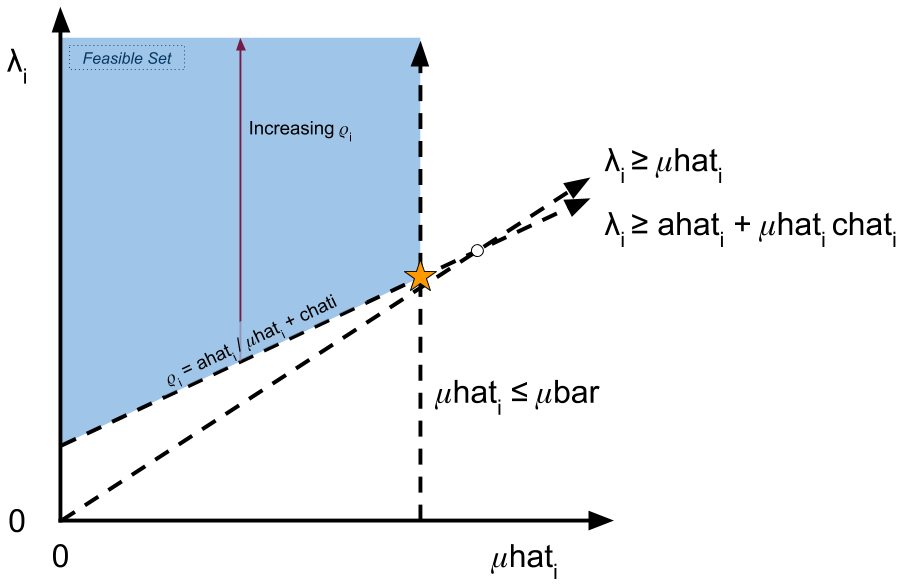}
    \caption{From left to right: a) $\hat{c}_i > 1$, b) $\hat{c}_i = 1$, c) $\hat{c}_i < 1$ and $\hat{c}_i + \frac{\hat{a}_i}{\bar{\mu}} \le 1$, d) $\hat{c}_i < 1$ and $\hat{c}_i + \frac{\hat{a}_i}{\bar{\mu}} > 1$.}
    \label{fig:derivation_figures}
\end{figure}

Figure~\ref{fig:derivation_figures} shows $\hat{\mu}$ always leads to minimal $\rho_i$ at $\bar{\mu}$, therefore $\max_i\{\mu_i\} = \max_i\{1-\hat{\mu}_i\} = 1-\bar{\mu}$. And so $\rho \le \frac{\max_i\{\lambda_i\}}{\bar{\mu}} = \max_i\{\rho_i\} = \max(1,\frac{\max_i\{\hat{a}_i + \bar{\mu} \hat{c}_i\}}{\bar{\mu}}) = \max(1,\max_i\{\frac{\hat{a}_i}{\bar{\mu}} + \hat{c}_i\})$. Assuming $\bar{\mu}$ is large allows us to approximate with $\max(1,\max_i\{\hat{c}_i\})$, so the local price of anarchy is determined by the largest increase in loss over all the agents; if all losses are decreasing, the local price of anarchy is $1$.
% \textcolor{blue}{Interesting that the agents only need to agree on one bit of information, $\bar{\mu}$, to be able to retain a local PoA bound.}

In summary, if $\hat{c}_i < 1$ and $\bar{\mu} \ge \frac{\hat{a}_i}{1-\hat{c}_i}$ (the intersection points of constraints~\ref{con_smooth} and~\ref{con_poa}), then $\rho_i = 1$. The latter inequality, $\frac{\hat{a}_i}{1-\hat{c}_i} \le \bar{\mu}$, can be rewritten as $\hat{c}_i \le 1 - \frac{\hat{a}_i}{\bar{\mu}}$. Alternatively, if $\hat{c}_i = 1$ and $\bar{\mu} \rightarrow \infty$ (i.e., constraint~\ref{con_bound} is omitted), $\rho_i$ also equals $1$. In all other cases, $\rho_i = \frac{\hat{a}_i}{\bar{\mu}_i} + \hat{c}_i$. If we assume $\hat{a}_i > 0$ (i.e., $||\nabla_{x_i} f_i^A(\boldsymbol{x})||>0$), we can reduce the cases above to
\begin{align}
\begin{cases}
    \rho_i = 1, & \text{ if } \hat{c}_i \le 1 - \frac{\hat{a}_i}{\bar{\mu}} \\
    \rho_i = \hat{c}_i + \frac{\hat{a}_i}{\bar{\mu}}, & \text{ else}.
\end{cases}
\end{align}

Let $\epsilon_i = \frac{\hat{a}_i}{\bar{\mu}} > 0$, then the two cases can be rewritten succinctly as
\begin{align}
    \rho_i &= \max(1, \hat{c}_i + \epsilon_i).
\end{align}

If we expand $\hat{c}_i$ as a series we find
\begin{align}
    \hat{c}_i &= \frac{f_i^A(\boldsymbol{x}')}{f_i^A(\boldsymbol{x})} \\
    &= \frac{f_i^A(\boldsymbol{x}) + \frac{d f^A_i(\boldsymbol{x})}{dt} \Delta t}{f_i^A(\boldsymbol{x})} + \mathcal{O}(\Delta t^2) \\
    &= 1 + \frac{\frac{d f^A_i(\boldsymbol{x})}{dt}}{f_i^A(\boldsymbol{x})} \Delta t + \mathcal{O}(\Delta t^2).
\end{align}

Therefore, to $\mathcal{O}(\Delta t^2)$,
\begin{align}
    \rho_i &= \max(1, 1 + \Big[ \frac{\frac{d f^A_i(\boldsymbol{x})}{dt}}{f_i^A(\boldsymbol{x})} + \overbrace{\frac{||\nabla_{x_i} f_i^A(\boldsymbol{x})||^2}{f_i^A(x_i, x_{-i})\bar{\mu}} \Big] \Delta t}^{\epsilon_i}) \\
    &= 1 + \Delta t \max(0, \frac{\frac{d f^A_i(\boldsymbol{x})}{dt}}{f_i^A(\boldsymbol{x})} + \frac{||\nabla_{x_i} f_i^A(\boldsymbol{x})||^2}{f_i^A(x_i, x_{-i})\bar{\mu}} ) \\
    &= 1 + \Delta t \, \texttt{ReLU} \Big( \frac{d}{dt} \log(f_i^A(\boldsymbol{x})) + \frac{||\nabla_{x_i} f_i^A(\boldsymbol{x})||^2}{f_i^A(x_i, x_{-i})\bar{\mu}} \Big) \\
    % &= 1 + \Delta t \, \texttt{ReLU} \Big( \frac{d}{dt} \log(f_i^A(\boldsymbol{x})) + \frac{||\nabla_{x_i} f_i^A(\boldsymbol{x})||^2}{f_i^A(x_i, x_{-i})\bar{\mu}} \Big) \text{ for } \Delta t \ll 1 \\
    &= 1 + \Delta t \, \texttt{ReLU} \Big( \frac{d}{dt} \log(f_i^A(\boldsymbol{x})) \Big) \text{ as } \bar{\mu} \rightarrow \infty.
\end{align}
\end{proof}

The following lemma establishes that the proposed bound may be tight in some games although we do not conjecture that this bound is at all tight in general.
\begin{lemma}
The local $\rho$ bound with $\mu \rightarrow \infty$ in \eqref{poa_theorem_proof} is tight for some games.
\end{lemma}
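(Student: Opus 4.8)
The plan is to prove tightness by explicit construction: I would exhibit a game and a joint strategy $\boldsymbol{x}$ at which the local price of anarchy in \eqref{poa_util_def} equals the upper bound of \eqref{poa_theorem_proof} (in the $\bar\mu \to \infty$ limit) up to $\mathcal{O}(\Delta t^2)$. The key simplification is to work with a fully \emph{symmetric} game at a symmetric profile, so that $f_i^A(\boldsymbol{x}) = f_1^A(\boldsymbol{x})$ and $\frac{d}{dt} f_i^A(\boldsymbol{x}) = \frac{d}{dt} f_1^A(\boldsymbol{x})$ for every $i$. Under this symmetry the right-hand side of \eqref{poa_theorem_proof} collapses to $1 + \Delta t\,\texttt{ReLU}\big(\frac{d}{dt}\log f_1^A(\boldsymbol{x})\big)$, and since $\log n$ is constant this equals $1 + \Delta t\,\texttt{ReLU}\big(\frac{d}{dt}\log \sum_i f_i^A(\boldsymbol{x})\big)$. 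It therefore suffices to produce a symmetric game in which (i) each agent's loss strictly increases along simultaneous gradient descent, so the $\texttt{ReLU}$ is active and $\hat a_i > 0$, and (ii) the numerator and denominator of \eqref{poa_util_def} reduce to $\sum_i f_i^A(\boldsymbol{x}')$ and $\sum_i f_i^A(\boldsymbol{x})$ respectively, where $\boldsymbol{x}' = \boldsymbol{x} - \Delta t\, F(\boldsymbol{x})$.

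A concrete witness is the two-player quadratic game $f_1(\boldsymbol{x}) = \tfrac12 x_1^2 - \tfrac34 x_1 x_2 + 1$ and $f_2(\boldsymbol{x}) = \tfrac12 x_2^2 - \tfrac34 x_1 x_2 + 1$ evaluated at $\boldsymbol{x} = (1,1)$. One checks directly that $f_i(\boldsymbol{x}) = \tfrac34 > 0$, that each own-gradient $\nabla_{x_i} f_i(\boldsymbol{x}) = \tfrac14$ is nonzero (so the step genuinely moves and the Lipschitz hypothesis applies, the gradients being affine), and that the strong negative cross-coupling makes $\frac{d}{dt} f_i(\boldsymbol{x}) = \tfrac18 > 0$: descending one's own gradient lowers one's own loss, but the opponent's simultaneous move raises it by more. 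Hence $\frac{d}{dt}\log f_i(\boldsymbol{x}) = \tfrac16 > 0$, verifying (i). Positivity of the total loss holds on the short segment even though the global total loss is non-convex here; only the local behavior enters the definition.

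For the numerator and denominator I would make the restricted game precise by letting each agent $i$ independently choose how far to travel along its own descent ray, $x_i(\tau_i) = x_i - \tau_i \nabla_{x_i} f_i(\boldsymbol{x})$ with $\tau_i \in [0,\Delta t]$. Because advancing strictly lowers agent $i$'s own loss for small $\Delta t$ (to leading order $\frac{d}{d\tau_i} f_i = -\|\nabla_{x_i} f_i(\boldsymbol{x})\|^2 < 0$ independently of the other $\tau_j$), travelling to $\tau_i = \Delta t$ is a dominant strategy, so the unique equilibrium of the line-restricted game is the joint endpoint $\boldsymbol{x}'$, fixing the numerator at $\sum_i f_i^A(\boldsymbol{x}')$. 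Meanwhile the joint total loss strictly increases along the single-parameter flow $\boldsymbol{x} - \tau F(\boldsymbol{x})$ at $\tau = 0$, so for small $\Delta t$ the denominator is minimized at $\tau = 0$, giving $\sum_i f_i^A(\boldsymbol{x})$. A first-order Taylor expansion then gives $\rho_{\boldsymbol{x}}(\boldsymbol{f}^A,\Delta t) = \frac{\sum_i f_i^A(\boldsymbol{x}')}{\sum_i f_i^A(\boldsymbol{x})} = 1 + \Delta t\,\frac{d}{dt}\log\sum_i f_i^A(\boldsymbol{x}) + \mathcal{O}(\Delta t^2)$, matching the bound. I expect the main obstacle to be exactly this identification of the ``equilibria constrained to the line'': one must commit to the per-agent ray interpretation (rather than a shared single parameter, under which there is no genuine multi-agent game) and verify that the social-dilemma structure — own loss decreasing while joint loss increases — is precisely what pins the worst line-restricted equilibrium at the endpoint while the segment's minimum total loss sits at the start, so that every inequality in the derivation chain leading to \eqref{poa_theorem_proof} is met with equality.
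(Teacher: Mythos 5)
Your proposal is correct and follows essentially the same route as the paper's proof: an explicit symmetric two-player witness evaluated at a symmetric profile, where each agent's own-gradient step strictly lowers its own loss (so the line-restricted Nash sits at the joint endpoint $\boldsymbol{x}'$) while the cross-coupling makes every individual loss, hence the total, increase along the joint flow (so the welfare optimum sits at $\tau=0$), forcing the ratio to equal $1+\Delta t\,\tfrac{d}{dt}\log f_i^A(\boldsymbol{x})$ to first order. The only differences are cosmetic — the paper uses linear losses $f_1=x_1-\kappa x_2$, $f_2=x_2-\kappa x_1$ with $\kappa>1$ rather than your quadratic pair, and you are somewhat more explicit about the per-agent-ray reading of the line-restricted equilibrium set, which the paper handles by inspection.
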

\begin{proof}
Consider the two player game with loss functions $f_1(x_1) = x_1 - \kappa x_2$ and $f_2(x_2) = x_2 - \kappa x_1$ for players $1$ and $2$ respectively with $\kappa > 1$. Assume the player strategies are constrained to the line segment $x_1(\tau) = x_1 - \tau \Delta t$ and $x_2(\tau) = x_2 - \tau \Delta t$ with $\tau \in [0,1]$. Also, let $x_1 = x_2$ and recall each player is assumed to run gradient descent

Then $\frac{df_1}{dt} = \frac{\partial f_1}{\partial x1} \frac{dx_1}{dt} + \frac{\partial f_1}{\partial x2} \frac{dx_2}{dt} = \kappa - 1 > 0$. Similarly, $\frac{df_2}{dt} = \kappa - 1$. Given $x_1 = x_2$, the price of anarchy bound simplifies to $1 + \Delta t \texttt{ReLU} \frac{d}{dt} \log(f_i^A(\boldsymbol{x})) = 1 + \Delta t \texttt{ReLU} \frac{d/dt f_i^A(\boldsymbol{x})}{f_i^A(\boldsymbol{x})} = 1 + \Delta t \frac{\kappa-1}{f_i(\boldsymbol{x})}$.

Also, $f_1(x(\tau)) = x_1 - \tau \Delta t - \kappa (x_2 - \tau \Delta t) = x_1 - \kappa x_2 - \tau \Delta t (1 - \kappa) = f_1(\boldsymbol{x}) + \tau \Delta t (\kappa - 1)$. Likewise, $f_2(x(\tau)) = f_2(\boldsymbol{x}) + \tau \Delta t (\kappa - 1)$. By inspection, the Nash occurs where $x_1$ and $x_2$ are minimal along the segment at $\tau=1$, so $x_1^* = x_1 - \Delta t$ and $x_2^* = x_2 - \Delta t$. The values at Nash are $f_1(\boldsymbol{x}^*) = f_1(\boldsymbol{x}) + \Delta t (\kappa - 1)$ and $f_2(x(\tau)) = f_2(\boldsymbol{x}) + \Delta t (\kappa - 1)$. In contrast, optimal group loss, $\min_{x_1, x_2} (1 - \kappa)(x_1(\tau) + x_2(\tau))$, occurs at $\tau = 0$ and with values of $f_1(\boldsymbol{x})$ and $f_2(\boldsymbol{x})$. This implies the true price of anarchy is $1 + \Delta t \frac{2(\kappa-1)}{f_1(\boldsymbol{x}) + f_2(\boldsymbol{x})}$. Given $x_1 = x_2$, the true price of anarchy simplifies to $1 + \Delta t \frac{\kappa-1}{f_i(\boldsymbol{x})}$ which is the same as the upper bound.
\end{proof}
The goal of this work is to derive an approximate proxy that can be both easily estimated and optimized. The bound we derive relies on first order information. It would be interesting to tighten the bound with second order information or by computing the price of anarchy for an appropriate polymatrix approximation to the game.

\subsection{Accommodating Negative Loss Functions}
In experiments, we replace the second term, $\epsilon_i$, with a constant hyperparameter $\epsilon$:
\begin{align}
    \rho_i &= 1 + \Delta t \, \texttt{ReLU} \Big( \frac{d}{dt} \log(f_i^A(\boldsymbol{x})) + \epsilon \Big).
\end{align}

The $\log$ term appears due to price of anarchy being defined as the worst case Nash total loss divided by the minimal total loss. Although we have not defined an alternative price of anarchy, it is reasonable to believe one which defines the price of anarchy additively might drop the $\log$ term, leading to minimizing the following:
\begin{align}
    \hat{c}_i &= f_i^A(\boldsymbol{x}') - f_i^A(\boldsymbol{x}) \\
    &= f_i^A(\boldsymbol{x}) + \frac{d f^A_i(\boldsymbol{x})}{dt} \Delta t - f_i^A(\boldsymbol{x}) + \mathcal{O}(\Delta t^2) \\
    &= \frac{d f^A_i(\boldsymbol{x})}{dt} \Delta t + \mathcal{O}(\Delta t^2)
\end{align}
so that
\begin{align}
    \rho_i &= \Delta t \, \texttt{ReLU} \Big( \frac{d}{dt} f_i^A(\boldsymbol{x}) + \tilde{\epsilon} \Big)
\end{align}
where $\tilde{\epsilon}_i \approx \frac{||\nabla_{x_i} f_i^A(\boldsymbol{x})||^2}{\bar{\mu}}$ is replaced in experiments with a constant hyperparameter, $\tilde{\epsilon}$ as before. This objective is appealing as it does not require losses to be positive.

\subsubsection{Multiplicative vs Additive Price of Anarchy}
\label{multiplicative_poa}
% \textcolor{blue}{This seems a bit arbitrary to ggi. Counterargument is that poa, as defined multiplicatively, is already a bit arbitrary. it is not invariant to simple changes to loss functions. Let poa be $a/b$. If add a constant to each loss in a game (doesn't change nash or opt), new poa becomes $(a+nc)/(b+nc) \rightarrow 1$ as $c \rightarrow \infty$. likewise, let $c \rightarrow^+ -b/n$, then new poa $\rightarrow (a-b)/(b-b) \rightarrow \infty$. the most important, and general part of this learning rule is the the improve-stay, suffer-shift component. can i refer the reader to the appendix for this argument? just say multiplicative poa can be made arbitrarily large or small by adding constants to losses?}
In~\S\ref{decentralized}, we proposed an alternative gradient direction to the one derived in \eqref{lin_poa_dist}. This was a pragmatic change to make D3C amenable to games with negative loss, but may have appeared theoretically unappealing to the reader. Here, we show that the price of anarchy, as a multiplicative ratio, is already a somewhat arbitrary and non-robust choice.

Specifically, the price of anarchy of a game is not invariant to a global offset to the loss functions. Let the original price of anarchy of a game be $\frac{a}{b}$. Consider adding a constant $c$ to each of the $n$ losses in the game; note this does not change the locations of the Nash equilibrium or the total loss minimizer. However, the new price of anarchy becomes $\frac{a+nc}{b+nc} \rightarrow 1$ as $c \rightarrow \infty$. On the other hand, let $c \rightarrow -b/n$ from the right. Then the new price of anarchy approaches infinity. In summary, the price of anarchy, as defined multiplicatively, can be made arbitrarily large or small by adding a constant to each loss function in the game.

By removing the $\log$ term from the gradient, $\nabla_{A_i} \rho_i$, we effectively removed this effect. Lastly, the most important and general part of gradient direction, $\nabla_{A_i} \rho_i$, is the the Improve-Stay, Suffer-Shift component which is retained in $\tilde{\nabla}_{A_i} \rho_i$.

\subsubsection{Why Minimize $\frac{d}{dt} f_i^A(\boldsymbol{x})$ w.r.t. $A_i$? Why Not $\frac{d}{dt} f_i(\boldsymbol{x})$?}
The local price of anarchy is defined using the time derivative of the transformed loss. Instead, can agents minimize the time derivative of their original loss w.r.t. $A_i$? Note the dependence on $A_i$ appears in the time derivative terms through the update dynamics, e.g. $\frac{dx_i}{dt} = \frac{dx_i}{dt}(A)$.

In our loss mixing model, agent $i$ can influence the update of agent $j$ directly through $A_i$. This occurs because the transformed losses are computed using $A^\top$ and so $A_{ij}$ is used to re-mix agent $j$'s loss. This allows agent $i$ to affect the $h(\frac{dx_{j \ne i}}{dt})$ terms mentioned back in~\S\ref{gdwgd} and~\S\ref{game_is_prob}, circumventing the issues originally discussed in those sections.

% The answer is that this cannot solve the problem we are interested in. This goes back to~\S\ref{gdwgd} and~\S\ref{game_is_prob}. This cannot work because learning a loss mixture $A_i$ in this way functions indirectly as learning a strategy update direction $\frac{dx_i}{dt}$. Note that $\nabla_{A_i} f_i(t+\Delta t) = \nabla_{A_i} \big( f_i(t+\Delta t) - f_i(t) \big) \propto \nabla_{A_i} \frac{f_i(t+\Delta t) - f_i(t)}{\Delta t} \,\,\substack{\lim \\ \Delta t \rightarrow 0}\,\, \frac{d}{dt} f_i(t)$. The two sections referenced show that minimizing $f_i(t+\Delta t)$ cannot account for terms out of agent $i$'s control, specifically $h(\frac{dx_{j \ne i}}{dt})$.

However, we conducted experiments on the prisoner's dilemma using this approach, and although minimizing $\frac{d}{dt} f_i(\boldsymbol{x})$ w.r.t. $A_i$ worked for the $2$-player variant, it failed to minimize the price of anarchy for $3$, $5$, or $10$ players. Therefore, we discontinued its use in further experiments.

\subsection{Egalitarian Price of Anarchy}
\label{egalitarian}
If the objective of interest is \emph{egalitarian} rather than \emph{utilitarian}, then a game is $(\lambda, \mu)$-smooth instead if:
\begin{align}
    \sum_{i=1}^n f^A_i(x_i,x'_{-i}) &\le \lambda \max_{i} f^A_i(x_i,x_{-i}) + \mu \max_{i} f^A_i(x'_i,x'_{-i})
\end{align}
for all $\boldsymbol{x}, \boldsymbol{x}' \in \mathcal{X}$ where $\lambda > 0$, $\mu < 1$, and $\max_i f_i^A(\boldsymbol{x})$ is assumed to be non-negative for any $\boldsymbol{x} \in \mathcal{X}$.

The price of anarchy, $\rho_e$, gives the ratio of the worst case Nash max-loss to the minimal max-loss:
\begin{align}
    \rho &= \frac{\max_{\mathcal{X}^*} \max_i f^A_i(\boldsymbol{x}^*)}{\min_{\mathcal{X}} \max_i f^A_i(\boldsymbol{x})} \ge 1 \\
    &\le \inf_{\lambda>0, \mu<1} \Big[ \frac{\lambda}{1-\mu} \Big]
\end{align}
where $\boldsymbol{x}^*$ is an element of the set of Nash equilibria, $\mathcal{X}^*$.

\begin{theorem}
Given $n$ losses, $f_i^A(\boldsymbol{x})$, $i\in \{1,\ldots,n\}$, with $\beta_i$-Lipschitz gradients there exists a $\Delta t > 0$ sufficiently small such that the local \textbf{egalitarian} price of anarchy of the game (to $\mathcal{O}(\Delta t^2)$) is upper bounded by
\begin{equation}
    \rho_e \le 1 + \Delta t \, \texttt{ReLU} \Big( \frac{d}{dt} \log(\max_i\{f_i^A(\boldsymbol{x})\}) + \frac{\sum_{i=1}^n ||\nabla_{x_i} f_i^A(\boldsymbol{x})||^2}{\bar{\mu} \max_i f_i^A(\boldsymbol{x})} \Big).
\end{equation}
where $i$ indexes each agent and $\bar{\mu}$ is a user defined nonnegative scalar.
\end{theorem}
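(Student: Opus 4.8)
The plan is to replay the utilitarian argument (Lemmas~\ref{smooth_1}--\ref{taylor}, Theorem~\ref{local_smooth}, and the two-dimensional program with constraints~\ref{con_smooth}--\ref{con_bound}) with the egalitarian smoothness inequality substituted for the utilitarian one. The key structural observation is that the $\max_i$ on the right-hand side of the egalitarian definition collapses what were $n$ per-agent constraints into a single constraint, so no outer $\max_i$ over agents survives into the final expression. First I would invoke Lemma~\ref{taylor}, which is stated purely in terms of the gradient-descent update $x_i' = x_i - \Delta t \nabla_{x_i} f_i^A(\boldsymbol{x})$ and is insensitive to whether the game value is a sum or a max: it yields $\sum_i f_i^A(x_i, x'_{-i}) \ge \sum_i f_i^A(x'_i, x'_{-i}) + \sum_i a_i$ with each $a_i \ge 0$ for $\Delta t$ small. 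Plugging this lower bound into the egalitarian smoothness definition gives the necessary condition
\[
\sum_i f_i^A(x'_i,x'_{-i}) + \sum_i a_i \le \lambda \max_i f_i^A(x_i,x_{-i}) + \mu \max_i f_i^A(x'_i,x'_{-i}).
\]

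Writing $B = \max_i f_i^A(x_i,x_{-i})$ and $C = \max_i f_i^A(x'_i,x'_{-i})$, I would use positivity of the losses in the form $\sum_i f_i^A(x'_i,x'_{-i}) \ge C$ to absorb the sum term, reducing the above to the single-constraint analogue of Theorem~\ref{local_smooth}, namely $\sum_i a_i \le \lambda B + (\mu-1)C$. Dividing by $B$ and setting $\hat{a} = \sum_i a_i / B$, $\hat{c} = C/B$, and $\hat{\mu} = 1-\mu$ then recovers exactly the constraint $\lambda \ge \hat{a} + \hat{c}\,\hat{\mu}$ that drives the utilitarian proof, except that there is now a single pair $(\hat{a}, \hat{c})$ rather than one per agent. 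I would then reuse verbatim the program in constraints~\ref{con_smooth}--\ref{con_bound}, whose closed form (the four cases of Figure~\ref{fig:derivation_figures}) gives $\rho_e = \max(1,\, \hat{c} + \hat{a}/\bar{\mu})$ under the assumption $\hat{a} > 0$.

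Finally I would Taylor-expand the two normalized quantities to first order in $\Delta t$. From the definition of $a_i$ in Lemma~\ref{taylor} one gets $\hat{a} = \Delta t\, \sum_i \lVert \nabla_{x_i} f_i^A(\boldsymbol{x}) \rVert^2 / \max_i f_i^A(\boldsymbol{x}) + \mathcal{O}(\Delta t^2)$, which supplies the $\bar{\mu}$ term with $\sum_i \lVert\nabla_{x_i} f_i^A\rVert^2$ in the numerator and $\max_i f_i^A$ in the denominator. Expanding the ratio of maxima gives $\hat{c} = 1 + \Delta t\, \tfrac{d}{dt}\log(\max_i f_i^A(\boldsymbol{x})) + \mathcal{O}(\Delta t^2)$. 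Substituting both into $\rho_e = \max(1, \hat{c} + \hat{a}/\bar{\mu})$ and rewriting $\max(1, 1+\Delta t(\cdot))$ as $1 + \Delta t\,\texttt{ReLU}(\cdot)$ produces the claimed bound.

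The main obstacle I anticipate is the differentiability of $\max_i f_i^A(\boldsymbol{x})$, since the expansion of $\hat{c}$ relies on $\tfrac{d}{dt}\max_i f_i^A = \tfrac{d}{dt} f_{i^*}^A$ for the maximizing index $i^*$, which is only valid when that maximizer is unique and hence locally constant along the trajectory. I would address this by arguing that for $\Delta t$ sufficiently small the argmax cannot change in the generic case, so the max is differentiable along the gradient step; the non-generic tie case would need to be flagged as a measure-zero event or handled via a one-sided derivative. I would also note that the reduction $\sum_i f_i^A(x'_i,x'_{-i}) \ge C$ is precisely where the positivity hypothesis on the losses is consumed.
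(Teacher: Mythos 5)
Your proposal is correct and follows essentially the same route as the paper: invoke Lemma~\ref{taylor}, substitute into the egalitarian smoothness definition, use non-negativity of the losses via $\sum_i f_i^A(x'_i,x'_{-i}) \ge \max_i f_i^A(x'_i,x'_{-i})$ to collapse to the single constraint $a \le \lambda b + (\mu-1)c$, and then rerun the utilitarian two-dimensional program and Taylor expansion without the outer $\max_i$ decomposition. Your caveat about differentiability of $\max_i f_i^A$ along the trajectory is a legitimate point the paper leaves implicit, but it does not alter the argument.
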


\begin{proof}
By Lemma~\ref{taylor},
\begin{align}
    \sum_{i=1}^n f^A_i(x'_i,x'_{-i}) + a_i \le \sum_{i=1}^n f^A_i(x_i,x'_{-i}) &\le \lambda \max_{i} f^A_i(x_i,x_{-i}) + \mu \max_{i} f^A_i(x'_i,x'_{-i}).
\end{align}
Rearranging the outer terms of the inequalities gives
\begin{align}
    \sum_{i=1}^n a_i &\le \lambda \max_{i} f^A_i(x_i,x_{-i}) + \mu \max_{i} f^A_i(x'_i,x'_{-i}) - \sum_{i=1}^n f^A_i(x'_i,x'_{-i}) \\
    &\le \lambda \max_{i} f^A_i(x_i,x_{-i}) + (\mu-1) \max_{i} f^A_i(x'_i,x'_{-i}) \\
    \implies a &\le \lambda b + (\mu-1) c.
\end{align}
where $a = \sum_{i=1}^n a_i$, $b = \max_{i} f^A_i(x_i,x_{-i})$, and $c = \max_{i} f^A_i(x'_i,x'_{-i})$. The proof proceeds as before in the \emph{utilitarian} case except the price of anarchy does not decompose into a max over agent-centric estimates.
\end{proof}

\section{Description of Games in Experiments}
We describe the traffic network and prisoner's dilemma games in detail here. We point the reader to~\citep{eccles2019imitation} for further details of Coins and~\citep{hughes2018inequity} for Cleanup.

\subsection{Generating Networks that Exhibit Braess's Paradox}
In order to randomly generate a traffic network exhibiting Braess's paradox, it is sufficient to guarantee two properties. One is that the shortcut route is a strictly dominant path (shorter commute time). This ensures all agents take the shortcut in the Nash equilibrium. The other is that there exists a joint strategy avoiding the shortcut with lower total commute time than all agents taking the shortcut. We assume there are four drivers.

\begin{figure}[ht]
    \centering
    \includegraphics[scale=0.5]{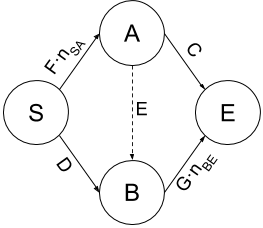}
    \caption{A theoretical traffic nework with congestion parameters, $F$ and $G$, and constant commute time parameters $C$, $D$, and $E$.}
    \label{fig:rand_braess_net}
\end{figure}

% Let $P_{SA} = P_{\text{Start}\rightarrow A}$ and $F_{BE} = P_{B\rightarrow \text{End}}$.
The shortcut, SABE, is a strictly dominant (strictly shorter commute) if

\begin{align}
    Fn_{sa} + Gn_{be} + E &< Fn_{sa} + C \\
    Fn_{sa} + Gn_{be} + E &< Gn_{be} + D \\
    \implies E &< \min\{C - Gn_{be}, D - Fn_{sa}\} \\
    \implies G &< \frac{C}{n_{be}} \text{ which is ensured if } C > 4G \\
    \implies F &< \frac{C}{n_{sa}} \text{ which is ensured if } D > 4F.
\end{align}

And there exists a pure joint strategy with at least $\Delta$ less total commute time if
\begin{align}
    \tau_{Nash} &= 4(4(F+G) + E) \\
    \tau_{Opt} &= \argmin_{n_{sa} \in \{1, 2, 3\}, n_{be} = 4 - n_{sa}} \{ n_{sa} (F n_{sa} + C) + n_{be} ( G n_{be} + D)\} \\
    \tau_{Nash} > \tau_{Opt} + \Delta &\implies E > \frac{\tau_{opt} + \Delta}{4} - 4(F+G).
\end{align}

So we can randomly generate a Braess network with Algorithm~\ref{gen_braess}.
\begin{algorithm}[ht]
\begin{algorithmic}
    \STATE \texttt{fail} $\leftarrow$ True
    \WHILE{\texttt{fail}}
        \STATE $F \sim \{1, \ldots, 20\}$
        \STATE $G \sim \{1, \ldots, 20\}$
        \STATE $C \sim \{4G + 10, \ldots, 4G + 20\}$$\,\,\triangleright\,$$10$ is an arbitrary buffer
        \STATE $D \sim \{4F + 10, \ldots, 4F + 20\}$$\,\,\triangleright\,$$20$ is an arbitrary upper limit
        \STATE $\tau_{Opt} \leftarrow \argmin_{n_{sa} \in \{1, 2, 3\}, n_{be} = 4 - n_{sa}} \{ n_{sa} (F n_{sa} + C) + n_{be} ( G n_{be} + D)\}$
        \STATE $E_{\min} = \max\{\frac{\tau_{Opt} + \Delta}{4} - 4 (F + G), 0\}$
        \STATE $E_{\max} = \min\{C - 4G, D - 4F\}$
        \IF{$E_{\min} < E_{\max}$}
            \STATE \texttt{fail} $\leftarrow$ False
            \STATE $E \sim \{E_{\min}, \ldots, E_{\max}\}$
        \ENDIF
    \ENDWHILE
    \STATE Output: $C$, $D$, $E$, $F$, $G$
\end{algorithmic}
\caption{\texttt{gen\_braess}}
\label{gen_braess}
\end{algorithm}

% \begin{align}
%     F &\sim \{1, \ldots, 20\} \\
%     G &\sim \{1, \ldots, 20\} \\
%     C &\sim \{4G + 10, \ldots, 4G + 20\} \\
%     D &\sim \{4F + 10, \ldots, 4F + 20\} \\
%     \tau_{Opt} &= \argmin_{n_{sa} \in \{1, 2, 3\}, n_{be} = 4 - n_{sa}} \{ n_{sa} (F n_{sa} + C) + n_{be} ( G n_{be} + D)\} \\
%     E_{\min} &= \max\{\frac{\tau_{Opt} + \Delta}{4} - 4 (F + G), 0\} \\
%     E_{\max} &= \min\{C - 4G, D - 4F\} \\
%     E &\sim \{E_{\min}, \ldots, E_{\max}\}.
% \end{align}
% If $E_{\min} \ge E_{\max}$ then rerun the sampling procedure until it is successful.

The expected commute times for this Braess network can be computed exactly given stochastic commuting policies. Consider a network with four drivers and let $x_{ij}$ specify the probability of driver $i$ taking route $j$ through the network. Then let
\begin{align}
    \boldsymbol{x} &= \begin{bmatrix}
        x_{11} \\
        x_{12} \\
        x_{13} \\
        \vdots \\
        x_{41} \\
        x_{42} \\
        x_{43}
    \end{bmatrix}, \, \boldsymbol{b} = \begin{bmatrix}
        C \\
        D \\
        E
    \end{bmatrix}, \, M = \begin{bmatrix}
        F & 0 & F \\
        0 & G & G \\
        F & G & F + G
    \end{bmatrix} \succeq 0 , \, \boldsymbol{b}_r = \begin{bmatrix}
        \boldsymbol{b} \\
        \boldsymbol{b} \\
        \boldsymbol{b} \\
        \boldsymbol{b}
    \end{bmatrix}, \, M_r = \begin{bmatrix}
        M \\
        M \\
        M \\
        M
    \end{bmatrix}, \, I = \begin{bmatrix}
        1 & 0 & 0 \\
        0 & 1 & 0 \\
        0 & 0 & 1
    \end{bmatrix}
\end{align}
and let
\begin{align}
    S &= \begin{bmatrix}
        \boldsymbol{I} & \boldsymbol{I} & \boldsymbol{I} & \boldsymbol{I}
    \end{bmatrix}, \, A_i = \begin{bmatrix}
        \mathds{1}(i==1) \boldsymbol{I} & \mathbf{0} & \mathbf{0} & \mathbf{0} \\
        \mathbf{0} & \mathds{1}(i==2) \boldsymbol{I} & \mathbf{0} & \mathbf{0} \\
        \mathbf{0} & \mathbf{0} & \mathds{1}(i==3) \boldsymbol{I} & \mathbf{0} \\
        \mathbf{0} & \mathbf{0} & \mathbf{0} & \mathds{1}(i==4) \boldsymbol{I}
    \end{bmatrix}.
\end{align}
Then $\boldsymbol{\tau}_r = M_r S \boldsymbol{x} + \boldsymbol{b}_r$ gives commute time for each path replicated for four agents:
\begin{align}
    \boldsymbol{\tau}_r &= M_r S \boldsymbol{x} + \boldsymbol{b}_r \\
    &= \begin{bmatrix}
        \text{top route time for player $1$} \\
        \text{bottom route time for player $1$} \\
        \text{shortcut time for player $1$} \\
        \vdots \\
        \text{top route time for player $4$} \\
        \text{bottom route time for player $4$} \\
        \text{shortcut time for player $4$}
    \end{bmatrix}.
\end{align}
The expected commute time for agent $1$ is just the inner product of the first $3$ entries of this vector with agent $1$'s policy. We use the matrix $A_i$ to effectively select the appropriate commute times from $\tau_r$. Continuing, let
\begin{align}
    Q_i &= A_i^\top M_r S \\
    d_i &= A_i^\top \boldsymbol{b}_r = A_i \boldsymbol{b}_r \\
    C_i &= Cov(x_i) = \texttt{diag}(x_i) - x_i x_i^\top \\
    C &= Cov(\boldsymbol{x}) = \texttt{block\_diag}(C_i).
\end{align}
% https://stats.stackexchange.com/questions/113700/covariance-of-a-random-vector-after-a-linear-transformation
% https://math.stackexchange.com/questions/1377688/what-is-the-expected-value-of-xtax
We can now write agent $i$'s loss as
\begin{align}
    l_i(\boldsymbol{x}) &= (A_i \boldsymbol{x})^\top \boldsymbol{\tau}_r \\
    &= \boldsymbol{x}^\top Q_i \boldsymbol{x} + d_i^\top \boldsymbol{x} \\
    \mathbb{E}[l_i(\boldsymbol{x})] &= \mathbb{E}[\boldsymbol{x}^\top Q_i \boldsymbol{x}] + d_i^\top \boldsymbol{x} \\
    &= \trace(Q_i C) + \boldsymbol{x}^\top Q_i \boldsymbol{x} + d_i^\top \boldsymbol{x} \\
    &= \trace(M C_i) + \boldsymbol{x}^\top Q_i \boldsymbol{x} + d_i^\top \boldsymbol{x}
\end{align}
which is easily amenable to analysis and makes the fact that the loss is quadratic, readily apparent.

\subsection{A Reformulation of the Prisoner's Dilemma}
\label{pd_convex}
In an $n$-player prisoner's dilemma, each player must decide to defect or cooperate with each of the other players creating a combinatorial action space of size $2^{n-1}$. This requires a payoff tensor with $2^{n(n-1)}$ entries. Instead of generalizing prisoner's dilemma~\citep{rapoport1965prisoner} to $n$ players using $n$th order tensors, we translate it to a game with convex loss functions.
% In the classic prisoner's dilemma, each player would ideally defect and have their opponent cooperate.
%  which would make experiments with several players computationally intractable
Figure~\ref{fig:pd_table_example_app} shows how we can accomplish this.
\begin{figure}[ht]
    \centering
    \includegraphics[scale=0.5]{figures/nplayer_pd/PD_Variant.png}
    \caption{A reformulation of the prisoner's dilemma using convex loss functions instead of a normal form payoff table.}
    \label{fig:pd_table_example_app}
\end{figure}
% Essentially, each player prefers to defect against the other players and receive a penalty if one of the other players defects against it.
Generalizing this to $n$ players, we say that for all $i, j, k$ distinct, 1) player $i$ wants to defect against player $j$, 2) player $i$ wants player $j$ to defect against player $k$, and 3) player $i$ wants player $j$ to cooperate with itself. In other words, each player desires a free-for-all with the exception that no one attacks it. See~\S\ref{pd_convex} for more details.

For example, we can define the vector of loss functions succinctly for three players with
\begin{align}
    \boldsymbol{f}(\boldsymbol{x}) &= \sum_{columns} \Big[ \Big( \begin{bmatrix}
    \boldsymbol{x}^\top \\ \boldsymbol{x}^\top \\ \boldsymbol{x}^\top
    \end{bmatrix}
    - C \Big)^2 \Big]% \quad \text{(\emph{elementwise})}
\end{align}
where $\boldsymbol{x} = [x_{ij}]$ is a column vector ($i \in [1,n], j \in [1,n-1]$, construct $\boldsymbol{x}$ as a matrix and then flatten in major-row order) containing the player strategies, $C$ is an $n \times n(n-1)$ matrix with entries that either equal $0$ or $c \in \mathbb{R}^+$, and the exponentation, $(\cdot)^2$, is performed elementwise.

More specifically, $C$ is a circulant matrix with column order reversed. For example, the matrix $C$ associated with the three player game is
\begin{align}
    C &= \begin{bmatrix}
    0 & 0 & c & 0 & 0 & c \\
    0 & c & 0 & 0 & c & 0 \\
    c & 0 & 0 & c & 0 & 0
    \end{bmatrix}
\end{align}
where $c>0$. Setting $x_{ij}=0$ encodes that player $i$ has defected against its $j$th opponent. In the first row of $C$ above, the first two entries can be read as player $1$ is incentivized to defect against players $2$ and $3$. The next two entries state that player $1$ receives a penalty if player $2$ doesn't cooperate, but wants player $2$ to defect against player $3$. The final two entries state that player $1$ receives a penalty if player $3$ doesn't cooperate, but wants player $3$ to defect against player $2$.
The matrix, $C$, can be constructed for $n$-player games with numpy~\citep{oliphant2006guide} as
\begin{verbatim}
    row = numpy.array(([0]*(n-1)+[c])*(n-1))[::-1]
    C = scipy.linalg.circulant(row1)[:n,::-1]
\end{verbatim}
Note that this matrix is of size $n \times n(n-1)$ containing $\mathcal{O}(n^3)$ entries.
% \img{This all assumes representation is dense so would have to be a ``noisy" version of PD (which we have in code), i.e., perturb payoffs by a small amount.}

The minimal total loss for this problem is $(n-1)^2 c^2$ and occurs at $x_{ij}=\frac{c}{n}$:
\begin{align}
    f_{\text{total}} &= \mathbf{1}^\top \vec{f}(\boldsymbol{x}) = \sum_{i=1}^n \sum_{j=1}^{n-1} (n-1)x_{ij}^2 + (x_{ij}-c)^2 \\
    \frac{\partial f_{\text{total}}}{\partial x_{ij}} &= 2(n-1)x_{ij} + 2(x_{ij}-c) = 0 \\
    \implies x_{ij} &= \frac{c}{n} \\
    \implies f_{\text{total}} &= n(n-1)\Big[ \frac{(n-1)c^2}{n^2} + \frac{(n-1)^2c^2}{n^2}\Big] = (n-1)^2 c^2. \label{opt_welfare}
\end{align}
Nash occurs at the origin. This can be quickly derived by leveraging variational inequality theory~\citep{facchinei2007finite,nagurney2012projected} and noticing that the Jacobian of gradient descent dynamics is $2 \boldsymbol{I}$, hence strongly monotone. Strongly monotone variational inequalities have unique a Nash equilibrium coinciding with the strategy set at which the gradients are all zero (assuming this point lies in $\mathcal{X}$). The total loss at Nash ($x_{ij}=0$) is $n(n-1)c$ by inspection.

\subsubsection{Cooperation Robust to Mavericks}
\label{d3c_robust_in_pd}
\begin{proposition}
    In heterogeneous populations containing both D3C agents and selfish (gradient descent) agents, D3C agents end up with strictly lower loss when playing the proposed reformulation of the prisoner's dilemma.
\end{proposition}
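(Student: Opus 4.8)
The plan is to exploit the fully separable quadratic structure of the reformulated prisoner's dilemma to write every agent's converged original loss in closed form, and then to show that this loss depends on the mixing matrix $A$ only through a single scalar --- the total amount of cooperation an agent receives from its peers --- which is strictly positive for D3C agents and exactly zero for selfish agents.

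First I would characterize the strategy equilibrium for a fixed $A$. Writing a component of player $\ell$'s strategy as $x_{\ell\to p}$ (player $\ell$'s action toward opponent $p$), the transformed loss $f_\ell^A=\sum_k A_{k\ell}f_k$ is a separable, strictly convex quadratic in $x_\ell$: the component $x_{\ell\to p}$ is pulled toward $c$ with weight $A_{p\ell}$ (only agent $p$ rewards $\ell$ for cooperating with $p$) and toward $0$ with total weight $\sigma_\ell-A_{p\ell}$, where $\sigma_\ell:=\sum_k A_{k\ell}>0$ is the column sum. Because separability makes each player's best response independent of the others, gradient descent on the transformed losses converges to the unique fixed point $x_{\ell\to p}^\ast=w_{p\ell}\,c$ with $w_{p\ell}:=A_{p\ell}/\sigma_\ell$, and the matrix $w$ is column-stochastic.

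Second I would substitute $x^\ast$ into the original loss $f_i$, which penalizes the $n-1$ components $x_{\ell\to i}$ ($\ell\ne i$) for deviating from $c$ and every remaining component for deviating from $0$. Expanding the cooperation-received square $(w_{i\ell}-1)^2=w_{i\ell}^2-2w_{i\ell}+1$ and collecting all $w^2$ terms, the piece $\sum_{\ell\ne i}w_{i\ell}^2$ cancels exactly against the corresponding piece of the defection terms, leaving
$$ f_i = c^2\Big[\,T+(n-1)-2V_i\,\Big],\qquad T:=\sum_{\ell}\sum_{p\ne\ell}w_{p\ell}^2,\qquad V_i:=\sum_{\ell\ne i}w_{i\ell}=\sum_{\ell\ne i}\frac{A_{i\ell}}{\sigma_\ell}. $$
Since $T$ and the constant $n-1$ are common to every agent, losses are totally ordered by $V_i$ alone: $f_i<f_j\iff V_i>V_j$. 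I would sanity-check this against the two homogeneous extremes, recovering $f_i=(n-1)c^2$ at the all-defect Nash ($A=I$, $V_i=0$) and $f_i=(n-1)^2c^2/n$ at full cooperation ($A=\tfrac1n\mathbf 1\mathbf 1^\top$, $V_i=(n-1)/n$), consistent with \eqref{opt_welfare}. The comparison then reads off immediately: a selfish agent $s$ never mixes, so $A_{s\ell}=0$ for $\ell\ne s$ and $V_s=0$; a D3C agent $d$ updates its row by entropic mirror descent, so its iterates stay in the simplex interior, $A_{d\ell}>0$ for all $\ell$, and $V_d=\sum_{\ell\ne d}A_{d\ell}/\sigma_\ell>0$. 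Hence $V_d>0=V_s$ and $f_d<f_s$ strictly.

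The main obstacle --- and the step I would treat most carefully --- is justifying that the reported losses are evaluated at a strategy profile equilibrated to the current $A$, so that the closed form $x^\ast$ applies. This is clean here because separability makes the best response independent across players, giving a unique, globally attracting strategy equilibrium for any fixed $A$ with $\sigma_\ell>0$; thus the identity $f_i=c^2[T+(n-1)-2V_i]$ holds at every joint fixed point of the coupled $(\boldsymbol x,A)$ dynamics, and indeed whenever the fast strategy variables have tracked the slower mixing weights. The only residual care is to confirm $\sigma_\ell>0$ throughout (immediate, since $A_{\ell\ell}>0$ for selfish agents and for interior D3C iterates) and that the minimizers $x_{\ell\to p}^\ast=w_{p\ell}c\in[0,c]$ remain feasible.
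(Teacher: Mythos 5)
Your route is genuinely different from the paper's and, modulo one modeling caveat below, it is sound. The paper argues structurally: it freezes the $m$ defectors at $x_{ij}=0$, observes that the cooperators' residual game is (up to an additive constant) an $(n-m)$-player copy of the same prisoner's dilemma in which they also agree to defect against the frozen players, assumes D3C drives that subgame to its welfare optimum, and then compares the resulting cooperator loss $m+\tfrac{(n-m-1)^2}{n-m}$ against the defector loss (bounded below by $n-1$), yielding the gap $\tfrac{n-m-1}{n-m}>0$. You instead derive the equilibrium loss in closed form for an \emph{arbitrary} mixing matrix and reduce the comparison to the single scalar $V_i$ (cooperation received), so your argument does not require the D3C block to reach the subgame optimum $A_{ij}=\tfrac{1}{n-m}$ --- any interior mixing weights suffice --- and it additionally yields a total ordering of all agents' losses by $V_i$. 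That extra generality is real, and your identity $f_i=c^2\left[T+(n-1)-2V_i\right]$ checks out against both homogeneous extremes. (Minor wording: the $\sum_{\ell\neq i}w_{i\ell}^2$ terms do not cancel; they merge with the defection terms to form $T$, which is what your formula actually uses.)

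The one genuine gap is your treatment of the selfish agents' best responses. Your fixed point $x^{\ast}_{\ell\to p}=A_{p\ell}c/\sigma_\ell$ presumes that \emph{every} agent $\ell$ descends its transformed loss $f^A_\ell=\sum_k A_{k\ell}f_k$. The paper's selfish agents have opted out of the mechanism entirely: they descend their original loss $f_s$ and therefore play $x_{s\to p}=0$ regardless of what weight D3C agents place in column $s$. Your loss identity survives (it is pure algebra in the realized $w$), and $V_s=0$ still holds since $A_{s\ell}=0$ for $\ell\neq s$, but your expression $V_d=\sum_{\ell\neq d}A_{d\ell}/\sigma_\ell$ overcounts: the terms with $\ell$ selfish must be dropped, since those agents never reciprocate. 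The conclusion $V_d>0$ then requires at least one \emph{other} D3C agent; with a single D3C agent among selfish peers, $V_d=0$ and the inequality degenerates to equality. (The paper's own bound has the same degenerate case at $n-m=1$, so this is a shared boundary condition rather than a defect unique to your argument, but you should state the correct index set for $V_d$ and the two-cooperator requirement explicitly.)
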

\begin{proof}
% First, write the vector-valued loss function as
% \begin{align}
%     f(\boldsymbol{x}) &= \Big( \begin{bmatrix}
%     x^\top \\ x^\top \\ x^\top
%     \end{bmatrix}
%     - C \Big)^2
% \end{align}
% where $x = [x_{ij}]$ is a column vector ($i \in [1,p], j \in [1,p-1])$ and $C$ is a circulant matrix with column order reversed. For example, the matrix $C$ associated with the three player game is
% \begin{align}
%     C &= \begin{bmatrix}
%     0 & 0 & 1 & 0 & 0 & 1 \\
%     0 & 1 & 0 & 0 & 1 & 0 \\
%     1 & 0 & 0 & 1 & 0 & 0
%     \end{bmatrix}
% \end{align}
% The matrix, $C$, can be constructed for $p$-player games as
% \begin{verbatim}
%     row = numpy.array(([0]*(p-1)+[1])*(p-1))[::-1]
%     C = scipy.linalg.circulant(row1)[:p,::-1]
% \end{verbatim}

Note that player $i$ controls variables $x_{ij}$ and suffers loss $f_i(\boldsymbol{x})$. Assume some subset of the players defect and play some fixed strategy. Let this subset be the players $1$ through $m$ w.l.o.g. because the player losses are symmetric. The remaining player (non-defector) losses can be rewritten as
\begin{align}
    \boldsymbol{f}_{i>m}(\boldsymbol{x}) &= \boldsymbol{f}(\boldsymbol{x} \vert C_{\{i>m,j>m(n-1)\}}) + \mathcal{K}
\end{align}
where $\mathcal{K}$ is some vector-valued constant independent of these non-defectors' strategies. Due to the structure of $C$, the losses that remain simply represent a ($n-m$)-player prisoner's dilemma. To see this, consider player 1 defecting in a 3-player prisoner's dilemma, i.e., consider the $C_{\{i>1,j>2\}}$ submatrix. The loss functions for players 2 and 3 depend in exactly the same way on the variables $x_{21}$ and $x_{32}$, i.e., $(x_{21}-0)^2 + (x_{32}-0)^2 + \cdots$, therefore, they will both agree on setting $x_{21}=x_{32}=0$. The game that remains is exactly the 2-player prisoner's dilemma between players 2 and 3. So assuming these players run our proposed algorithm (D3C), they will converge to minimizing total loss of this subgame.

Of particular interest is the case where the defectors naively play fixed selfish strategies, i.e., $x_{ij}=0$. In this case, cooperating agents not only achieve lower subgroup loss, but also lower individual loss.

Recall that the loss for each player when all defect (naive selfish play implies $x_{ij}=0$) is $n-1$. If only a subset of players defect and the remaining cooperate, the defectors achieve losses greater than $n-1$\textemdash this can be seen from the fact that $x_{ij}=0$ is a strict Nash. Therefore, if we show that a cooperator's loss is less than $n-1$, we prove that cooperators outperform defectors.

% Let the number of cooperators be $k$ so that the number of defectors is $n-k$. Each defector adds 1 to the loss of a cooperator and the loss due to the cooperators' subgame prisoner's dilemma is $\frac{(k-1)^2}{k}$ (\eqref{opt_welfare}). Therefore, the loss of a cooperator is $n-k + \frac{(k-1)^2}{k}$.% = \frac{k(p-k)+(k-1)^2}{k}$.
% The loss of a defector is always greater:
% \begin{align}
%     \underbrace{(n-1)}_{\text{defector}} - \underbrace{(n-k) - \frac{(k-1)^2}{k}}_{\text{cooperator}} &= (k-1) - \frac{(k-1)^2}{k} \\
%     &= \frac{k(k-1)-(k-1)^2}{k} = \frac{k-1}{k} > 0
% \end{align}

Each defector adds 1 to the loss of a cooperator and the loss due to the cooperators' subgame prisoner's dilemma is $\frac{(n-m-1)^2}{n-m}$ (\eqref{opt_welfare}). Therefore, the loss of a cooperator is $m + \frac{(n-m-1)^2}{n-m}$.
The loss of a defector is always greater:
\begin{align}
    \underbrace{(n-1)}_{\text{defector}} - \underbrace{m - \frac{(n-m-1)^2}{n-m}}_{\text{cooperator}} &= (n-m-1) - \frac{(n-m-1)^2}{n-m} \\
    &= \frac{(n-m)(n-m-1)-(n-m-1)^2}{n-m} = \frac{n-m-1}{n-m} > 0.
\end{align}
\end{proof}

\section{Additional Experiments}
We present additional results on four RL experiments, one small game as another counterargument to welfare-maximization, and a negative result for local $\rho$-minimization (which D3C is an instance of).

\subsection{Prisoner's Dilemma}
\label{appx:more_pd}
Figures~\ref{fig:pd_results_statistics_2player} and~\ref{fig:pd_results_statistics_10player} further support that D3C with a randomly initialized strategy successfully minimizes the price of anarchy. In contrast, gradient descent learners provably converge to Nash at the origin with $\rho=\frac{n}{c(n-1)}$. The price of anarchy grows unbounded as $c \rightarrow 0$.
% This price of anarchy is maximized for $c=\sqrt{n}$ resulting in $\rho_{\max}=n^{\frac{3}{2}}/(2n-\sqrt{n})$, which grows unbounded in $n$.
\begin{figure}[ht!]
    \centering
    \includegraphics[width=0.23\textwidth]{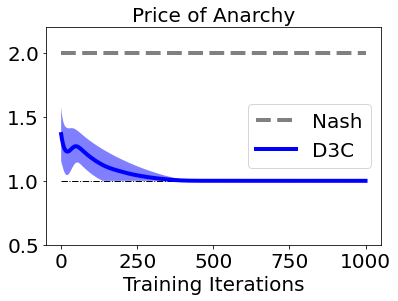}
    \includegraphics[width=0.23\textwidth]{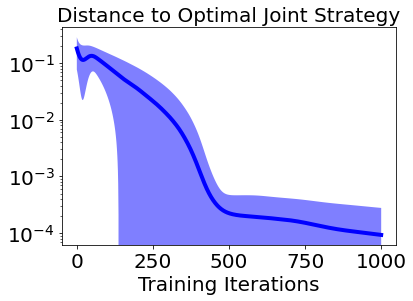}
    \caption{\textbf{Prisoner's Dilemma ($n=2, c=1, \rho=2$)}\textemdash Convergence to $\rho=1$ (left) and the unique optimal joint strategy (right) over $1000$ runs. The shaded region captures $\pm$ $1$ standard deviation around the mean (too small to see on left). Gradient descent (not shown) provably converges to Nash.}
    \label{fig:pd_results_statistics_2player}
\end{figure}
\begin{figure}[ht!]
    \centering
    \includegraphics[width=0.23\textwidth]{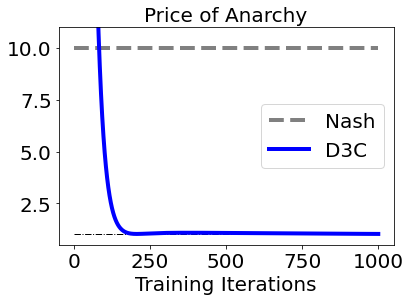}
    \includegraphics[width=0.23\textwidth]{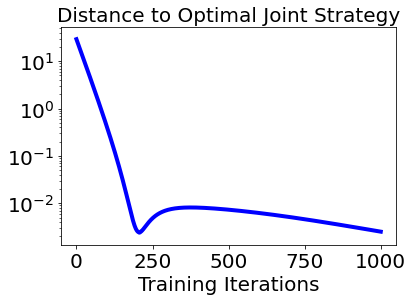}
    \caption{\textbf{Prisoner's Dilemma ($n=10, c=\frac{1}{9}, \rho=10$)}\textemdash Convergence to $\rho=1$ (left) and the unique optimal joint strategy (right) over $1000$ runs. The shaded region captures $\pm$ $1$ standard deviation around the mean (too small to see on left). Gradient descent (not shown) provably converges to Nash.}
    \label{fig:pd_results_statistics_10player}
\end{figure}

\subsection{Trust-Your-Brother}
% \subsection{Trust-Your-Brother}
% \begin{enumerate}
%     \item Describe game
%     \item Insert Figure with relative alliance trajectory for several runs
%     \item Insert Figure with welfare for several runs
% \end{enumerate}
In this game, a predator chases two prey around a table. The predator is a bot with a hard-coded policy to move towards the nearest prey unless it is already adjacent to a prey, in which case it stays put. If the prey are equidistant to the predator, the predator flips a coin and moves according to the coin flip. The prey receive $0$ reward if they chose not to move and $-.01$ if they attempted to move. They additionally receive $-1$ if the predator is adjacent to them after moving.
\begin{figure}[ht!]


    \centering
    \includegraphics[scale=0.5]{figures/trustyourbro/trustyourbro_selfish_noman.png}
    \hspace{1.0cm}
    \includegraphics[scale=0.5]{figures/trustyourbro/trustyourbro_coop.png}
    \caption{\textbf{Trust-Your-Brother} A bot chases agents around a table. The predator's prey can only escape if the other prey simultaneously moves out of the way. Selfish (left), cooperative (right).}
    \label{fig:trustyourbro_game_visual_app}
\end{figure}

The prey employ linear softmax policies (no bias term) and train via REINFORCE~\citep{williams1992simple}. Both prey receive the same $2$-d observation vector. The first feature specifies the counter-clockwise distance to the predator minus the clockwise distance for the dark blue prey. The second feature specifies the same for the light blue prey. Episodes last $5$ steps and there are $6$ grid cells in the ring around the table as shown in Figure~\ref{fig:trustyourbro_game_visual_app}.

Figure~\ref{fig:trustyourbro_training} shows D3C approaches maximal total return over training; this is achieved by the agents compromising on their original reward incentives and paying more attention to those of the other agent during training as revealed by Figure~\ref{fig:trustyourbro_results}.
% \begin{figure}[ht!]
%     \centering
%     \includegraphics[scale=0.5]{figures/trustyourbro/return_training.png}
%     \includegraphics[scale=0.5]{figures/trustyourbro/relative_attention.png}
%     \caption{\textbf{Trust-Your-Brother} Median return achieved during training for agents trained with policy gradient vs policy gradient augmented with D3C (left); relative reward attention measured as $\ln\big(\frac{A_{ii}}{A_{j\ne i}}\big)$ where a positive value corresponds to selfish attention and a negative value to other-regarding (right). The shaded region captures $\pm$ $1$ standard deviation around the mean from $1000$ runs.}
%     \label{fig:trustyourbro_training}
% \end{figure}
\begin{figure}[ht!]
    \centering
    \includegraphics[scale=0.4]{}
    \caption{Agents are initialized to attend to their own losses. The trajectory here shows the agents compromising and adjusting to a mixture of losses (start at green, end at red star).}
    \label{fig:trustyourbro_results}
\end{figure}

\subsection{LIO Comparison}
\label{lio}
\citet{yang2020learning} propose an algorithm LIO (Learning to Incentivize Others) that equips agents with ``gifting'' policies represented as neural networks. At each time step, each agent observes the environment and actions of all other agents to determine how much reward to gift to the other agents. The parameters of these networks are adjusted to maximize the original environment reward (without gifts) minus some penalty regularizer for gifting meant to approximately maintain \emph{budget-balance}. In order to perform this maximization, each agent requires access to every other agents action-policy, gifting-policy, and return making this approach difficult to scale and decentralize.

\citet{yang2020learning} demonstrate LIO's ability to maximize welfare and achieve division of labor on a restricted version of the Cleanup game with high apple re-spawn rates and where agents are constrained to facing in one direction (compare Figure~3 of~\citep{yang2020learning} with Figure~1A of~\citep{hughes2018inequity}). While \citet{yang2020learning} show AC failing to achieve maximal welfare, we found the opposite result using A2C~\citep{espeholt2018impala} in Figure~\ref{fig:LIO}. In Figure~\ref{fig:LIO}, we also see that D3C is able to achieve near optimality. LIO appears to be approach maximal welfare as well in Figure~6C, therefore, this environment setting does not appear to differentiate the two approaches.
% , a synchronous version of A3C~\citep{mnih2016asynchronous} that has been used in other works~\citep{wang2016learning}.
\begin{figure}[ht!]
    \centering
    \includegraphics[scale=0.5]{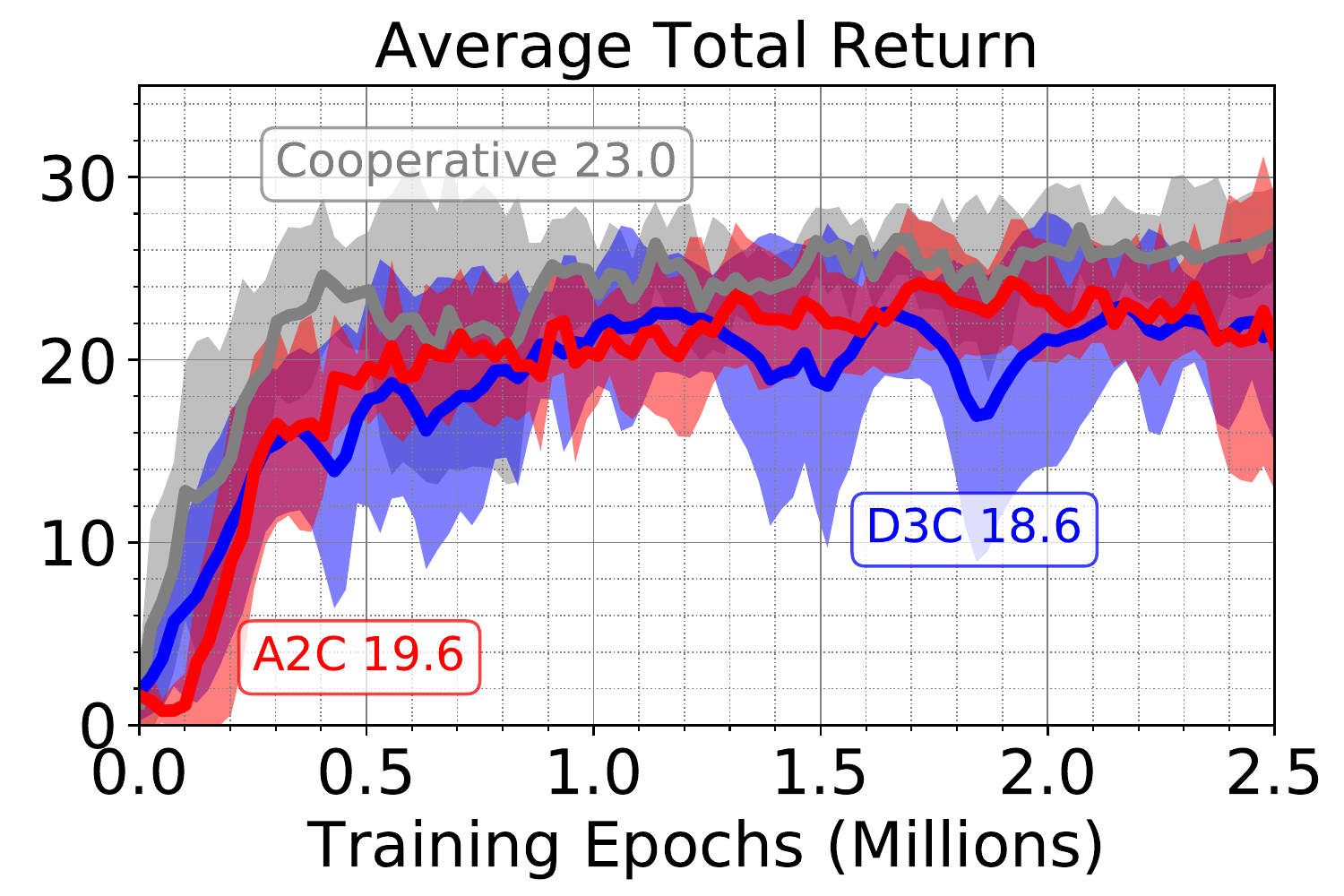}
    \caption{\textbf{Mini-Cleanup} Comparison against the mini Cleanup environment described in~\citep{yang2020learning}. In LIO, each agent requires access to every other agent's policy which makes implementing it within our decentralized codebase intractable. We suggest comparing the asymptotes of this plot with that of Figure~6C in~\citep{yang2020learning}.}
    \label{fig:LIO}
\end{figure}

% \subsection{Huangpu}
% \begin{figure}
%     \centering
%     \includegraphics[scale=0.5]{figures/huangpu/statistics/soft_huangpu.pdf}
%     \caption{Cleanup / Soft-Huangpu}
%     \label{fig:huangpu}
% \end{figure}

\subsection{HarvestPatch}
\label{harvestpatch}
\citet{mckeesocial} introduce HarvestPatch as a common-pool resource game where apples spawn in predefined patches throughout a map. Agents must abstain from over-farming patches to the point of extinction by distributing their apple consumption as a group evenly across patches.

Figure~\ref{fig:harvestpatch} compares D3C against direct welfare maximization (Cooperation) and individual agent RL (A2C) on HarvestPatch.
\begin{figure}[ht!]
    \centering
    \includegraphics[scale=0.5]{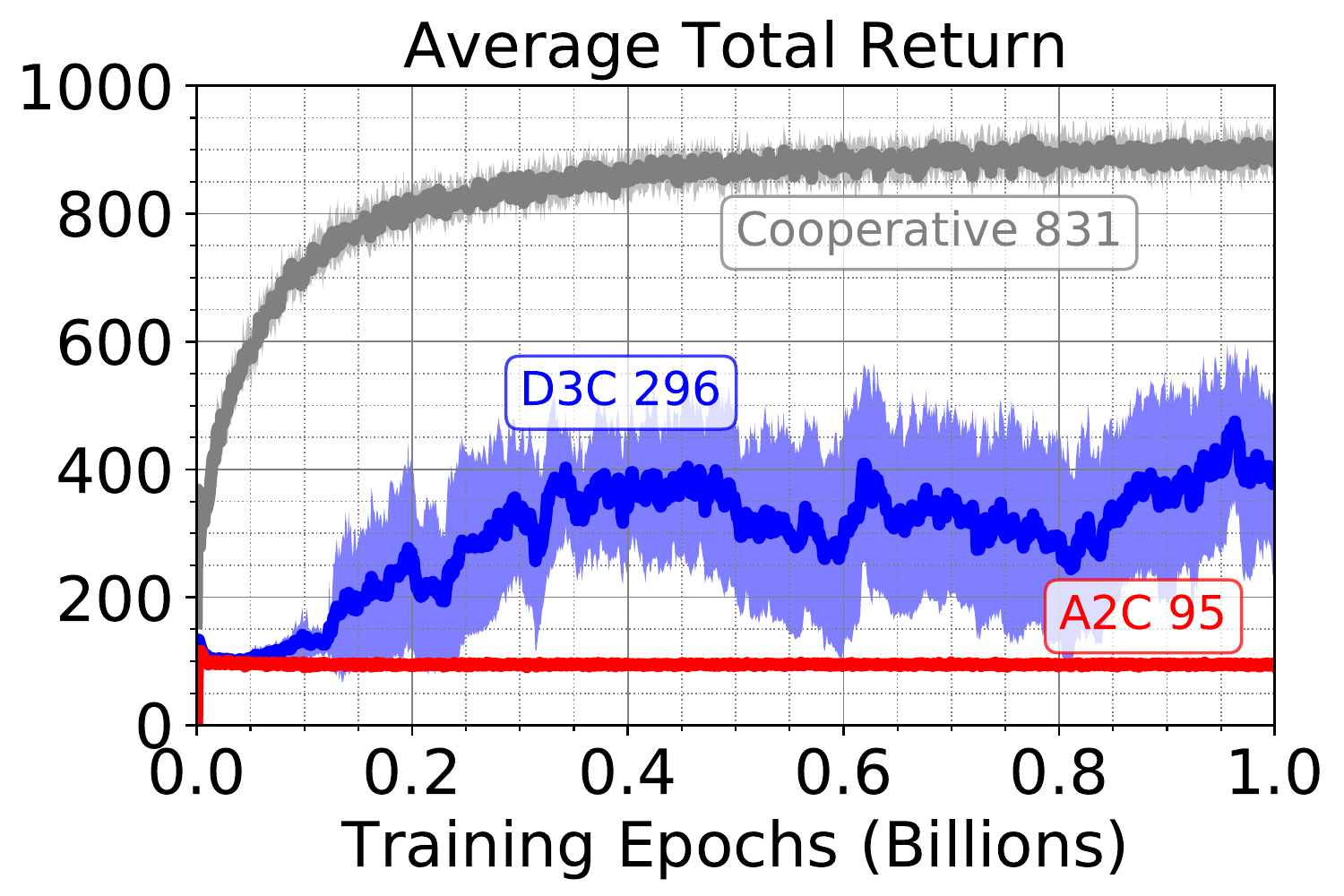}
    \caption{\textbf{HarvestPatch} Comparison against the HarvestPatch environment described in~\citep{mckeesocial}. D3C is able to increase welfare over the baseline approach of A2C at a slow rate.}
    \label{fig:harvestpatch}
\end{figure}

\subsection{A Zero-Sum Election}
\label{election}
Consider a hierarchical election in which two parties compete in a zero-sum game\textemdash for example, only one candidate becomes president. If, at the primary stage, candidates within one party engage in negative advertising, they hurt their chances of winning the presidential election because these ads are now public. This presents a prisoner's dilemma within each party. The goal then is for each party to solve their respective prisoner's dilemma and come together as one team, but certainly not maximize welfare\textemdash the zero-sum game between the two parties should be retained. A simple simulation with two parties consisting of two candidates each initially participating in negative advertising converges to the desired result after running D3C.
The final $4 \times 4$ loss mixing matrix, $A$, after training $1000$ steps is an approximate block matrix with $0.46$ on the $2 \times 2$ block diagonal and $0.04$ elsewhere:
% $A = \begin{bmatrix}
%         \mathbf{0.46} & \mathbf{0.46} & 0.04 & 0.04 \\
%         \mathbf{0.46} & \mathbf{0.46} & 0.04 & 0.04 \\
%         0.04 & 0.04 & \mathbf{0.46} & \mathbf{0.46} \\
%         0.04 & 0.04 & \mathbf{0.46} & \mathbf{0.46}
%     \end{bmatrix}$
\begin{align}
    A &= \begin{bmatrix}
        \mathbf{0.45795} & \mathbf{0.45708} & 0.04248 & 0.04248 \\
        \mathbf{0.45709} & \mathbf{0.45794} & 0.04248 & 0.04248 \\
        0.03778 & 0.03778 & \mathbf{0.46023} & \mathbf{0.46421} \\
        0.03778 & 0.03778 & \mathbf{0.46421} & \mathbf{0.46022}
    \end{bmatrix} \nonumber.
\end{align}
% \begin{align}
%     A &= \begin{bmatrix}
%         \mathbf{0.46} & \mathbf{0.46} & 0.04 & 0.04 \\
%         \mathbf{0.46} & \mathbf{0.46} & 0.04 & 0.04 \\
%         0.04 & 0.04 & \mathbf{0.46} & \mathbf{0.46} \\
%         0.04 & 0.04 & \mathbf{0.46} & \mathbf{0.46}
%     \end{bmatrix} \nonumber.
% \end{align}
We make a \emph{duck-typing} argument that when multiple agents are optimizing the same loss, they are just components of a single agent because mathematically, there is no difference between this multiagent system and a single agent optimization problem. This matrix then indicates that two approximate teams have formed: the first two agents captured by the upper left block and vice versa.
Furthermore, the final eigenvalues of the game Jacobian are ($1.84 \pm 0.21i$) $\times 2$; perfect team formation gives ($2 \pm 0.25i$) $\times 2$. The existence of imaginary eigenvalues indicates that the zero-sum component of the game is retained. In contrast, minimizing total loss gives $0$ imaginary part because Hessians ($\texttt{Jac}(\nabla)$) are symmetric.

\subsection{Implicit Inequity Aversion}
\label{ineq_aver}

Welfare optimization can lead to poor outcomes as well, creating great inequity~\citep{bertsimas2011price,bertsimas2012efficiency,gemici2018wealth}. We show that our approach generalizes beyond the goal of minimizing group loss to other interesting settings.
% change story to loss mis-specification if space: y's loss is actually (10/11)^2 y^2 - (10/11)*x^2, but multiplies it by (11/10)^2
\textbf{Game 2} (Efficient but Unfair): $\quad \min_{x_1 \in \mathbb{R}} x_1^2, \quad \min_{x_2 \in \mathbb{R}} x_2^2  - \frac{11}{10} x_1^2$. \label{ex:bad_welfare} \\\\
% \begin{example}[Efficient but Unfair]
%     \begin{align}
%         &\min_{x \in \mathbb{R}} x^2 \quad\quad \min_{y \in \mathbb{R}} y^2  - \frac{11}{10} x^2.
%     \end{align}
%     The minimal total loss solution is $(x,y) = (\pm \infty, 0)$ where $\boldsymbol{x}$ achieves infinite loss and $y$ achieves negative infinite loss. On the other hand, the Nash equilibrium is $(x,y) = (0,0)$ with a loss of zero for both agents. This hypothetical game may also arise if a loss is mis-specified. For example, $\boldsymbol{x}$'s true loss may have been $2x^2$ implying no inequity issue with total loss minimization in the original game.
%     \label{ex:bad_welfare}
% \end{example}
The minimal total loss solution of Game~2 is $(x_1,x_2) = (\pm \infty, 0)$ where $x_1$ achieves infinite loss and $y$ achieves negative infinite loss. On the other hand, the Nash equilibrium is $(x_1,x_2) = (0,0)$ with a loss of zero for both agents. This hypothetical game may also arise if a loss is mis-specified. For example, $x_1$'s true loss may have been $2x_1^2$ implying no inequity issue with total loss minimization in the original game.
% \ref{ex:bad_welfare}
The inequity of the cooperative solution to Game~2 may be undesirable. D3C converges to losses of $1.079$ and $-1.162$ for $x_1$ and $x_2$ respectively (sum is $-0.083$) with $x_1$ shifting its relative loss attention to $\frac{A_{11}}{A_{12}} \approx \frac{11}{10}$ effectively halting training.

\subsection{Limits of a Local Update}
We use a 2-player bilinear matrix game to highlight the limitations of a local $\rho$-minimization approach. Consider initializing $A_{ij} = \frac{1}{2}$ so that the agents are purely cooperative. Even in this scenario, there are games where the agents minimizing local $\rho$ will get stuck in local, suboptimal minima of the total loss landscape. Consider the following game transformed into an optimization problem via $A_{ij} = \frac{1}{2}$:
\begin{align}
    &\min_{\boldsymbol{x}_1} \boldsymbol{x}_1^\top B_1 \boldsymbol{x}_2 \quad \min_{\boldsymbol{x}_2} \boldsymbol{x}_1^\top B_2 \boldsymbol{x}_2 \implies \min_{\boldsymbol{x}_1} \min_{\boldsymbol{x}_2} \boldsymbol{x}_1^\top (B_1+B_2) \boldsymbol{x}_2 = \boldsymbol{x}_1^\top C \boldsymbol{x}_2 = f_C(\boldsymbol{x}_1,\boldsymbol{x}_2)
\end{align}
% Let $C = \begin{bmatrix} a & b \\ c & d \end{bmatrix}$ w.l.o.g. and reparameterize with $\boldsymbol{x} = [p, 1-p]^\top, \boldsymbol{y} = [q, 1-q]^\top$. The problem then becomes $\min_{p \in [0,1]} \min_{q \in [0,1]} (a-b-c+d)pq + (b-d)p + (c-d)q + d$ and the Hessian of the objective is $H = \begin{bmatrix} 0 & z \\ z & 0 \end{bmatrix}$ with $z=a-b-c+d$ and eigenvalues $\pm |a-b-c+d|$.
with $\boldsymbol{x}_1, \boldsymbol{x}_2 \in \Delta^1$. Let $C = \begin{bmatrix} a , b ; c , d \end{bmatrix}$. Then the Hessian of the cooperative objective $f_C(\boldsymbol{x}_1,\boldsymbol{x}_2)$ has eigenvalues $\pm |a-b-c+d|$.
This function is generally a saddle with possibly two local minima. For example, set $a=d=0$, $b=-\frac{3}{4}$, and $c=-1$. With random initializations, gradient descent will converge to $(p,q)=(1,0)$ $\frac{3}{7}$ of the time with a value of $b$, else $(p,q)=(1,0)$ with a value of $c$, so we cannot expect local $\rho$-minimization to solve 2-player bilinear matrix games, in general, either.

\section{Agents}
\label{agents}

\subsection{Hyperparameters}
% Table Chase: $\tau_{\min}=10, \tau_{\max}=20, \eta_A=1., \delta=1., \epsilon=0., $, batch size is 5 episodes, policy learning rate=.1, baseline is linear value function with learning rate .1, full monte carlo returns, $\gamma=1$.

% Coins: $\tau_{\min}=5, \tau_{\max}=10, \eta_A=10^{-3}, \delta=10^{-1}, \epsilon=50$.

\begin{table}[ht]
    \centering
    \begin{tabular}{l|c|c|c|c|c|c|c|c|c}
        \toprule
        Game & $\eta_{A}$ & $\delta$ & $\nu$ & $\tau_{\min}$ & $\tau_{\max}$ & $A^0_i$ & $\epsilon$ & $l$ & $h$ \\ \hline
        Trust-Your-Brother & $1.0$ & $1.0$ & $0.0$ & $10$ & $20$ & $0.99$ & $0.0$ & $-5$ & $5$ \\
        Coins/Cleanup/HarvestPatch & $10^{-3}$ & $10^{-1}$ & $10^{-6}$ & $5$ & $10$ & $0.99$ & $100.0$ & $-5$ & $5$ \\
        \bottomrule
    \end{tabular}
    \caption{D3C hyperparameter settings for Algorithm~\ref{alg_rl_top}.}
    \label{tab:alg_rl_top_hyps}
\end{table}

\textbf{Trust-Your-Brother}: The reinforcement learning algorithm, $\mathbb{L}$, used for D3C in Trust-Your-Brother is REINFORCE~\citep{williams1992simple}. Policy gradients are computed using batches of $10$ episodes (full Monte Carlo returns, discount $\gamma=1$). Each batch of $10$ episodes contains $5$ episodes initialized with one prey closer to the predator, having only one grid space between itself and the predator. The other $5$ episodes swap the prey so that each is attacked an equivalent number of times. Both prey always start in adjacent cells. The baseline subtracted from the returns is computed from linear value function. This value function is trained via temporal difference learning with a learning rate $0.1$. The learning rate for REINFORCE is $0.1$.

\textbf{Coins/Cleanup/HarvestPatch}: The reinforcement learning algorithm, $\mathbb{L}$, used for D3C in Coins, Cleanup, and HarvestPatch (\S\ref{harvestpatch}) is A2C with V-trace~\citep{espeholt2018impala}.
% A2C~\citep{mnih2016asynchronous}.
\begin{table}[ht]
    \centering
    \begin{tabular}{l|c}
        \toprule
        Hyperparameter & Value \\ \hline
        Entropy regularization & $0.003$ \\
        Baseline loss scaling & $0.5$ \\
        Unroll length & $100$ \\
        Discount ($\gamma$) & $0.98$ \\
        RMSProp learning rate & $0.0004$ \\
        RMSProp epsilon ($\epsilon$) regularization parameter & $10^{-5}$ \\
        RMSProp momentum & $0.0$ \\
        RMSProp decay & $0.99$ \\
        \bottomrule
    \end{tabular}
    \caption{A2C hyperparameter settings for Coin, Cleanup, and HarvestPatch domains. No tuning or hyperparameter search was performed \textemdash these were default values used by our RL stack.}
    \label{tab:impala_hyps}
\end{table}

\section{Miscellaneous}

\subsection{Stealing vs Altruism}
In our proposed mixing scheme, each agent $i$ updates $A_i \in \Delta^{n-1}$ and transformed losses are defined as $\boldsymbol{f}^A = A^\top \boldsymbol{f}$. This can be interpreted as each agent $i$ deciding how to redistribute its losses over the other agents. In other words, if the loss is positive, agent $i$ is deciding who to steal from (give loss equals steal reward).

Alternatively, we could define a scheme where each agent $i$ updates $A_i$, however, the transformed losses are now defined as $\boldsymbol{f}^A = A \boldsymbol{f}$ and the columns of $A$ lie on the simplex. This scenario corresponds to agents taking on the losses of other agents. In other words, again assuming positive losses, deciding which agents to help. In experiments on the prisoner's dilemma, this approach did not make significant progress towards minimizing the price of anarchy so we discontinued its use in further experiments. In theory, this approach should be viable; it just requires that the information contained in agent $j$'s loss is enough to accelerate descent of agent $i$'s loss faster than the immediate loss (debt) that agent $i$ takes on.

\subsubsection{Towards A Market of Agents}
Expanding on this last perspective, when D3C agents, as defined in the main body, steal from other agents, they are exchanging immediate reward for information. The agent that is ``stolen from" receives a loss signal that can then be used to derive policy update directions. The agent that is ``stealing" receives immediate relief of loss, a form of payment. This exchange forms some of the components critical for a market economy of agents. The essential missing component is the negotiation phase where agents can choose to opt in or out of the exchange. In the current setting, the agent who steals is always able to force a transaction.

\subsection{Reciprocity in Coin Domain}
\label{coin_reciprocity}

To evaluate the extent to which there was a pattern of reciprocity in agents' relative reward attention (i.e., the attention shifted synchronously), we conduct a permutation analysis. This permutation analysis estimates the probability that the level of synchrony we observe results from random chance.

We measure the synchrony between relative reward attention trajectories through co-integration \citep{murray1994drunk}. Co-integration allows us to estimate the synchrony between two timeseries. To do so, we take the discrete differences within each timeseries and then take the correlation of those two sequences of differences. If the timeseries are correlated, their movements should be correlated. This produces a set of co-integration coefficients ranging from $0.19$ to $0.34$ (see Figure~\ref{fig:sync_test}, red).

To ensure that we are not overestimating the significance of these patterns, we employ a permutation analysis \citep{tibshirani1993introduction}. We resample the trajectories to calculate all possible values of co-integration coefficients (see Figure~\ref{fig:sync_test}, blue). Comparing the real set against the full resampled set allows us to evaluate how extreme the real values are, under the assumption that there is no relationship between the two curves. The actual co-integration coefficients are the most extreme values across the full distribution of coefficients. To estimate the overall probability of this occurring, we evaluate the harmonic mean \textit{p}-value \citep{wilson2019harmonic}. We find that the level of synchrony observed between the relative reward attention of co-learning agents significantly deviates from chance levels with $p = 0.018$.

\begin{figure}[ht!]
    \centering
    \includegraphics[scale=0.5]{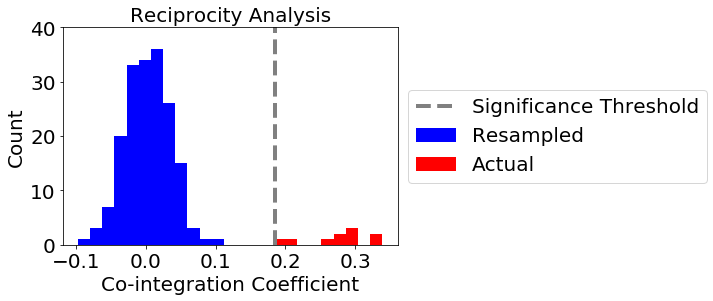}
    \caption{Histogram of co-integration coefficients for actual and resampled relative reward attention trajectories.}
    \label{fig:sync_test}
\end{figure}

\subsection{Convex Optimization vs Smooth $1$-Player Games}
\begin{proposition}A convex loss function is not necessarily a smooth game where the players are interpreted as the elements of the variable to be minimized.
\end{proposition}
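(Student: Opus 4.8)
Since the statement asserts that convexity does \emph{not} imply the smooth-game property, the plan is to settle it with a single explicit counterexample. First I would pin down the interpretation: given a convex $f:\mathbb{R}^n\to\mathbb{R}$, treat coordinate $i$ as player $i$ controlling $x_i$, with every player sharing the common loss $f_i(\boldsymbol{x}) = f(\boldsymbol{x})$. Under this identification the two sums on the right-hand side of Definition~\ref{def:smooth_game} collapse, since $\sum_{i} f_i(x_i,x_{-i}) = n\,f(\boldsymbol{x})$ and $\sum_i f_i(x'_i,x'_{-i}) = n\,f(\boldsymbol{x}')$, while the left-hand deviation term remains $\sum_i f(x_i,x'_{-i})$, in which player $i$ plugs $x_i$ into coordinate $i$ and $x'_j$ into every other coordinate.

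Next I would take the two-player instance $f(x_1,x_2) = (x_1-x_2)^2$. Two facts need a quick check: its Hessian $\left[\begin{smallmatrix} 2 & -2 \\ -2 & 2\end{smallmatrix}\right]$ is positive semidefinite, so $f$ is convex, and $\sum_i f_i(\boldsymbol{x}) = 2(x_1-x_2)^2 \ge 0$, so the non-negativity precondition of Definition~\ref{def:smooth_game} holds and the game is a legitimate candidate for smoothness (i.e.\ it cannot be dismissed as lying outside the definition's scope).

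The crux is then one well-chosen pair of profiles. Take $\boldsymbol{x} = (0,0)$ and $\boldsymbol{x}' = (t,t)$ with $t \ne 0$, so that $f(\boldsymbol{x}) = f(\boldsymbol{x}') = 0$ and both right-hand terms vanish regardless of $\lambda$ and $\mu$. The left-hand side, however, is $f(0,t) + f(t,0) = t^2 + t^2 = 2t^2 > 0$. Hence the smoothness inequality $2t^2 \le 2\lambda\cdot 0 + 2\mu\cdot 0 = 0$ fails for \emph{every} choice of $\lambda>0$ and $\mu<1$, so no $(\lambda,\mu)$ renders the game smooth, which establishes the proposition.

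The main obstacle is conceptual rather than computational: one must commit to the correct reading of ``players are the elements of the variable'' (namely $f_i \equiv f$ with each player owning one coordinate) and verify admissibility under the non-negativity clause. The arithmetic itself is trivial; the leverage comes entirely from selecting two distinct profiles that both achieve zero loss yet whose ``cross'' deviations are strictly positive. This is precisely the externality that smoothness is meant to bound, and a convex potential such as $(x_1-x_2)^2$ — flat along the diagonal but penalizing coordinate mismatch — fails to control it, which is why convex optimization and smooth games come apart.
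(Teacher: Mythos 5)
Your proof is correct and takes essentially the same approach as the paper, which uses the shared loss $(x+y)^2$ with $\boldsymbol{x}=(0,0)$ and $\boldsymbol{x}'=(c,-c)$; your example $(x_1-x_2)^2$ with $\boldsymbol{x}'=(t,t)$ is the same construction up to the change of variables $y\mapsto -y$. In both cases the two profiles lie on the flat direction of the quadratic so the right-hand side vanishes while the cross-deviations are strictly positive, defeating every $(\lambda,\mu)$.
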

\begin{proof}
    Consider the following game:
    \begin{align}
        &\min_x (x+y)^2 \quad \quad \min_y (x+y)^2.
    \end{align}
    Recall the definition of a smooth game (Definition~\ref{def:smooth_game}) and let $x=y=0$ and $x'=-y'=c$. The game is not smooth for $c > 0$ for any $\lambda, \mu$ even though this is a convex optimization problem.
\end{proof}

\subsection{Games with Mixing-Agnostic Universally-Stable Nash}
\label{incentive_compat}
Define the gradient map, $F^A$, and its Jacobian, $J^A$, for a game with loss vector $\boldsymbol{f}$ concisely with
\begin{align}
    F^A(\boldsymbol{x}) &= \begin{bmatrix}
    \langle A_i, \nabla_{x} \boldsymbol{f}(\boldsymbol{x}) \rangle
    \end{bmatrix} \\
    &= \begin{bmatrix}
    \sum_j A_{ij} \frac{\partial f_j}{\partial x_i}
    \end{bmatrix} \\
    J^A(\boldsymbol{x}) &= \begin{bmatrix}
    \sum_j A_{ij} \frac{\partial^2 f_j}{\partial x_i \partial x_k}
    \end{bmatrix} \\
    &= \begin{bmatrix}
    \sum_j A_{ij} H^j_{ik}
    \end{bmatrix}
\end{align}
where $H^j$ is the Hessian of $f_j(\boldsymbol{x})$.

\begin{proposition}
\label{diag_dom}
If each $H^j$ is diagonally dominant, then $J^A$ is diagonally dominant.
\end{proposition}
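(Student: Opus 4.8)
The plan is to verify the defining inequality of diagonal dominance one row at a time. Fix a row index $i$; I want to show that the diagonal entry $[J^A]_{ii}$ dominates the sum of the absolute values of the off-diagonal entries of row $i$. Recalling that $[J^A]_{ik} = \sum_j A_{ij} H^j_{ik}$, row $i$ of $J^A$ is a convex combination (with weights $A_{ij}\ge 0$ summing to $1$, since $A_i\in\Delta^{n-1}$) of the $i$-th rows of the Hessians $H^j$. The structural fact I will exploit is that these weights are nonnegative, so the off-diagonal entries can only shrink under the triangle inequality, whereas the diagonal entries, all of one sign, accumulate without cancellation.

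First I would bound the off-diagonal absolute row sum. The triangle inequality together with $A_{ij}\ge 0$, followed by interchanging the order of summation, gives
\begin{align}
\sum_{k\ne i} \big| [J^A]_{ik} \big| &= \sum_{k\ne i} \Big| \sum_j A_{ij} H^j_{ik} \Big| \le \sum_{k\ne i} \sum_j A_{ij} |H^j_{ik}| = \sum_j A_{ij} \sum_{k\ne i} |H^j_{ik}|.
\end{align}
Applying the hypothesis that each $H^j$ is diagonally dominant in its $i$-th row, i.e. $\sum_{k\ne i}|H^j_{ik}| \le H^j_{ii}$, then yields
\begin{align}
\sum_{k\ne i} \big| [J^A]_{ik} \big| &\le \sum_j A_{ij} H^j_{ii} = [J^A]_{ii}.
\end{align}
This is exactly the row-$i$ diagonal dominance inequality for $J^A$, provided $[J^A]_{ii}$ is nonnegative so that it coincides with $|[J^A]_{ii}|$; but $H^j_{ii}\ge \sum_{k\ne i}|H^j_{ik}|\ge 0$ and $A_{ij}\ge 0$ force $[J^A]_{ii}=\sum_j A_{ij}H^j_{ii}$ to be a nonnegative combination of nonnegative scalars. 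Letting $i$ range over all rows completes the proof.

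The single delicate point, and the main obstacle, is the sign of the diagonal: the argument depends on the entries $H^j_{ii}$ sharing a common nonnegative sign, so that forming $\sum_j A_{ij} H^j_{ii}$ does not erode the diagonal through cancellation. This is automatic under the convention that diagonal dominance means $H^j_{ii} \ge \sum_{k\ne i}|H^j_{ik}|$ (with the bare, not absolute, diagonal on the left), and it also follows from the convexity of each $f_j$ assumed for this stability result, since a convex $f_j$ has positive semidefinite $H^j$ and hence $H^j_{ii}\ge 0$. If one instead only assumed $|H^j_{ii}| \ge \sum_{k\ne i}|H^j_{ik}|$ and allowed the diagonal signs to differ across $j$, the claim can fail: with weights $\tfrac12,\tfrac12$ on the diagonally dominant Hessians $H^1=\left[\begin{smallmatrix}1&1\\1&1\end{smallmatrix}\right]$ and $H^2=\left[\begin{smallmatrix}-1&1\\1&-1\end{smallmatrix}\right]$ the diagonal of $J^A$ cancels to $0$ while the off-diagonal remains $1$. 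I would therefore make the positive-diagonal convention (or convexity) explicit before running the row-wise estimate.
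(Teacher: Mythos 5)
Your proof is correct and is essentially the paper's own argument read in reverse: the same chain of triangle inequality, nonnegativity of the simplex weights $A_{ij}$, interchange of sums, and row-wise diagonal dominance of each $H^j$. Your added remark about the sign convention on the diagonal is a sensible clarification but does not change the substance; the paper's proof likewise uses the bare (unsigned) diagonal in the hypothesis $H^j_{ii} > \sum_{k\ne i}|H^j_{ik}|$.
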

\begin{proof}
We are given $H^j_{ii} > \sum_{k \ne i} \vert H^j_{ik} \vert$. Then
\begin{align}
    J^A_{ii} = \sum_j A_{ij} H^j_{ii} &> \sum_j A_{ij} \sum_{k \ne i} \vert H^j_{ik} \vert \text{ by given \& $A_{ij}\ge 0$} \\
    &= \sum_j \sum_{k \ne i} \vert A_{ij} H^j_{ik} \vert \text{ by $A_{ij} \ge 0$} \\
    &= \sum_{k \ne i} \sum_j \vert A_{ij} H^j_{ik} \vert \text{ swap sums} \\
    &\ge \sum_{k \ne i} \vert \sum_j A_{ij} H^j_{ik} \vert \text{ by $\Delta$-inequality} \\
    &= \sum_{k \ne i} \vert J^A_{ik} \vert.
\end{align}
\end{proof}

\begin{proposition}
If each $H^j$ is diagonally dominant and $\mathcal{X}$ is unconstrained (i.e., $\mathbb{R}^d$ for some $d$), then $\boldsymbol{x}^*_A$ is the Nash equilibrium of the transformed game (i.e., with loss vector $\boldsymbol{f}$ transformed by $A$) .
\end{proposition}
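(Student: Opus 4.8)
The plan is to show that the stationary point $\boldsymbol{x}^*_A$ of the transformed game—the zero of the pseudo-gradient $F^A$, which by the preceding construction is the (globally stable) fixed point of the dynamics $\dot{\boldsymbol{x}} = -F^A(\boldsymbol{x})$—is in fact a Nash equilibrium, and that the only gap between ``stationary point'' and ``Nash equilibrium'' is closed by convexity, which the diagonal-dominance hypothesis supplies.

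First I would unpack the hypothesis. Each Hessian $H^j$ is symmetric, so the strict row dominance $H^j_{ii} > \sum_{k \neq i} |H^j_{ik}|$ forces $H^j_{ii} > 0$ and, via Gershgorin's theorem, places every (real) eigenvalue in $(0,\infty)$; hence $H^j \succ 0$ for all $\boldsymbol{x}$ and each original loss $f_j$ is strictly convex on $\mathbb{R}^d$. Since agent $i$'s transformed objective is the nonnegative combination $g_i(\boldsymbol{x}) = \sum_j A_{ij} f_j(\boldsymbol{x})$ with weights $A_{ij} \geq 0$ (as $A_i \in \Delta^{n-1}$), $g_i$ is jointly convex in $\boldsymbol{x}$, and in particular convex in its own block $x_i$ when the other players' strategies are held fixed.

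Next I would invoke the defining property of $\boldsymbol{x}^*_A$: as a fixed point it satisfies $F^A(\boldsymbol{x}^*_A) = 0$, which component-wise reads $\nabla_{x_i} g_i(\boldsymbol{x}^*_A) = 0$ for every $i$ (recall $F^A_i = \nabla_{x_i} g_i$). Convexity of $g_i$ in $x_i$ upgrades this first-order stationarity into a global statement: $x^*_i$ minimizes $g_i(\,\cdot\,, x^*_{-i})$ over all of $\mathbb{R}^{d_i}$. Because $\mathcal{X}$ is unconstrained there is no boundary or feasibility condition left to verify, so each agent is simultaneously playing a global best response to the others, which is exactly the definition of a Nash equilibrium of the transformed game. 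Existence, uniqueness, and global stability of $\boldsymbol{x}^*_A$ itself I would import from Proposition~\ref{diag_dom} together with the standard theory of row-diagonally-dominant gradient maps, so that the phrase ``the Nash equilibrium'' is well posed.

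The main obstacle is conceptual rather than computational: a zero of the pseudo-gradient need not be a Nash equilibrium in general, since it could be a saddle or a per-player maximizer, so the crux is ensuring the diagonal-dominance assumption delivers genuine player-wise convexity and not merely a positive-stable (right-half-plane spectrum) game Jacobian $J^A$. The clean resolution is to keep two roles separate: convexity is a property of each \emph{individual, symmetric} Hessian $H^j$, where dominance gives definiteness and hence a true minimum for each agent, whereas the \emph{nonsymmetric} $J^A$ from Proposition~\ref{diag_dom} is used only to pin down stability and uniqueness of the fixed point. Conflating these would tempt one to argue Nash-ness directly from $J^A$, which dominance alone does not justify; isolating the symmetric $H^j$ for the best-response argument is what makes the proof correct.
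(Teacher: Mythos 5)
Your proof is correct, and it reaches the paper's conclusion by a mildly different, more self-contained route. The paper gets player-wise convexity of the transformed losses by noting that the diagonal entries $J^A_{ii}=\sum_j A_{ij}H^j_{ii}$ are positive (a byproduct of Proposition~\ref{diag_dom}), and then finishes by citing the variational-inequality characterization: the unique fixed point solves VI($F^A,\mathbb{R}^d$), which for convex unconstrained games coincides with the Nash equilibrium. You instead extract \emph{joint} convexity of each original $f_j$ from Gershgorin applied to the symmetric, strictly diagonally dominant $H^j$, push it through the nonnegative mixture to get joint convexity of each $g_i=\sum_j A_{ij}f_j$, and then convert the first-order condition $F^A(\boldsymbol{x}^*_A)=0$ directly into a global best response for every player, with no appeal to VI theory. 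Your version buys elementarity and robustness to vector-valued player blocks $x_i$ (where positivity of scalar diagonal entries would have to become positive definiteness of diagonal blocks of $J^A$); the paper's version is shorter and reuses the computation already done in Proposition~\ref{diag_dom}. Your closing observation correctly isolates the one conceptual trap\textemdash stability of the nonsymmetric $J^A$ alone does not certify Nash-ness, player-wise convexity of the symmetric Hessians does\textemdash and both proofs treat existence, uniqueness, and global stability of $\boldsymbol{x}^*_A$ at the same (asserted) level of detail, so there is no gap relative to the paper on that point.
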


\begin{proof}
Proposition~\ref{diag_dom} implies the dynamical system $\dot{\boldsymbol{x}}=-F^A(\boldsymbol{x})$ is globally stable at $\boldsymbol{x}^*_A$ for every fixed $A$. Proposition~\ref{diag_dom} also implies that each loss in the transformed game is convex. This is because $J^A_{ii}$ is the Hessian of each loss $i$ in the new game, and we showed these are positive.  Moreover, the unique fixed point of an unconstrained game with convex losses is the solution to a suitably defined variational inequality: VI($F^A, \mathbb{R}^d$). This, in turn, implies that the fixed point is the Nash equilibrium of the game~\citep{cavazzuti2002nash}.
\end{proof}

% \subsection{Nash Thesis}
% [THIS IS NOT A REAL QUOTE!]
% In his PhD dissertation, John Nash states ``\emph{if we assume that players are rational[1], know the full structure of the game[2], the game is played just once[3], and there is just one Nash equilibrium[4], then players will play according to that equilibrium[5]}''~\citep{nash1951non}, and so, in settings with unique Nash equilibria, the price of anarchy can be understood as the price of \emph{rationality}.\\

% [THIS IS THE REAL QUOTE]
% ``We now sketch another interpretation, one in which solutions play a major role, and which is applicable to a game played but once[3].
% We proceed by investigating the question: what would be a ``rational" prediction of the behavior to be expected of rational[ly][1] playing the game in question? By using the principles that a rational prediction should be unique [4], that the players should be able to deduce and make use of it[2], and that such knowledge on the part of each player of what to expect the others to do should not lead him to act out of conformity with the prediction[4a], one is led to the concept of a solution defined before[5]."

% \begin{enumerate}
%     \item price of anarchy game: 4a is broken - knowing everyone will play $x=1/n$ suggests one should deviate to get lower loss
%     \item Braess: 4 is broken (several permutations of optimal welfare solutions) implies 2 is broken, 4a is broken (shortcut is dominant)
%     \item tablechase and coindilemma break 4a, at a high level strategy, 2 is fine
% \end{enumerate}

\end{document}